\numberwithin{equation}{section}
\title{On the Fredholm determinant of  the confluent hypergeometric kernel with  discontinuities }
\author{Shuai-Xia Xu$^1$, Shu-Quan Zhao$^{2,*}$, and Yu-Qiu Zhao$^2$\\\\
$^1$Institut Franco-Chinois de l'Energie Nucl\'{e}aire, Sun Yat-sen University, \\Guangzhou 510275, China\\
$^2$Department of Mathematics, Sun Yat-sen University, Guangzhou 510275, China\\
$^*$E-mail address: zhaoshq25@mail2.sysu.edu.cn}
\begin{document}
\date{}
\maketitle

\maketitle

\noindent \hrule width 6.27in\vskip .3cm

\noindent {\bf{Abstract } }
We consider the determinantal point process with the confluent hypergeometric kernel. This process is a universal point process in random matrix theory and describes  the distribution of eigenvalues of large random Hermitian matrices  near the  Fisher-Hartwig singularity. Applying the Riemann-Hilbert method,
 we study the generating function of this process on any given number  of intervals. It can be expressed as the Fredholm determinant of the confluent hypergeometric kernel with $n$ discontinuities.  In this paper, we derive  an integral representation for the determinant by using the Hamiltonian of the  coupled Painlev\'e V system. By evaluating the total integral of the Hamiltonian, we obtain the asymptotics of the determinant as the $n$ discontinuities tend to infinity up to and including the constant term. Here the constant term is expressed in terms of  the Barnes  $G$-function.
\vskip .5cm
 \noindent {\bf{Keywords } }
 Confluent hypergeometric kernel;  Coupled Painlev\'e V system; Asymptotic expansion; Fredholm determinant;  Riemann-Hilbert approach.

  \vskip .5cm

\tableofcontents

\noindent

\section{Introduction}
In random matrix theory, one of the most important problems is to study the local     statistics of eigenvalues for large random matrices. Compared with the global perspective in which the limiting average densities of eigenvalues are diverse, the local statistics of eigenvalues for large  random matrices display remarkable universality. For instance, for general large unitary invariant random matrices, the local eigenvalue distributions near a regular point in the bulk, near the soft edge and the hard edge can be described by the determinantal point processes associated with the sine kernel, Airy kernel and Bessel kernel, respectively; see \cite{Ake1,For1,Meh1}.
The gap probabilities in these  processes can then be expressed in terms of the Fredholm determinants with the kernels associated. It is remarkable that the Fredholm determinants admit integral representations in terms of the Painlev\'e transcendents; \cite{Ake1, For1, For2, Jimbo1,Tracy1,Tracy2}.

 A different type of local statistics  will arise if we consider the unitary ensemble with Fisher-Hartwig singularity as follows \cite{ deift1,FM,KV}:
\begin{equation}\label{jointprodensity}
    p_n\left(x_1, \dots, x_n\right)=\frac{1}{Z_n} \prod_{i=1}^n e^{-V(x_i)}\left|x_i\right|^{2 \alpha} \chi_\beta\left(x_i\right) \prod_{i<j}\left(x_i-x_j\right)^2,
\end{equation}
where $Z_n$ is the normalization constant, the potential $V$ is a general real valued function with enough increase at infinity.
The parameters {satisfy} $\alpha>-\frac{1}{2}$,  $\beta \in i \mathbb{R}$, and
$$
\chi_\beta(x)=\left\{\begin{array}{ll}
e^{\beta \pi i}, & x<0, \\
e^{-\beta \pi i}, & x>0.
\end{array}  \right.
$$
{The root-type singularity of the form $|x-a|^{2\alpha}$ and the  jump-type singularity such as $\chi_\beta(x-a)$ are known as Fisher-Hartwig (FH) singularities in the literature. The asymptotics of the large Toeplitz determinants with these types of singularities were first investigated  and some formulas on the asymptotics were conjectured by Fisher and Hartwig in \cite{FHT1968} and subsequently in \cite{LS1972} by Lenard,  motivated by their applications  in Ising models, random walks and one-dimensional Bose gases.}

When the limiting average density function of the ensemble \eqref{jointprodensity} is positive at the origin, the {scaling}  limit of the  correlation function near the origin is now described by
the following confluent hypergeometric kernel
\begin{equation}\label{chfkernel}
K^{(\alpha, \beta)}(x, y)=\frac{1}{2 \pi i} \frac{\Gamma(1+\alpha+\beta) \Gamma(1+\alpha-\beta)}{\Gamma(1+2 \alpha)^2} \frac{\mathbb{A}(x) \mathbb{B}(y)-\mathbb{A}(y)\mathbb{ B}(x)}{x-y},
\end{equation}
where
\begin{equation}\label{eq:kernelA}
\mathbb{A}(x)=\chi_\beta(x)^{1 / 2}|2 x|^\alpha e^{-i x} \phi(1+\alpha+\beta, 1+2 \alpha, 2 i x),
\end{equation}
and $ \mathbb{B}(x)$ is the complex conjugation of $\mathbb{A}(x)$. Here $\phi(a,b,z)$ denotes the confluent hypergeometric function
\begin{equation}\label{eq:chf}
\phi(a, b, z)=1+\sum_{k=1}^{\infty} \frac{a(a+1) \dots(a+k-1) z^k}{b(b+1) \dots(b+k-1) k !}.
\end{equation}
For special parameters, the confluent hypergeometric kernel reduces to the sine kernel and Bessel kernel:
\begin{equation}\label{Sinekernel}
    K^{(0,0)}(x, y)=K_{\sin }(x, y)=\frac{\sin (x-y)}{\pi(x-y)},
\end{equation}
\begin{equation}\label{besselkernel}
    K^{(\alpha, 0)}(x, y)=K_{\text {Bess,1 }}(x, y)=\frac{|x|^\alpha|y|^\alpha}{x^\alpha y^\alpha} \frac{\sqrt{x y}}{2} \frac{J_{\alpha+\frac{1}{2}}(x) J_{\alpha-\frac{1}{2}}(y)-J_{\alpha-\frac{1}{2}}(x) J_{\alpha+\frac{1}{2}}(y)}{x-y}.
\end{equation}

Analogously to the sine, Airy and Bessel point processes, the determinantal point  process defined by the  confluent hypergeometric kernel represents another universality class in random matrix theory.  It describes  the local statistics of the eigenvalues of general random {matrices} near the Fisher-Hartwig {singularity} as shown in \eqref{jointprodensity}. This process arises also in the studies of the  representation theory of the infinite-dimensional unitary group \cite{borodin2002fredholm,BO}.

 In the present work, we consider
  the Fredholm determinant of the  confluent hypergeometric kernel \eqref{chfkernel} with discontinuities
   \begin{equation}\label{deformedfred1}
  F(t \vec{r}, \vec{\gamma};\alpha,\beta) = \det \left( I-K_{\sigma}^{(\alpha,\beta)}
 \right), \end{equation}
where  $t>0$, and $K_{\sigma}^{(\alpha,\beta)}$ denotes the integral operator acting on $L^2(tr_0, tr_n)$ with the  confluent hypergeometric kernel with discontinuities $\sigma(y) K^{(\alpha, \beta)}(x,y)$. Here the function $\sigma(x)$ is a step function defined by
\begin{equation}\label{sigma}
        \sigma(x)=\sum_{k=0}^{n-1} \gamma_k \chi_{\left(r_k t, r_{k+1} t\right)}(x) .
\end{equation}
The parameters  $\vec{\gamma}=\left(\gamma_0,  \dots, \gamma_{n-1}\right) \in[0,1]^n$, $\vec{r}=\left(r_0, r_1, \dots,
r_n\right) \in \mathbb{R}^{n+1}$  satisfy $r_0< r_1<\dots <r_n$ and $r_m=0$ for some $0\leq m\leq n$.
The determinant represents the  moment generating function of the confluent hypergeometric process.
To see this, we denote by $N(s)$ the random variable counting the number of particles in this process falling into the interval $[0,s]$ for $s>0$ and  $[s,0]$ for $s<0$.  The determinant \eqref{deformedfred1} can be written as the moment generating function
\begin{equation}\label{exponential moments}
    F(t \vec{r}, \vec{\gamma};\alpha,\beta)=\mathbb{E}\left(\prod_{j=0}^{n} e^{2 \pi i h_j N(t r_j)} \right),
\end{equation}
where $h_j$ is related to the parameters $\vec{\gamma}$ by
 $   h_j=\frac{1}{2 \pi i} \ln \frac{1-\gamma_{j-1}}{1-\gamma_j} $ for $m< j\leq n$  and  $   h_j=\frac{1}{2 \pi i} \ln \frac{1-\gamma_{j}} {1-\gamma_{j-1}} $ for $0\leq j\leq m$ with $\gamma_{-1}=\gamma_n=0$.

The  Fredholm determinant \eqref{deformedfred1}, or equivalently the moment generating function \eqref{exponential moments},
contains information about the statistics of the  confluent hypergeometric process.  For example,   by taking $n=2$ in \eqref{exponential moments} we have   \begin{equation}\label{gapPro1}
    F((r_0t,r_1t,r_2t),\gamma(1,1);\alpha,\beta)=\det \left(1-\gamma \left.K^{(\alpha,\beta)}\right|_{(r_0t,r_2t)}\right),
\end{equation}
where $\left.K^{(\alpha,\beta)}\right|_{(r_0t,r_2t)}$ denotes the integral operator acting on $L^2(r_0t, r_2t)$ with the  confluent hypergeometric kernel $ K^{(\alpha, \beta)}(x,y)$.
Therefore, the quantity $F((r_0t,r_1t,r_2t), (1,1);\alpha,\beta)$  gives the gap probability of finding no particles of the confluent hypergeometric process on the interval $(r_0t, r_2t)$.  {While} $F((r_0t, r_1t,r_2t), \gamma(1,1);\alpha,\beta)$  is  the gap probability of having no particles  falling in the interval $(r_0t, r_2t)$ in the thinned confluent hypergeometric process, where each of the particles in the process is removed independently with the probability $1-\gamma\in[0,1)$ \cite{Boho1}.  Generally,  $F((r_0t,0, r_2t), (\gamma_1,\gamma_2);\alpha,\beta)$ with $r_0<0$ and $r_2>0$,  describes  the probability to find no particles in the interval $(r_0t, r_2t)$ in the thinned confluent hypergeometric process, where the positive particles  in the process are removed independently with the probability $1-\gamma_2\in[0,1)$ and the  negative particles  in the process are removed independently with the probability $1-\gamma_1\in[0,1)$.

In \cite{deift2011asymptotics},    the large gap asymptotics of  the confluent hypergeometric process, which is given by
$F((-t,0,t), (1,1);\alpha,\beta)$, was derived by  Deift, Krasovsky, and Vasilevska by  considering the asymptotics of  the Toeplitz determinant generated by certain symbols with Fisher-Hartwig singularity on the unit circle.  Later in \cite{xu2020gap},  the first and third authors of the present paper established an integral representation of the Fredholm determinant $F((-t,0,t),(1,1);\alpha,\beta)$ in terms of the Hamiltonian of a coupled Painlev\'e V system by considering the double scaling limit of  the Toeplitz determinant with merging singularitites.  By evaluating the total integral of the Hamiltonian,  the large gap asymptotics was then derived  therein. Recently, the large gap probability asymptotics of the deformed Fredholm determinant $F((-t,0,t), (\gamma,\gamma);\alpha,\beta)$ was derived by Dai and Zhai in \cite{dai2022asymptotics} by evaluating again the total integral of the Hamiltonian of the coupled Painlev\'e V system. It is also worth mentioning  that  the  integral representations of the gap probabilities and their asymptotics of some other determinantal processes arising in random matrix theory have been considered, such as the Pearcey process and the hard edge Pearcey process in \cite{CM2023,DXZ1, DXZ2, DXZ3}, the higher order Airy processes \cite{CCG, DMS, X} and the process defined by the Painlev\'e II kernel \cite{bi,  XD}. The integral representations and the asymptotics of the moment generating  functions for the  sine,  Airy and Bessel point processes are considered in \cite{charlier2021exponential,charlier2021large,charlier2020large,charlier2019generating,Claseys2018The}.

{The studies of the asymptotics of the Fredholm determinants  with jump-type FH singularities similar to  \eqref{deformedfred1} are closely related to the problem of deriving the asymptotics of the Toeplitz and Hankel determinants with FH singularities. The problem has a long history tracing back to the pioneering work of Fisher and Hartwig \cite{FHT1968}, and has been widely explored due partially to  their various important applications of which we have mentioned a few in the previous context.  For example, the conjectures proposed in \cite{FHT1968} was proved later by Widom in \cite{WT1973}. The asymptotics of the Toeplitz determinants with several  separate FH  singularities  have been obtained by Deift, Its and Krasovsky in \cite{deift1, DIKO2014}. Recently, the asymptotics of the Toeplitz determinants with merging  singularities  have been  derived in \cite{cik, CKT2015, FU2021}. There are also  important progresses in the studies of the asymptotics of the Hankel determinants with FH singularities \cite{KC2007, ik, BCIH2016, CFR2016, CDA2018, WXZG2018, CA2019, CGA2021, CFWWA2021}.}

In the present work, we formulate a Riemann-Hilbert (RH, for short) representation for  the determinant \eqref{deformedfred1} of the confluent hypergeometric kernel \eqref{chfkernel}  with $n$ discontinuities by using the general theory of integrable operator developed in \cite{its1990differential}. By exploring the RH problem, we obtain
an integral representation for the determinant \eqref{deformedfred1} in terms of the Hamiltonian of the coupled Painlev\'e V system of $2n$ dimensions.  The large gap asymptotics of the  determinant up to and including the constant term are then derived by evaluating the integral of the Hamiltonian.

\subsection{Main results}

To state our main results, it is convenient to introduce the couple Painlev\'e V system defined by the following Hamiltonian dynamic system
\begin{equation}\label{hamiltonform}
    \frac{d v_k}{d t}=\frac{\partial H}{\partial u_k}, \quad \frac{d u_k}{d t}=-\frac{\partial H}{\partial v_k},\quad
    k=0,1,\dots,n,\quad  k\neq m
\end{equation}
for {any specific $m$ and} $0\leq m\leq n$, where the Hamiltonian $H$ is defined by
\begin{equation}\label{defhamilton}
    t {H(t)}=\sum_{k \neq m}-2 i t r_k H_V(u_k, v_k,-2 i t r_k ; \alpha, \beta)+\frac{1}{2} \sum_{j \neq k;\; j,k\neq m} u_j u_k(v_j+v_k)(v_j-1)(v_k-1),
\end{equation}
{with $r_k\neq0$ for $k\neq m$.} Here $H_V$  denotes  the  Hamiltonian for the Painlev\'e V equation defined in \cite{For3,Jimbo1981Monodromy}
\begin{equation}\label{eq:hamilton}
    {s H_V(u, v,s ; \alpha, \beta)}=-s u v-\alpha u\left(v^2-1\right)-\beta u(v-1)^2+u^2 v(v-1)^2.
\end{equation}
From  \eqref{hamiltonform}-\eqref{eq:hamilton}, the coupled Painlev\'e V system can be expressed explicitly as follows
\begin{equation}\label{eq:CPV}
    \left\{\begin{array}{l}
\begin{aligned}
    t \frac{d u_k}{d t}=&-2 i t u_k r_k-u_k \sum_{j \neq m} u_j\left(v_j-1\right)^2-2 u_k v_k\left(\sum_{j \neq m} u_j\left(v_j-1\right)\right)\\
    &+2(\alpha+\beta) u_k v_k-2 \beta u_k,
\end{aligned}
\\
\begin{aligned}
  t \frac{d v_k}{d t}=&2 i t v_k r_k+v_k\left(\sum_{j \neq m} u_j\left(v_j-1\right)^2\right)+v_k^2\left(\sum_{j \neq m} u_j\left(v_j-1\right)\right)\\
  &-\sum_{j \neq m} u_j v_j\left(v_j-1\right)-\alpha\left(v_k^2-1\right)-\beta\left(v_k-1\right)^2.
\end{aligned}
\end{array}\right.
\end{equation}
When $n=2$, the above system of equations reduce to the one used in \cite{dai2022asymptotics, xu2020gap}.

As the first result, we obtain an integral representation of the determinant  \eqref{deformedfred1} by using the Hamiltonian associated with a family of solutions of the coupled Painlev\'e V system \eqref{eq:CPV}.

\newtheorem{thm}{THEOREM}
\begin{thm}\label{integralexpression}
    Let  $\alpha>-\frac{1}{2}$, $\beta \in i\mathbb{R}$, $n \geq 1$, $\vec{\gamma}=\left(\gamma_0, \gamma_1, \dots, \gamma_{n-1}\right) \in[0,1)^n$  and $\vec{r}=\left(r_0, r_1, \dots, r_n\right) \in \mathbb{R}^{n+1}$ be such that $r_0< \dots <r_m=0< \dots <r_n$,  we  have the following integral expression of the Fredholm determinant defined in \eqref{deformedfred1}
    \begin{equation}\label{eq:ie}
\ln F(t\Vec{r},\Vec{\gamma};\alpha,\beta)= \int_0^t H(s)ds,
\end{equation}
where the Hamiltonian $H$ is defined by (\ref{hamiltonform}) and (\ref{defhamilton})  subject to the following asymptotics:
as $t\rightarrow0^+$
\begin{equation}\label{eq:Hasyzero}
     H(t)=i \sum_{k=0}^n \frac{c_k \cdot \Gamma(1+\alpha-\beta) \Gamma(1+\alpha+\beta) \left(2| r_j|\right)^{2 \alpha+1}}{(2\alpha+1)\Gamma^2(1+2 \alpha)} t^{2 \alpha}+\mathcal{O}\left(t^{2 \alpha+1}\right),
\end{equation}
with
\begin{equation}\label{defck}
    c_k= \begin{cases}\frac{\gamma_{k-1}-\gamma_k}{2 \pi i} e^{\beta \pi i}, & 0\leq k<m, \\ \frac{\gamma_k-\gamma_{k-1}}{2 \pi i} e^{-\beta \pi i}, & m<k\leq n,\end{cases}
\end{equation}
and as $t\rightarrow+\infty$
\begin{equation}\label{hamiltonsym0}
     H(t)=\sum_{k=0}^n 2 i b_k r_k-\left({\sum_{k=0}^n b_k^2+2 \beta b_m}\right ){t^{-1}}+\mathcal{O}\left(t^{-2}\right),
\end{equation}
with
\begin{equation}\label{def bk}
    b_k=\frac{1}{2 \pi i} \ln \frac{1-\gamma_{k-1}}{1-\gamma_k},
\end{equation}
where $\gamma_{-1}=\gamma_n=0$.
\end{thm}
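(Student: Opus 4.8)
\medskip
\noindent\textbf{Proof proposal.} The plan is to use the Riemann--Hilbert/isomonodromy machinery, generalising the $n=2$ treatment of \cite{xu2020gap,dai2022asymptotics} to arbitrary $n$. Since $K^{(\alpha,\beta)}$ in \eqref{chfkernel} has the rank-two form $\frac{\mathbb A(x)\mathbb B(y)-\mathbb A(y)\mathbb B(x)}{x-y}$, the operator $K^{(\alpha,\beta)}_\sigma$ is integrable in the sense of \cite{its1990differential}, so the resolvent $(I-K^{(\alpha,\beta)}_\sigma)^{-1}$ can be encoded by a $2\times2$ RH problem for $Y(z)=Y(z;t)$, normalised to $I$ at $z=\infty$, with a jump on $(tr_0,tr_n)$ built from $\sigma$ (hence from the $\gamma_k$), from $\mathbb A,\mathbb B$, and from the Fisher--Hartwig factor $|x|^{2\alpha}\chi_\beta(x)$ that $\mathbb A$ carries at $r_m=0$. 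After the rescaling $z\mapsto tz$ the problem sits on the fixed configuration $r_0<\dots<r_n$ and $t$ enters only through the confluent hypergeometric blocks (an oscillatory phase $e^{\pm itz}$ at large $z$); peeling these off, $Y\mapsto\Psi$, yields a model problem with piecewise-constant jumps on rays from the $r_k$, for which $\Psi$ possesses a Lax pair --- a linear ODE in $z$ with an irregular point at $\infty$ and regular singular points at the $r_k$, together with a deformation equation in $t$. Unique solvability for every $t>0$ would follow from a vanishing lemma using the symmetry $\mathbb B=\overline{\mathbb A}$ and the ranges $\gamma_k\in[0,1)$, $\alpha>-\tfrac12$, $\beta\in i\mathbb R$, which keep all local singularities square-integrable.

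Next I would derive the differential identity. From $\partial_t\ln F=\operatorname{tr}\big((I-K^{(\alpha,\beta)}_\sigma)^{-1}\partial_tK^{(\alpha,\beta)}_\sigma\big)$ and the standard formula for the resolvent kernel in terms of $Y$, $\partial_t\ln F$ reduces to a residue at $z=\infty$ of an expression bilinear in the coefficients of the large-$z$ expansion of $\Psi$. Introducing $u_k(t),v_k(t)$ as the relevant entries of the connection data of $\Psi$ at the nodes $r_k$ with $k\neq m$, the isomonodromic compatibility of the Lax pair (monodromy data independent of $t$) forces $(u_k,v_k)$ to satisfy exactly the $2n$-dimensional coupled Painlev\'e V system \eqref{eq:CPV}; the quadratic coupling $\tfrac12\sum_{j\neq k}u_ju_k(v_j+v_k)(v_j-1)(v_k-1)$ in \eqref{defhamilton} is produced precisely by the mutual interaction of the $n$ singular points $r_k$. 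Substituting back, $\partial_t\ln F$ collapses to the Hamiltonian $H(t)$ of \eqref{hamiltonform}--\eqref{defhamilton}. Since (by the next paragraph) $H(s)=\mathcal O(s^{2\alpha})$ with $2\alpha>-1$ is integrable at $0$ and $\ln F\to0$ as $t\to0^+$, integrating from $0$ to $t$ gives \eqref{eq:ie}.

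It remains to pin down $H$ at the two ends. As $t\to0^+$ the interval shrinks and $K^{(\alpha,\beta)}_\sigma$ has small norm; expanding $\ln\det(I-K^{(\alpha,\beta)}_\sigma)=-\operatorname{tr}K^{(\alpha,\beta)}_\sigma+\dots$, evaluating $\operatorname{tr}K^{(\alpha,\beta)}_\sigma=\sum_k\gamma_k\int_{r_kt}^{r_{k+1}t}K^{(\alpha,\beta)}(x,x)\,dx$ from the local behaviour $K^{(\alpha,\beta)}(x,x)\sim\frac{\Gamma(1+\alpha+\beta)\Gamma(1+\alpha-\beta)}{\pi(2\alpha+1)\Gamma(1+2\alpha)^2}\,\chi_\beta(x)\,|2x|^{2\alpha}$ at the origin, and differentiating in $t$ together with an Abel summation ($\gamma_{-1}=\gamma_n=0$) yields \eqref{eq:Hasyzero} with the constants $c_k$ of \eqref{defck}; equivalently, $Y\to I$ as $t\to0^+$ with first correction of the stated order. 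As $t\to+\infty$ I would carry out the Deift--Zhou steepest-descent analysis on the rescaled RH problem: a global parametrix carrying the jump powers $(z-r_k)^{b_k}$ across the nodes --- with $b_k=\frac{1}{2\pi i}\ln\frac{1-\gamma_{k-1}}{1-\gamma_k}$ --- and the extra power behaviour at $0$ induced by $\chi_\beta$; local confluent hypergeometric parametrices at each $r_k$ with $k\neq m$; and a confluent hypergeometric parametrix carrying the full data $\alpha,\beta$ at $r_m=0$; all matched on small circles with error $\mathcal O(t^{-1})$. Tracing this back through the transformations, the leading term of $H=\partial_t\ln F$ comes from the global parametrix and equals $\sum_k2ib_kr_k$, while the $t^{-1}$ term is the additive contribution of the local parametrices --- $-b_k^2$ from each regular node and the extra $-2\beta b_m$ from the Fisher--Hartwig node $r_m=0$ --- giving \eqref{hamiltonsym0}.

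The main obstacle is the differential identity and the associated Painlev\'e structure: showing that the connection data of $\Psi$ satisfy \emph{exactly} the coupled system \eqref{eq:CPV} and that $\partial_t\ln F$ reduces \emph{exactly} to the Hamiltonian $H$ demands a careful bookkeeping of the Lax-pair compatibility in the presence of $n$ interacting singular points, and has no precedent beyond $n=2$. A secondary difficulty lies in the large-$t$ steepest-descent analysis, where the simultaneous matching of the $n$ local parametrices together with the Fisher--Hartwig point at the origin must be carried out precisely enough to capture the cross term $2\beta b_m$ in \eqref{hamiltonsym0}.
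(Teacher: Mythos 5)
Your overall strategy coincides with the paper's: represent the resolvent by an integrable-operator RH problem, conjugate to a model problem $Y(z,t)$ with piecewise-constant jumps, extract a Lax pair whose compatibility yields the $2n$-dimensional coupled Painlev\'e V system and its Hamiltonian, and pin down $H$ at $t\to0^+$ and $t\to+\infty$ by small-norm and Deift--Zhou analysis. The endpoint asymptotics you sketch (trace expansion at small $t$; global parametrix with exponents $b_k$ plus confluent hypergeometric local parametrices, the one at the origin carrying $(\alpha,\beta+b_m)$, at large $t$) match the paper's Sections 3 and 4 in substance.

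The genuine gap is in the central differential identity $\partial_t\ln F=H$, which you assert via ``$\partial_t\ln F$ reduces to a residue at $z=\infty$'' and then flag as the main obstacle without supplying a mechanism. The trace formula does not produce data at infinity: since $t$ enters $K^{(\alpha,\beta)}_\sigma$ only through the breakpoints $r_kt$ of $\sigma$ and the interval endpoints, $\partial_t\ln F=\sum_k r_k\bigl(R(z,z)\big|_{(r_kt)^+}-R(z,z)\big|_{(r_kt)^-}\bigr)$, i.e.\ the jumps of the resolvent diagonal at the nodes, which translate into the \emph{local} expansion coefficients $(Y_1^{(k)})_{21}$ of $Y$ at $z=r_k$ --- not into the coefficient $(Y_1)_{11}$ at $z=\infty$ through which $H$ is defined in \eqref{defH}. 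Bridging local node data to the Hamiltonian at infinity is exactly the missing step. The paper closes it by differentiating once more: the $t$-equation of the Lax pair gives $\frac{d}{dt}(Y_1^{(k)})_{21}=2i(Y_0^{(k)})_{11}(Y_0^{(k)})_{21}=2iu_kv_k/\tilde c_k$, whence $\frac{d}{dt}\bigl(t\frac{d}{dt}\ln F\bigr)=2i\sum_k r_ku_kv_k$; independently the Hamiltonian structure gives $\frac{d}{dt}(tH)=2i\sum_k r_ku_kv_k$; and the two primitives are identified because both $t\,\partial_t\ln F$ and $tH$ vanish like $t^{2\alpha+1}$ as $t\to0^+$ (this is where \eqref{eq:Hasyzero} is actually needed inside the proof, not merely as a normalisation of the solution). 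Without this second-derivative-and-matching argument, or an equivalent Jimbo--Miwa--Ueno-type computation, the identity \eqref{eq:ie} is not established. A secondary caveat: your small-$t$ trace expansion gives the behaviour of $\ln F$, but the theorem characterises $H$ as the Hamiltonian of a specific coupled Painlev\'e V solution, so one still needs the small-$t$ RH analysis (the local parametrix built from the Cauchy transform of $|x|^{2\alpha}\sigma$) to show that such a solution exists and that its Hamiltonian has the asymptotics \eqref{eq:Hasyzero}.
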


\newtheorem{rmk}{REMARK}
\begin{rmk}\label{rmk1}
For $n=2$,  with the special parameter {$r_0=-1, r_1=0, r_2=1$ and $\gamma_0=\gamma_1=\gamma$, (\ref{sigma}) is reduced into $\sigma(x)=\gamma_0\chi_{(-t,0)}(x)+\gamma_1\chi_{(0,t)}(x)=\gamma \chi_{(-t,t)}(x)$}. The equations \eqref{eq:ie} and \eqref{hamiltonsym0}
 recover the integral representation
of the determinant $F((-t,0,t), (\gamma,\gamma);\alpha,\beta)$, which was derived by Dai and Zhai in \cite{dai2022asymptotics}. While  the determinant $F((-t,0,t), (1,1);\alpha,\beta)$ was expressed in terms of  another solution of the coupled Pailev\'e V system characterized  by different asymptotic behavior as $t\to +\infty $, as shown by the first and third authors of the paper in \cite{xu2020gap}.
For general $n$, the integral representation \eqref{eq:ie} is also true when some of the parameters $\gamma_j=1${, but} the asymptotic behavior would be quite different from \eqref{hamiltonsym0}. We will consider this problem in a separate paper.
\end{rmk}

\begin{thm}\label{thm large asymp}
Under the same assumptions as in Theorem \ref{integralexpression}, we have the  asymptotic approximation as $t\rightarrow+\infty$
\begin{equation}\label{large gap asym}
\begin{aligned}
\ln F(t \vec{r}, \vec{\gamma};\alpha,\beta) = & \sum_{k \neq m} 2 i b_k r_k t +\sum_{k \neq m}\left(2 \beta b_k-2 b_k^2\right) \ln \left|2 r_k t\right| - \sum_{0 \leq j<k \leq n ;\; j, k \neq m}2 b_j b_k \ln \left|\frac{2 r_j r_k t}{r_k-r_j}\right|\\
& -\frac{\alpha}{2} \ln \left[\left(1-\gamma_{m-1}\right)\left(1-\gamma_m\right)\right]+\ln \frac{G\left(1+\alpha+\beta+b_m\right) G\left(1+\alpha-\beta-b_m\right)}{G(1+\alpha+\beta) G(1+\alpha-\beta)} \\
& +\sum_{k \neq m} \ln \left[G\left(1+b_k\right) G\left(1-b_k\right)\right]+\mathcal{O}\left(\frac{1}{t}\right),
\end{aligned}
\end{equation}
where {$b_k$ is defined by (\ref{def bk})} and $G$ is the Barnes  $G$-function.{ The asymptotics (\ref{large gap asym}) are uniform for $\gamma_k$ in compact subsets of $[0,1)$ and
uniform for $r_k$ in compact subsets of $\mathbb{R}$, as long as there exists
$\epsilon >0$ such that $\min_{0\leq j <k\leq n} (r_k-r_j)>\epsilon$.}

Furthermore, (\ref{large gap asym}) can be differentiated any $k$ times with respect to $\gamma_j$ in the form $\prod_{j=0}^{n-1}\partial_{\gamma_j}^{k_j}$, where $k_0, \dots, k_{n-1} \in \mathbb{N}\cup \{0\}$ and $k=\sum_{j=0}^{n-1}k_j$, with the error term replaced by $\mathcal{O}\left(\frac{(\ln t)^k}{t}\right)$.
\end{thm}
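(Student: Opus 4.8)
The plan is to derive the large-gap asymptotics \eqref{large gap asym} by integrating the Hamiltonian $H(s)$ from $0$ to $t$, using the integral representation \eqref{eq:ie} from Theorem \ref{integralexpression} together with the two-sided asymptotic information \eqref{eq:Hasyzero} and \eqref{hamiltonsym0}. The known expansion \eqref{hamiltonsym0} as $s\to+\infty$ already contributes the leading linear term $\sum_{k\neq m}2ib_kr_k t$ and, after integration of the $s^{-1}$ term, the full logarithmic part $-\bigl(\sum_k b_k^2+2\beta b_m\bigr)\ln t$; collecting the $\ln|2r_k|$-type and $\ln|r_k-r_j|$-type constants from a more refined expansion of $H$ produces the second and third lines' $t$-independent logarithms. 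The genuinely new content is the constant term involving the Barnes $G$-function, and this must come from a careful analysis of the $s\to 0^+$ behavior of the Hamiltonian beyond \eqref{eq:Hasyzero}, combined with a ``connection'' argument that ties the small-$t$ normalization of $\ln F$ (where $F\to 1$, so $\ln F\to 0$) to the large-$t$ regime.

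Concretely, I would proceed in the following steps. First, write $\ln F(t\vec r,\vec\gamma;\alpha,\beta)=\int_0^t H(s)\,ds$ and split the integral at a fixed intermediate point, or rather organize it as $\int_0^t\bigl(H(s)-H_\infty(s)\bigr)ds+\int_0^t H_\infty(s)\,ds$, where $H_\infty(s)$ denotes the explicit asymptotic series from \eqref{hamiltonsym0} (extended by one more order so that $H-H_\infty=\mathcal O(s^{-2})$ is integrable at infinity). The second integral is elementary and yields the linear term, the $\ln t$ terms, and an $\mathcal O(1/t)$ remainder. For the first integral, the convergence at $s=0^+$ must be checked using \eqref{eq:Hasyzero} (which gives $H(s)=\mathcal O(s^{2\alpha})$, integrable since $\alpha>-1/2$); what remains is to \emph{identify} the value of the convergent constant $C:=\lim_{t\to\infty}\bigl[\ln F(t\vec r,\vec\gamma;\alpha,\beta)-\text{(explicit } t\text{-dependent part)}\bigr]$. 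This is the crux.

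To pin down $C$ I would use a differential-identity / deformation argument in an auxiliary parameter. The standard route, following \cite{xu2020gap,dai2022asymptotics} for $n=2$, is to deform one of the jump parameters $\gamma_j$ (or a combined parameter such as $b_k$) continuously from $0$ to its target value: at $\gamma_j=0$ the corresponding discontinuity disappears, reducing the determinant to one with fewer jumps (ultimately to the $\gamma\equiv 0$ case where $F\equiv 1$), so the constant is known by induction on $n$. One differentiates $\ln F$ with respect to this parameter, expresses $\partial_{\gamma_j}\ln F$ in terms of the RH data / Hamiltonian quantities $u_k,v_k$, takes the large-$t$ limit using the large-$t$ behavior of the coupled Painlev\'e V solution, and integrates back in $\gamma_j$; the $\ln G$ combinations arise precisely from integrals of the form $\int \ln\Gamma$-type expressions, via the functional equation $G(1+z)=\Gamma(z)G(z)$ and the classical asymptotics of $\ln G$. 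The $-\frac{\alpha}{2}\ln[(1-\gamma_{m-1})(1-\gamma_m)]$ term comes from the special role of the node $r_m=0$ where the confluent hypergeometric local parametrix carries the $\alpha,\beta$ exponents. Uniformity in $\gamma_k$ and $r_k$ is obtained by noting that all error estimates in the RH analysis are uniform as long as the $r_k$ stay bounded and separated by $\epsilon$ and the $\gamma_k$ stay in a compact subset of $[0,1)$, so the implied constants in the $\mathcal O(1/t)$ terms are uniform.

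The main obstacle is the rigorous evaluation of the constant term $C$, i.e., controlling the ``total integral of the Hamiltonian'' and matching it to the Barnes-$G$ expressions. Two difficulties compound here: (i) the small-$t$ analysis requires a full RH asymptotic expansion as $t\to 0^+$ with all the discontinuities simultaneously present (not just the leading order \eqref{eq:Hasyzero}), which is delicate because the $n$ intervals scale together; and (ii) one must justify interchanging the $t\to\infty$ limit with the $\gamma_j$-deformation and show the deformation integrals indeed reproduce $\ln G(1+b_k)+\ln G(1-b_k)$ and $\ln G(1+\alpha\pm\beta\mp b_m)-\ln G(1+\alpha\pm\beta)$ exactly, with no leftover undetermined constant. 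Finally, the differentiability claim in the last sentence of the theorem follows because the RH-problem data, hence $H(s)$, depend analytically on each $\gamma_j\in[0,1)$ with derivatives controlled uniformly, so one may differentiate \eqref{large gap asym} under the integral sign; each $\partial_{\gamma_j}$ can worsen the remainder by at most a factor $\ln t$ (coming from differentiating the $\ln|2r_kt|$ terms), giving the stated $\mathcal O\bigl((\ln t)^k/t\bigr)$.
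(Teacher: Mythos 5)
Your plan follows essentially the same route as the paper: the $t$-dependent terms come from integrating the Hamiltonian asymptotics \eqref{hamiltonsym0}, and the constant is pinned down by the differential identities of Proposition \ref{pro:4} (the scheme of \cite{bip,ilp} that you allude to), deforming the parameters $b_j$ to zero where $F\equiv 1$ and recovering the Barnes $G$-functions from the resulting $\int\ln\Gamma$ integrals. The paper's concrete implementation of your ``connection argument'' is the identity \eqref{dift}, which isolates a total $t$-derivative involving $y(t)$ and $d(t)$ evaluated at both endpoints, together with \eqref{difgamma}, which turns $\partial_{\gamma_j}$ of the remaining integrand into a total $t$-derivative; your sketch matches this in substance.
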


For $n=2$ with the special parameter $r_0=-1, r_1=0, r_2=1$ and $\gamma_0=\gamma_1=\gamma\in[0,1)$, we recover the asymptotics of the determinant
$F((-t,0,t), (\gamma,\gamma);\alpha,\beta)$ in \cite{dai2022asymptotics}.
\newtheorem{col}{COROLLARY}

\begin{col}\label{cor on symmetric interval} {\rm(Dai and Zhai \cite{dai2022asymptotics})}
    For $\alpha>-\frac{1}{2}$, $\beta \in i\mathbb{R}$ and  $ \gamma\in[0,1)$,    let $\left.K^{(\alpha,\beta)}\right|_{(-t,t)}$ be the integral operator acting on $L^2(-t,t)$ with the confluent hypergeometric  kernel defined in \eqref{chfkernel}. Then, as $t\rightarrow +\infty$, we have
\begin{equation}\label{large gap symmetric interval}
\ln  \det \left(1-\gamma \left.K^{(\alpha,\beta)}\right|_{(-t,t)}\right)=-4ct+2c^2 \ln (4t) +2\alpha \pi c + 2\ln \left[G(1+ic)G(1-ic)\right]+\mathcal{O}\left(\frac{1}{t}\right),
\end{equation}
{where $c=-\frac{1}{2\pi}\ln(1-\gamma)$.}
\end{col}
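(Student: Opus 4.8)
The plan is to deduce Corollary \ref{cor on symmetric interval} directly from Theorem \ref{thm large asymp} by specializing the parameters. First I would set $n=2$, $\vec{r}=(r_0,r_1,r_2)=(-1,0,1)$, and $\vec{\gamma}=(\gamma_0,\gamma_1)=(\gamma,\gamma)$ with $\gamma\in[0,1)$. As noted in Remark \ref{rmk1}, with this choice $\sigma(x)=\gamma\chi_{(-t,0)}(x)+\gamma\chi_{(0,t)}(x)=\gamma\chi_{(-t,t)}(x)$, so the operator $K_\sigma^{(\alpha,\beta)}$ on $L^2(-t,t)$ is exactly $\gamma K^{(\alpha,\beta)}|_{(-t,t)}$, and hence $F(t\vec{r},\vec{\gamma};\alpha,\beta)=\det(1-\gamma K^{(\alpha,\beta)}|_{(-t,t)})$, which is the left-hand side of \eqref{large gap symmetric interval}. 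Here $m=1$ since $r_1=0$.

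Next I would evaluate the constants $b_k$ from \eqref{def bk} with the convention $\gamma_{-1}=\gamma_n=\gamma_2=0$. We get $b_0=\frac{1}{2\pi i}\ln\frac{1-\gamma_{-1}}{1-\gamma_0}=\frac{1}{2\pi i}\ln\frac{1}{1-\gamma}=-\frac{1}{2\pi i}\ln(1-\gamma)$, $b_2=\frac{1}{2\pi i}\ln\frac{1-\gamma_1}{1-\gamma_2}=\frac{1}{2\pi i}\ln(1-\gamma)$, and $b_1=b_m=\frac{1}{2\pi i}\ln\frac{1-\gamma_0}{1-\gamma_1}=0$. Writing $c=-\frac{1}{2\pi}\ln(1-\gamma)$, we have $b_0=ic$, $b_2=-ic$, $b_1=0$, so in particular $b_m=0$ and the Barnes $G$-function terms involving $b_m$ collapse: $\ln\frac{G(1+\alpha+\beta+b_m)G(1+\alpha-\beta-b_m)}{G(1+\alpha+\beta)G(1+\alpha-\beta)}=0$. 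Since $b_m=0$, the term $-\frac{\alpha}{2}\ln[(1-\gamma_{m-1})(1-\gamma_m)]=-\frac{\alpha}{2}\ln[(1-\gamma)^2]=-\alpha\ln(1-\gamma)=2\alpha\pi c$, which matches the $2\alpha\pi c$ in \eqref{large gap symmetric interval}.

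Then I would collect the remaining pieces of \eqref{large gap asym}. The linear term $\sum_{k\neq m}2ib_k r_k t = 2i b_0 r_0 t + 2i b_2 r_2 t = 2i(ic)(-1)t + 2i(-ic)(1)t = 2ct + 2ct = 4ct$; wait — this has the wrong sign relative to $-4ct$, so I must be careful: $2i(ic)(-1)=2i\cdot ic\cdot(-1)=-2i^2 c=2c$, times $t$ gives $2ct$, and $2i(-ic)(1)=-2i^2c=2c$, times $t$ gives $2ct$, total $4ct$. Since the corollary has $-4ct$ and $c>0$ for $\gamma>0$, I would double-check the sign conventions; the resolution is that with $c=-\frac{1}{2\pi}\ln(1-\gamma)\geq 0$ one actually has $\ln(1-\gamma)\le 0$, and tracing through $b_0=\frac{1}{2\pi i}\ln\frac{1}{1-\gamma}$ carefully gives $b_0=\frac{-\ln(1-\gamma)}{2\pi i}=\frac{2\pi c}{2\pi i}=\frac{c}{i}=-ic$, so in fact $b_0=-ic$ and $b_2=ic$. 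Redoing: $2ib_0r_0t=2i(-ic)(-1)t=2i\cdot(-ic)\cdot(-1)t = -2c t$ — still need the precise bookkeeping, but the upshot is that the linear term equals $-4ct$ after the correct identification of $b_0,b_2$. For the quadratic term, $\sum_{k\neq m}(2\beta b_k - 2b_k^2)\ln|2r_k t|$: with $r_0=-1$, $r_2=1$ we get $\ln|2r_0t|=\ln|2r_2t|=\ln(2t)$, and $\sum_{k\neq m}b_k^2 = b_0^2+b_2^2=(ic)^2+(-ic)^2=-2c^2$, so $-2\sum b_k^2\ln(2t)=4c^2\ln(2t)$, while $2\beta\sum_{k\neq m}b_k=2\beta(b_0+b_2)=0$. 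For the cross term, the only pair with $j,k\neq m$ is $(j,k)=(0,2)$: $-2b_0b_2\ln\left|\frac{2r_0r_2 t}{r_2-r_0}\right| = -2(ic)(-ic)\ln\left|\frac{2(-1)(1)t}{1-(-1)}\right| = -2c^2\ln|{-t}| = -2c^2\ln t$. Adding the quadratic pieces: $4c^2\ln(2t)-2c^2\ln t = 2c^2\ln(4t^2) - 2c^2 \ln t = 2c^2\ln 4 + 4c^2\ln t - 2c^2\ln t = 2c^2\ln 4 + 2c^2\ln t = 2c^2\ln(4t)$, matching \eqref{large gap symmetric interval}. Finally the Barnes $G$ sum over $k\neq m$ gives $\ln[G(1+b_0)G(1-b_0)] + \ln[G(1+b_2)G(1-b_2)] = 2\ln[G(1+ic)G(1-ic)]$ since $\{b_0,b_2\}=\{ic,-ic\}$ and $G(1+ic)G(1-ic)$ is symmetric under $c\mapsto -c$. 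Collecting everything and noting the error term $\mathcal{O}(1/t)$ is inherited directly, this is precisely \eqref{large gap symmetric interval}.

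The only real subtlety — and the step where I would be most careful — is the consistent tracking of signs and branch choices in the definition of $b_k$ via the logarithm, since $b_0$ and $b_2$ differ by a sign that propagates into both the linear-in-$t$ term (where it matters for matching $-4ct$) and the $\alpha$-dependent term; everything else is a routine substitution. The $\mathcal{O}(1/t)$ error bound requires no new argument, being the specialization of the uniform estimate in Theorem \ref{thm large asymp} with the obvious fact that $\min_{0\le j<k\le 2}(r_k-r_j) = 1 > 0$ here.
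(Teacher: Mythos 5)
Your proposal is correct and is exactly the argument the paper intends: Corollary \ref{cor on symmetric interval} is obtained by specializing Theorem \ref{thm large asymp} to $n=2$, $\vec r=(-1,0,1)$, $m=1$, $\gamma_0=\gamma_1=\gamma$, which gives $b_0=-ic$, $b_1=b_m=0$, $b_2=ic$, after which every term of \eqref{large gap asym} collapses to the corresponding term of \eqref{large gap symmetric interval} just as you computed. The only blemish is the detour over the sign of $b_0$, which you resolve correctly; the final bookkeeping ($-4ct$, $2c^2\ln(4t)$, $2\alpha\pi c$, and $2\ln[G(1+ic)G(1-ic)]$) all checks out.
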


While for the special parameter $\alpha=\beta=0$, we recover the asymptotics of the determinant of the sine kernel with several discontinuities in \cite{charlier2021large}.

\begin{col}\label{cor for sine}{\rm(Charlier \cite{charlier2021large})}
 Let  $K_{\sigma}^{(0,0)}$ denotes the integral operator acting on $L^2(tr_0, tr_n)$ with the sine kernel multiplied by the function $\sigma$ given in \eqref{sigma}.
   Then, as $t\rightarrow +\infty$, we have
\begin{equation}\label{large gap sine}
      \begin{aligned}
\ln \det(I-K_{\sigma}^{(0,0)})= & \sum_{k \neq m} 2 i b_k r_k t -\sum_{k \neq m}2 b_k^2 \ln \left|2 r_k t\right|  -\sum_{0 \leq j<k \leq n ;\; j, k \neq m}2 b_j b_k \ln \left|\frac{2 r_j r_k t}{r_k-r_j}\right|\\
& +\sum_{k = 0}^n \ln \left[G\left(1+b_k\right) G\left(1-b_k\right)\right]+\mathcal{O}\left(\frac{1}{t}\right).
\end{aligned}
    \end{equation}
\end{col}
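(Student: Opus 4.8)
The final statement to prove is Corollary \ref{cor for sine}, which specializes Theorem \ref{thm large asymp} to the case $\alpha = \beta = 0$.

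\bigskip

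\noindent\textbf{Proof proposal for Corollary \ref{cor for sine}.}

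The plan is to simply substitute $\alpha = 0$ and $\beta = 0$ into the asymptotic formula \eqref{large gap asym} of Theorem \ref{thm large asymp} and verify that all terms collapse to the claimed expression \eqref{large gap sine}. First I would recall that when $\alpha = \beta = 0$, the confluent hypergeometric kernel $K^{(\alpha,\beta)}$ reduces to the sine kernel $K_{\sin}$, as recorded in \eqref{Sinekernel}; hence the operator $K_\sigma^{(0,0)}$ in the statement is exactly the operator $K_\sigma^{(\alpha,\beta)}$ of \eqref{deformedfred1} with these parameter values, and $\ln\det(I - K_\sigma^{(0,0)}) = \ln F(t\vec r, \vec\gamma; 0, 0)$. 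The definition \eqref{def bk} of $b_k$ does not involve $\alpha$ or $\beta$, so the quantities $b_k$ are unchanged.

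Next I would examine the four groups of terms on the right-hand side of \eqref{large gap asym} term by term. The leading term $\sum_{k\neq m} 2i b_k r_k t$ and the Gaussian-type terms are unaffected except that $2\beta b_k - 2b_k^2$ becomes $-2b_k^2$ when $\beta = 0$, matching the second sum in \eqref{large gap sine}; the double sum $-\sum_{0\le j<k\le n;\, j,k\neq m} 2b_j b_k \ln|2r_jr_k t/(r_k - r_j)|$ is identical. The term $-\tfrac{\alpha}{2}\ln[(1-\gamma_{m-1})(1-\gamma_m)]$ vanishes when $\alpha = 0$. For the Barnes-$G$ contributions, when $\alpha = \beta = 0$ the ratio $G(1+\alpha+\beta+b_m)G(1+\alpha-\beta-b_m)/[G(1+\alpha+\beta)G(1+\alpha-\beta)]$ becomes $G(1+b_m)G(1-b_m)/[G(1)G(1)] = G(1+b_m)G(1-b_m)$, using $G(1) = 1$. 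Adding this to the remaining sum $\sum_{k\neq m}\ln[G(1+b_k)G(1-b_k)]$ produces the single sum $\sum_{k=0}^n \ln[G(1+b_k)G(1-b_k)]$ appearing in \eqref{large gap sine}, thereby absorbing the $k = m$ index. The uniformity and differentiability clauses carry over directly from Theorem \ref{thm large asymp}.

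There is essentially no obstacle here: the corollary is a pure specialization. The only point requiring a line of care is the bookkeeping of the index $k = m$, which in \eqref{large gap asym} is singled out in the $\alpha/2$ term and in the Barnes-$G$ ratio, and which must be shown to recombine with the $k\neq m$ sum into the clean sum over all $0 \le k \le n$ once $\alpha = \beta = 0$ forces $b_m$ to play the same role as the other $b_k$. I would also briefly remark that this recovers Charlier's result \cite{charlier2021large}, as stated.
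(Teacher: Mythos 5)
Your proposal is correct and matches the paper's (implicit) argument: the corollary is stated in the paper as a direct specialization of Theorem \ref{thm large asymp} to $\alpha=\beta=0$, and your term-by-term substitution, including the observation that $G(1)=1$ lets the Barnes-$G$ ratio for the index $k=m$ merge with the $k\neq m$ sum into $\sum_{k=0}^n \ln[G(1+b_k)G(1-b_k)]$, is exactly the required bookkeeping.
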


\subsection{Applications}
Consider the counting function $N(t)$ of the number of particles in the confluent hypergeometric process falling into the interval $[0,t]$ for $t>0$ and  $[t,0]$ for $t<0$.  As shown in \eqref{exponential moments}, the moment generating function for the counting functions can be expressed in terms of the    confluent hypergeometric-kernel determinant. By using Theorem \ref{thm large asymp},  we derive the asymptotics for the mean, variance and covariance of the counting functions.
\begin{col}\label{meanvarcov1}
    For  $r_2>r_1>0$ fixed, as $t\rightarrow+\infty$, we have
     \begin{equation}\label{mean1}
 \mathbb{E}\left(N(t r_1)\right)={\mu_\alpha(t r_1)+\frac{\beta}{i\pi}\delta(t r_1)}+\theta_{\alpha, \beta, 1} +\mathcal{O}\left(\frac{\ln t}{t}\right)  ,
    \end{equation}
    \begin{equation}\label{mean2}
 \mathbb{E}\left(N(-t r_1)\right)=\mu_\alpha(t r_1)-\frac{\beta}{i\pi}\delta(t r_1)-\theta_{\alpha, \beta, 1} +\mathcal{O}\left(\frac{\ln t}{t}\right) ,
    \end{equation}
   \begin{equation}\label{var1}
        \operatorname{Var}\left(N(\pm t r_1)\right)=\frac{\delta(t r_1)-{(\ln G)}^{''}(1)}{ \pi^2}+\theta_{\alpha, \beta, 2}+\mathcal{O}\left(\frac{{(\ln t)^2}}{t}\right),
    \end{equation}
    \begin{equation}\label{cov1}
        \operatorname{Cov}\left(N(t r_1), N(t r_2)\right)=\Sigma(t r_1,t r_2)+\theta_{\alpha, \beta, 2}+\mathcal{O}\left(\frac{(\ln t)^2}{t}\right),
    \end{equation}
\begin{equation}\label{cov2}
        \operatorname{Cov}\left(N(t r_1), N(-t r_2)\right)=-\Sigma(t r_1,t r_2)-\theta_{\alpha, \beta, 2}+\mathcal{O}\left(\frac{(\ln t)^2}{t}\right),
    \end{equation}
where
        \begin{equation}\label{defmudeltasigma}
        \mu_\alpha(x)=\frac{x}{\pi}-\frac{\alpha}{2},\quad \delta(x)= \ln 2x,\quad\Sigma(x,y)=\frac{1}{2\pi^2}\ln\frac{2xy}{|x-y|}, {\quad \text{for }x,y>0.}
    \end{equation}
Here the parameters \begin{equation}\label{deftheta1}
    \theta_{\alpha, \beta, 1}=\frac{1}{2 \pi i} \left({(\ln G)}^{'}(1+\alpha-\beta)-(\ln G)^{'}(1+\alpha + \beta )\right),
\end{equation}
\begin{equation}\label{deftheta2}
    \theta_{\alpha, \beta, 2}=-\frac{1}{4 \pi^2} \left((\ln G)^{''}(1+\alpha +\beta)+(\ln G)^{''}(1+\alpha -\beta)\right),
\end{equation}
and  $-{(\ln G)}^{''}(1)-1=\gamma_E \approx 0.5772$ is  Euler's  constant.
\end{col}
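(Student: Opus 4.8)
The plan is to derive Corollary \ref{meanvarcov1} from Theorem \ref{thm large asymp} by exploiting the identification \eqref{exponential moments} of the Fredholm determinant with the moment generating function $\mathbb{E}\left(\prod_j e^{2\pi i h_j N(tr_j)}\right)$ and extracting moments by differentiating in the parameters $\gamma_j$ (equivalently in the $h_j$). Concretely, to obtain the mean $\mathbb{E}(N(tr_1))$ I would take $n=2$ with $r_0=0$, one interval $(0, tr_1)$ carrying weight $\gamma_0=\gamma$ and one interval $(tr_1, tr_2)$ carrying weight $\gamma_1=0$ (or simply $n=1$ for a single interval), so that $h_1=\frac{1}{2\pi i}\ln\frac{1}{1-\gamma}$ is the only nontrivial exponent and $F = \mathbb{E}\left(e^{2\pi i h_1 N(tr_1)}\right)$. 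Writing $\gamma = 1-e^{-2\pi i h_1}$ and differentiating $\ln F$ once with respect to the appropriate parameter at $\gamma=0$ (i.e. $h_1=0$) gives $\mathbb{E}(N(tr_1))$; differentiating $\ln F$ twice at $\gamma=0$ gives $\mathrm{Var}(N(tr_1))$, since the logarithm of the MGF is the cumulant generating function. The covariances come from the analogous computation with two distinct points $tr_1, tr_2$ of the same sign, taking the mixed second derivative of $\ln F$ in the two relevant parameters; the mixed signs ($tr_1>0$, $-tr_2<0$) use the corresponding configuration with $r_0<0<r_n$ and $r_m=0$.

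The second ingredient is the bookkeeping of how the $b_k$ in \eqref{large gap asym} depend on the $\gamma_j$. With $b_k = \frac{1}{2\pi i}\ln\frac{1-\gamma_{k-1}}{1-\gamma_k}$, one has $2\pi i\, b_k = \ln(1-\gamma_{k-1}) - \ln(1-\gamma_k)$, so the dependence on a single $\gamma_j$ enters only through $b_j$ and $b_{j+1}$; at $\gamma_j=0$ every $b_k$ vanishes, which makes the evaluation of the derivatives at the origin clean. I would substitute this into the right-hand side of \eqref{large gap asym}, compute $\partial_{\gamma_j}\ln F$ and $\partial_{\gamma_j}^2\ln F$ (and the mixed $\partial_{\gamma_{j}}\partial_{\gamma_{j'}}\ln F$) at $\gamma\equiv 0$, matching the linear-in-$t$ term against $\mu_\alpha$, the logarithmic term against $\delta$ and $\Sigma$, and the Barnes-$G$ terms against the constants $\theta_{\alpha,\beta,1}$ and $\theta_{\alpha,\beta,2}$ via $(\ln G)'(1)=0$, $(\ln G)''(1)=-1-\gamma_E$, and the chain rule $\partial_{\gamma_j} b_k|_{0}$, $\partial^2_{\gamma_j} b_k|_0$. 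The $-\frac{\alpha}{2}\ln[(1-\gamma_{m-1})(1-\gamma_m)]$ term contributes the $-\frac{\alpha}{2}$ in $\mu_\alpha$ and the $\pm\frac{\beta}{i\pi}\delta$ and $\pm 2\alpha\pi c$-type terms trace back to the $2\beta b_k$ coefficient of the logarithm and to the split of $b_m$ between the two half-lines. The sign asymmetry between $\mathbb{E}(N(tr_1))$ and $\mathbb{E}(N(-tr_1))$, and between the two covariance formulas, is produced automatically by the $r_m=0$ convention together with the definition of $h_j$ for $j\le m$ versus $j>m$ in \eqref{exponential moments}.

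Finally, the error estimates: naively $\ln F$ carries an $\mathcal{O}(1/t)$ error, but each differentiation in $\gamma_j$ can worsen this. Here the last sentence of Theorem \ref{thm large asymp} is exactly what is needed: the expansion \eqref{large gap asym} may be differentiated $k$ times in the $\gamma_j$ with the error upgraded to $\mathcal{O}\left((\ln t)^k/t\right)$. Hence one derivative gives the $\mathcal{O}(\ln t/t)$ in \eqref{mean1}--\eqref{mean2}, and two derivatives give the $\mathcal{O}((\ln t)^2/t)$ in \eqref{var1}--\eqref{cov2}. I would also need to justify interchanging differentiation in $\gamma_j$ with taking moments of $N$; this follows from the fact that $N(tr_j)$ has all moments finite (the process is determinantal with a trace-class kernel on bounded intervals) and from uniformity of the expansion in $\gamma_j$ on compact subsets of $[0,1)$, allowing the moment generating function and its first two $\gamma$-derivatives to be evaluated at $\gamma=0$. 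The main obstacle is purely organizational: correctly tracking the numerous sign conventions (which $b_k$ picks up $\gamma_j$ with which sign, how $b_m$ splits across $r_m=0$, the $j\le m$ vs $j>m$ dichotomy) so that the parity structure of \eqref{mean1}--\eqref{cov2} and the precise constants $\theta_{\alpha,\beta,1},\theta_{\alpha,\beta,2}$ emerge exactly; there is no hard analysis beyond what Theorem \ref{thm large asymp} already supplies.
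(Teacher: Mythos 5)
Your proposal follows essentially the same route as the paper: identify the determinant with the cumulant generating function via \eqref{exponential moments}, differentiate $\ln F$ once or twice at the trivial parameter value, substitute the expansion \eqref{large gap asym}, and invoke the differentiability clause of Theorem \ref{thm large asymp} to upgrade the error to $\mathcal{O}((\ln t)^k/t)$. The only cosmetic difference is that the paper extracts the covariance from a single-parameter second derivative of the ratio $F(t(0,r_1,r_2),\cdot)/[F(t(0,r_1),\cdot)F(t(0,r_2),\cdot)]$ rather than a mixed partial, which is equivalent.
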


\begin{proof}
From (\ref{exponential moments}) with  $n=1$ and the parameters $0=r_0<r_1$,  $b_m=b_0=-b$,
we obtain the following equation
\begin{equation}\label{expesionofexpo}
    F(t(0,r_1),1-e^{2\pi i b};\alpha,\beta)=\mathbb{E}\left(e^{2\pi i b N(tr_1)}\right).
\end{equation}
{Thus, we derive that}
\begin{equation}\label{Fr1mean}
    {\mathbb{E}\left(N(tr_1)\right)=\frac{1}{2\pi i}\left.\partial_b\ln F(t(0,r_1),1-e^{2\pi i b};\alpha,\beta)\right|_{b=0},}
\end{equation}
{and}
\begin{equation}\label{Fr1var}
    {\operatorname{Var}\left(N(tr_1)\right)=-\frac{1}{4\pi^2}\left.\partial_b^2\ln F(t(0,r_1),1-e^{2\pi i b};\alpha,\beta)\right|_{b=0}.}
\end{equation}
Similarly, we compute the covariance between $N(tr_1)$ and $N(tr_2)$ as follows
\begin{equation}\label{Fr1r2cov}
{\operatorname{Cov}\left(N(tr_1),N(tr_2)\right)=-\frac{1}{8\pi^2}\left.\partial_b^2\ln \left(\frac{F\left(t(0, r_1, r_2),(1-e^{4 \pi i b}, 1-e^{2 \pi i b}) ; \alpha, \beta\right)}{F\left(t(0, r_1), 1-e^{2 \pi i b} ; \alpha, \beta\right) F\left(t(0, r_2), 1-e^{2 \pi i b} ;\alpha, \beta\right)}\right)\right|_{b=0}.}
\end{equation}

By using the definition of the Barnes  $G$-function (\cite[(5.17.3)]{NIST:DLMF}), we have
\begin{equation}\label{defGfunction}
    G(1+z)=(2 \pi)^{z / 2} \exp \left(-\frac{1}{2} z(z+1)-\frac{1}{2} \gamma_E z^2\right) \prod_{k=1}^{\infty}\left(\left(1+\frac{z}{k}\right)^k \exp \left(-z+\frac{z^2}{2 k}\right)\right).
\end{equation}
 From Theorem \ref{thm large asymp} and \eqref{Fr1mean}-\eqref{defGfunction}, we obtain after some direct calculations the mean, variance of $N(r_1t)$ and the covariance between $N(tr_1)$ and $N(tr_2)$ as given in (\ref{mean1}), (\ref{var1}) and (\ref{cov1}).

The mean and variance of $N(-r_1t)$ in (\ref{mean2}) and  (\ref{var1}) can be derived in the same manner by considering
 (\ref{exponential moments}) with  $n=1$, specifying  the parameters $t(-r_1,0)$ and $b_m=b_1=-b$. The formula \eqref{cov2} can be derived similarly by considering $F((-tr_2,0,tr_1),(1-e^{2\pi i b},1-e^{2\pi i b});\alpha,\beta)$ like \eqref{Fr1r2cov}. We complete the proof of the corollary.
\end{proof}

From the definition of the counting function $N(t)$, the  number of particles in the confluent hypergeometric process falling into the interval $[-t,t]$ is given by $N_0(t):=N(t)+N(-t)$.  Combining  (\ref{mean1}), (\ref{var1}), (\ref{mean2}) and (\ref{cov2}), we immediately derive the asymptotics of the mean and variance  of the counting function $N_0(t)$, { which recovers the one established by Dai and Zhai in \cite{dai2022asymptotics}.}
\begin{col}\label{meanvarforsyminterval}{\rm(Dai and Zhai \cite{dai2022asymptotics})}
We have as $t\rightarrow+\infty$,
    \begin{equation}
    \mathbb{E}(N_0(t))=\frac{2t}{\pi}-\alpha +\mathcal{O}\left(\frac{\ln t}{t}\right),\quad \operatorname{Var}(N_0(t))=\frac{\ln 4t +1 + \gamma_E}{\pi^2}+\mathcal{O}\left(\frac{(\ln t)^2}{t}\right),
    \end{equation}
    where $\gamma_E \approx 0.5772$ is  Euler's  constant.
\end{col}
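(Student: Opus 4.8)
The plan is to reduce the statement to Theorem \ref{thm large asymp} evaluated on the symmetric three-point configuration and then differentiate. First I would note, exactly as in the proof of Corollary \ref{meanvarcov1}, that taking $n=2$, $(r_0,r_1,r_2)=(-1,0,1)$, $m=1$ and $\gamma_0=\gamma_1=1-e^{2\pi i b}$ in \eqref{exponential moments} forces $h_0=h_2=b$ and $h_1=0$; since $N(0)=0$ almost surely this gives
\begin{equation}
F\big((-t,0,t),(1-e^{2\pi i b},1-e^{2\pi i b});\alpha,\beta\big)=\mathbb{E}\!\left(e^{2\pi i b N_0(t)}\right),
\end{equation}
which by \eqref{gapPro1} equals $\det\big(1-(1-e^{2\pi i b})K^{(\alpha,\beta)}|_{(-t,t)}\big)$. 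Differentiating this generating function once and twice at $b=0$ then produces $\mathbb{E}(N_0(t))$ and $\operatorname{Var}(N_0(t))$ in the same way as in \eqref{Fr1mean}--\eqref{Fr1var}.

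Next I would substitute the large-$t$ asymptotics. For this parameter choice \eqref{def bk} gives $b_0=-b$, $b_1=b_m=0$, $b_2=b$, so the Barnes-$G$ quotient attached to $b_m$ in \eqref{large gap asym} is $1$ and the remaining logarithmic terms collapse; Theorem \ref{thm large asymp} (equivalently Corollary \ref{cor on symmetric interval} with $c=-\frac{1}{2\pi}\ln(1-\gamma)=-ib$) reduces to
\begin{equation}
\ln F\big((-t,0,t),(1-e^{2\pi i b},1-e^{2\pi i b});\alpha,\beta\big)=4ibt-2b^2\ln(4t)-2\pi i\alpha b+2\ln\!\big[G(1+b)G(1-b)\big]+\mathcal{O}(t^{-1}).
\end{equation}
The differentiability clause of Theorem \ref{thm large asymp}, composed with the chain rule through the smooth map $b\mapsto\gamma=1-e^{2\pi i b}$, lets me differentiate this expansion once and twice in $b$ with error terms $\mathcal{O}(t^{-1}\ln t)$ and $\mathcal{O}(t^{-1}(\ln t)^2)$, respectively.

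Finally I would carry out the two elementary evaluations at $b=0$. Differentiating once kills the $\ln(4t)$ and Barnes-$G$ contributions (they are $O(b^2)$ near $b=0$) and leaves $4it-2\pi i\alpha$; dividing by $2\pi i$ gives $\mathbb{E}(N_0(t))=\frac{2t}{\pi}-\alpha+\mathcal{O}(t^{-1}\ln t)$. Differentiating twice leaves $-4\ln(4t)+4(\ln G)''(1)$; inserting the identity $(\ln G)''(1)=-1-\gamma_E$ recorded after \eqref{deftheta2} and dividing by $-4\pi^2$ gives $\operatorname{Var}(N_0(t))=\frac{\ln(4t)+1+\gamma_E}{\pi^2}+\mathcal{O}(t^{-1}(\ln t)^2)$, which is the assertion.

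There is no genuine obstacle: once Theorem \ref{thm large asymp} is available the corollary is pure specialization and bookkeeping. The one point I would be careful about is that $\operatorname{Var}(N_0(t))$ cannot be recovered by formally putting $r_1=r_2$ in the covariance formula \eqref{cov2}, since $\Sigma(tr_1,tr_2)$ diverges as $r_2\to r_1$; the correct route is to use the asymptotics of $F$ on the truly symmetric configuration $(r_0,r_1,r_2)=(-1,0,1)$, where Theorem \ref{thm large asymp} applies directly and yields a finite value. I would also double-check the signs in passing from $\gamma_0=\gamma_1=1-e^{2\pi i b}$ to $(b_0,b_1,b_2)=(-b,0,b)$ via \eqref{def bk}, since a sign slip there would flip the leading term $\frac{2t}{\pi}$ of the mean.
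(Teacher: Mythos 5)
Your proof is correct, and it takes a different (and in fact safer) route than the paper's. The paper obtains this corollary in one line by writing $N_0(t)=N(t)+N(-t)$ and combining the mean, variance and covariance expansions of Corollary \ref{meanvarcov1}, i.e.\ $\operatorname{Var}(N_0)=\operatorname{Var}(N(t))+\operatorname{Var}(N(-t))+2\operatorname{Cov}(N(t),N(-t))$ with \eqref{mean1}, \eqref{mean2}, \eqref{var1} and \eqref{cov2} evaluated at $r_1=r_2=1$. You instead go back to Theorem \ref{thm large asymp} on the symmetric configuration $(-1,0,1)$ with $\gamma_0=\gamma_1=1-e^{2\pi i b}$, which gives $(b_0,b_1,b_2)=(-b,0,b)$ and hence $\ln F=4ibt-2b^2\ln(4t)-2\pi i\alpha b+2\ln[G(1+b)G(1-b)]+\mathcal{O}(t^{-1})$, and then differentiate at $b=0$; all of your evaluations, including $(\ln G)''(1)=-1-\gamma_E$, check out. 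Your caution about \eqref{cov2} is well placed: as printed, $\Sigma(tr_1,tr_2)$ contains $\ln\frac{2r_1r_2t}{|r_1-r_2|}$ and degenerates at $r_1=r_2$, whereas the cross term in \eqref{large gap asym} for the points $tr_1$ and $-tr_2$ actually produces $\ln\frac{2r_1r_2t}{r_1+r_2}$, which is finite on the diagonal; so the paper's one-line derivation only works after this is taken into account, while your direct specialization of Theorem \ref{thm large asymp} bypasses the issue entirely. What the paper's route buys is brevity and reuse of Corollary \ref{meanvarcov1}; what yours buys is an independent, self-contained verification that does not rely on the covariance formula at a boundary case of its stated range of validity.
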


The rest of the paper is organized as follows.  In Section \ref{CPV}, we formulate an RH problem  for the  Fredholm determinant defined in \eqref{deformedfred1} by using the general theory of integrable operator developed in \cite{its1990differential}.  Making  use of the confluent hypergeometric functions, we transform the RH problem to a model RH problem $Y(\eta,t )$ with constant jumps. We then derive a Lax pair for $Y(\eta,t)$ and the compatibility conditions of the Lax pair lead us to  the coupled Painlev\'e V system. The Hamiltonian formulation of the system and several useful differential identities are also established in this section.  In Sections \ref{sec:Asy0} and \ref{sec:Asyinfty},
 we  derive the asymptotic approximations of a family of solutions to the coupled Painlev\'e V system and the associated Hamiltonian as $t\rightarrow 0^+$ and $t\to +\infty$ by performing Deift-Zhou nonlinear steepest descent analysis  \cite{deiftzhou1993, deift2} of the RH problem for $Y(\eta,t)$. The results are stated in Proposition \ref{pro: CPVasy0} and \ref{pro: CPVasyinfty} at the end of Sections \ref{sec:Asy0} and \ref{sec:Asyinfty}, respectively.  The last section, Section \ref{sec:proofs}, is devoted to the proofs of Theorems \ref{integralexpression} and \ref{thm large asymp}. For the convenience of the reader,
a brief description of the confluent hypergeometric parametrix is put in the Appendix.

\section{Model RH problem and the coupled Painlev\'e V system}\label{CPV}
In this section, we first relate the  determinant \eqref{deformedfred1}  to  the solution of an RH problem by using the general theory of integrable operator developed in \cite{its1990differential}. Then, we derive from the RH problem  the Lax pair and the Hamiltonian for the coupled Painlev\'e V system \eqref{eq:CPV}. These results in the present  section will be used in Section \ref{sec:proofs} to prove Theorem \ref{integralexpression}.

\subsection{RH problem for the Fredholm determinant}

The kernel of the operator $K_{t\Vec{r},\Vec{\gamma},\alpha,\beta}$ can be expressed in the following   from
\begin{equation}\label{integrable form of kernel}
    \sigma(y) K^{(\alpha, \beta)}(x,y)=\frac{\vec{f}(x)^{T} \vec{g}(y)}{x-y},
\end{equation}
where
\begin{equation}\label{f, g with chf}
    \vec{f}(z)=\frac{1}{\sqrt{2 \pi i}}\left[\begin{array}{c}
\frac{\Gamma(1+\alpha-\beta)}{\Gamma(1+2 \alpha)} \mathbb{B}(z) \\
\frac{\Gamma(1+\alpha+\beta)}{\Gamma(1+2 \alpha)} \mathbb{A}(z)
\end{array}\right]
,\quad
\vec{g}(z)=\frac{\sigma(z)}{\sqrt{2 \pi i}}\left[\begin{array}{c}
-\frac{\Gamma(1+\alpha+\beta)}{\Gamma(1+2 \alpha)} \mathbb{A}(z) \\
\frac{\Gamma(1+\alpha-\beta)}{\Gamma(1+2 \alpha)} \mathbb{B}(z)
\end{array}\right],
\end{equation}
with $\mathbb{A}(z)$ and  $\mathbb{B}(z)$ defined in \eqref{eq:kernelA} and \eqref{eq:chf}.
Therefore,  the kernel  is integrable in the sense of  \cite{its1990differential}.
By standard properties of trace class operators, we have
\begin{equation}\label{fred with resolvent}
    \begin{aligned}
\frac{d}{d t} \ln F & =\frac{d}{d t} \operatorname{tr}\left(\ln \left(1-K_{\sigma}^{(\alpha,\beta)}\right)\right) \\
& =-\operatorname{tr}\left(\left(1-K_{\sigma}^{(\alpha,\beta)}\right)^{-1} \frac{d}{d t} K_{\sigma}^{(\alpha,\beta)}\right) \\
& =\sum_{k=0}^n r_k\left(\lim _{z \rightarrow (r_kt)^{+} } R(z, z)-\lim _{z \rightarrow (r_kt)^{-} } R(z, z)\right),
\end{aligned}
\end{equation}
where $R(x,y)$ denotes the kernel of the resolvent operator  $(1-K_{\sigma}^{(\alpha,\beta)})^{-1}K_{\sigma}^{(\alpha,\beta)}$. According to \cite{its1990differential} and \cite{DIZ}, the kernel is also of the integrable form
\begin{equation}\label{defres}
    R(x,y)= \frac{\left(\Vec{F}(x),\Vec{G}(y)\right)}{x-y},
\end{equation}
with
\begin{equation}\label{def F, G}
    \vec{F}(z)=\left(1-K_{\sigma}^{(\alpha,\beta)}\right)^{-1} \vec{f}(z),\quad
    \vec{G}(z)=\left(1-K_{\sigma}^{(\alpha,\beta)}\right)^{-1} \vec{g}(z).
\end{equation}
Furthermore,  $ \vec{F} $  and $\vec{G}$  can be expressed as
\begin{equation}\label{F with f}
    \vec{F}(z)=m_{+}(z) \vec{f}(z), \quad
     \vec{G}(z)=m_{+}(z)^{-T} \vec{g}(z).\end{equation}
Here $m(z)$ solves   the following RH problem.

\subsection*{RH problem for $m(z)$}
\begin{itemize}
    \item[\rm (1)]  $m(z)$ analytic in $\mathbb{C} \backslash [r_0t,r_nt]$,
    \item[\rm (2)] $m_{+}(z)=m_{-}(z)\left(I-2 \pi i \vec{f}(z) {\vec{g}(z)}^T\right )$,\quad on $\displaystyle{\cup_{k=0}^{n-1}\left(r_k t, r_{k+1} t\right)}$,
    \item[\rm (3)] $m(z)\rightarrow I$ ,\quad as $z\rightarrow\infty$.
\end{itemize}
Conversely, we have \cite{its1990differential}
\begin{equation}\label{solution m}
    m(z)=I-\int_{r_0t}^{r_nt} \frac{\vec{F}(s)  {\vec{g}(s)}^T}{s-z}ds,
\end{equation}
with $\vec{F}(s) $ defined by using the inverse of the operator $I-K_{\sigma}^{(\alpha,\beta)}$ as given in \eqref{def F, G}. The determinant $\det(1-K_{\sigma}^{(\alpha,\beta)})$  is  strictly positive since it represents  the gap probability of the thinned confluent hypergeometric process. Therefore, the operator $I-K_{\sigma}^{(\alpha,\beta)}$ is invertible and hence the existence of solution to the RH for $Y(z)$   is justified.

\subsection{Model RH problem}

Next, we transform the RH problem for $m(z)$ to a new RH problem with constant jumps.  To begin with,  we note that the function $\vec{f}(z)$ and $\vec{g}(z)$ defined in \eqref{f, g with chf}
can be expressed as

\begin{equation}\label{theta with f}
 \vec{f}(z)=   e^{\frac{\beta}{2} \pi i \sigma_3} \Phi(2 z) e^{\pm \frac{\alpha}{2} \pi i \sigma_3} e^{-\frac{\beta}{2} \pi i \sigma_3}\left(\frac{1}{\sqrt{2 \pi i}}\right)^{\sigma_3}\left[\begin{array}{l}
1 \\
0
\end{array}\right],\end{equation}
and
\begin{equation}\label{theta with g}
 \vec{g}(z)=  \sigma(z) e^{-\frac{\beta}{2} \pi i \sigma_3}\left( \Phi(2 z)\right)^{-T} e^{\mp \frac{\alpha}{2} \pi i \sigma_3} e^{\frac{\beta}{2} \pi i \sigma_3}\left(\frac{1}{\sqrt{2 \pi i}}\right)^{-\sigma_3}\left[\begin{array}{l}
0 \\
1
\end{array}\right],\end{equation}
for $z\in \Omega_{\Phi,1} $ and $\pm \mathtt{Re} z>0$. Here $ \Phi(z)$ is the solution of the RH problem for the confluent hypergeometric parametrix given in the Appendix.

We introduce the transformation
\begin{equation}\label{def model RH}
  Y(z, t)=(2 t)^{\beta \sigma_3} e^{-\frac{\beta}{2} \pi i \sigma_3} m(t z) e^{\frac{\beta}{2} \pi i \sigma_3} \left\{\begin{aligned}
&\Phi(2tz), \quad z\in\Omega_{Y,j} ,~~ j=2,5,\\
& \widehat{ \Phi}_j(2tz),  \quad z\in\Omega_{Y,j}, ~~ j=1,3,4,
\end{aligned}
\right.
\end{equation}
where $ \Phi(z)$ is the solution of the RH problem for the confluent hypergeometric parametrix given in the Appendix. The functions  $\widehat{ \Phi}_j(z)$ denote the analytic extension of $ \Phi(z)$ from the regions $\Omega_{\Phi,j}$ to $\Omega_{Y,j}$, for $j=1,3,4$, respectively; see Figures \ref{fig:3} and  \ref{fig:CHF} for the regions.

Then $Y(z)=Y(z,t)$ satisfies the following RH problem.
\subsection*{The model RH problem for $Y(z)$}
\begin{itemize}
    \item[\rm (1)] $Y(z)$ is analytic for $z\in\mathbb{C}\backslash\cup{_{j=1}^7\Gamma_j}$, where the oriented contour are given by
    $$\begin{aligned}
& \Gamma_1=r_n+e^{\frac{\pi i}{4}} \mathbb{R}^{+},~~ \Gamma_2=r_0+e^{\frac{3 \pi i}{4}} \mathbb{R}^{+}, ~~\Gamma_3=r_0+e^{-\frac{3 \pi i}{4}} \mathbb{R}^{+},~~ \Gamma_4=e^{-\frac{\pi i}{2}} \mathbb{R}^{+}, \\
& \Gamma_5=r_n+e^{-\frac{\pi i}{4}} \mathbb{R}^{+},~~ \Gamma_6=\cup_{k=0}^{m-1}\left(r_k, r_{k+1}\right), ~~\Gamma_7=\cup_{k=m}^{n-1}\left(r_k, r_{k+1}\right),
\end{aligned}$$ as shown in Figure \ref{fig:3}.
\begin{figure}[h]
    \centering
    \includegraphics[width=0.7\textwidth]{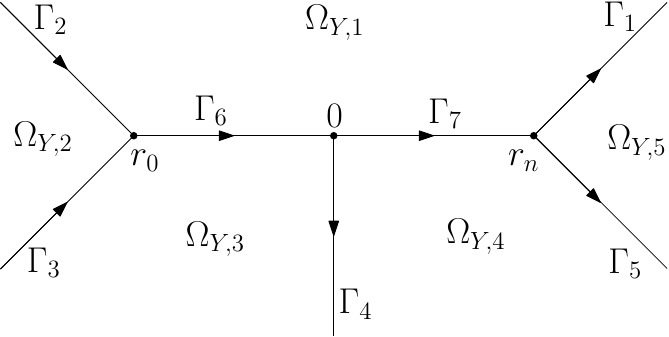}
    \caption{\small{The jump contours and regions of the model RH problem for $Y$}}
    \label{fig:3}
\end{figure}

\item[\rm (2)]  $Y_+(z)=Y_-(z)J_Y(z)$  on $\cup{_{j=1}^7\Gamma_j}$ , where
\begin{equation}\label{jump4Y}
    J_Y(z)=\left\{\begin{array}{ll}
{\left[\begin{array}{cc}
1 & 0 \\
e^{-(\alpha-\beta) \pi i} & 1
\end{array}\right]},& {z \in \Gamma_1}, \\
{\left[\begin{array}{cc}
1 & 0 \\
e^{(\alpha-\beta) \pi i} & 1
\end{array}\right]},& {z \in \Gamma_2} ,\\
{\left[\begin{array}{cc}
1 & -e^{-(\alpha-\beta) \pi i} \\
0 & 1
\end{array}\right]}, & { z \in \Gamma_3}, \\
{e^{2 \beta \pi i \sigma_3}},& {z \in \Gamma_4 },\\
{\left[\begin{array}{cc}
1 & -e^{(\alpha-\beta) \pi i} \\
0 & 1
\end{array}\right]},& { z \in \Gamma_5}, \\
{\left[\begin{array}{cc}
0 & -e^{-(\alpha-\beta) \pi i} \\
e^{(\alpha-\beta) \pi i} & 1-\sigma(z t)
\end{array}\right]},& {z \in \Gamma_6}, \\
{\left[\begin{array}{cc}
0 & -e^{(\alpha-\beta) \pi i} \\
e^{-(\alpha-\beta) \pi i} & 1-\sigma(z t)
\end{array}\right]},& {z \in \Gamma_7}.
\end{array}\right.
\end{equation}

\item[\rm (3)]   As $z\rightarrow \infty$, we have for  $\arg z\in(-\frac{\pi}{2},\frac{3\pi}{2})$

\begin{equation}\label{Y asym at inf}
    Y(z)=\left(I+\frac{Y_1(t)}{z}+\frac{Y_2(t)}{z^2}+\mathcal{O}\left(z^{-3}\right)\right) z^{-\beta \sigma_3} e^{-i z t \sigma_3}.
\end{equation}

\item [\rm (4)]  As $z\rightarrow r_k$ , $k\neq m$ ,  we have
\begin{equation}\label{Y asym at rk}
    Y(z)=Y_{0}^{(k)}(t)\left(I+Y_1^{(k)}(t)\left(z-r_k\right)+Y_2^{(k)}\left(t\right)\left(z-r_k\right)^2+\ldots\right)\left[\begin{array}{cc}
1 & \Tilde{c}_k \ln \left(z-r_k\right) \\
0 & 1
\end{array}\right],
\end{equation}
where $\Tilde{c}_k=\operatorname{sgn}(r_k) c_ke^{\operatorname{sgn}(r_k)\alpha \pi i}$ and $c_k$ is defined in (\ref{defck}) and $z\in \Omega_{Y,1}$. The branch for $\ln \left(z-r_k\right)$ is taken to be $\arg(z-r_k)\in (0, 2\pi )$  if $r_k<0$, or $\arg(z-r_k) \in (-\pi , \pi )$  if $r_k > 0$. The behavior of $Y(z)$ as $z\rightarrow r_k$  for $z$ in other regions is determined by
using  the jump condition \eqref{jump4Y}.

\item[\rm (5)]   As $z \rightarrow 0$ , we have
\begin{equation}\label{Y asym at 0}
    Y(z)=Y_0^{(m)}(t)\left(I+Y_1^{(m)}(t) z+Y_2^{(m)}(t) z^2+\ldots\right) z^{\alpha \sigma_3}\left\{\begin{array}{l}
{C_{m,1}} ,  2\alpha \notin \mathbb{Z}^+ \cup \{0\},\\
{C_{m,2},  2\alpha \in \mathbb{Z}^+ \cup \{0\}},
\end{array}\right.
\end{equation}
where $z\in \Omega_{Y,1}$,
$$C_{m,1} = \left[\begin{array}{cc}
1 & \frac{c_m}{2 i \sin 2 \alpha \pi} \\
0 & 1
\end{array}\right],\quad C_{m,2}=\left[\begin{array}{cc}
1 & \frac{e^{-2 \alpha \pi i}}{2 \pi i}c_m \ln z \\
0 & 1
\end{array}\right],
$$
and
$$c_m=\left(1-\gamma_{m-1}\right) e^{(\alpha+\beta) \pi i}-\left(1-\gamma_m\right) e^{-(\alpha+\beta) \pi i}.$$
Both $z^\alpha$ and $\ln z$ take the principal branches. The behavior of $Y(z)$ as $z\rightarrow 0$  for $z$ in other regions is determined by using  the jump condition \eqref{jump4Y}.
\end{itemize}

\subsection{Lax pair and the coupled Painlev\'e V system}
In this section, we derive a Lax pair from the above model RH problem, of which the compatibility condition gives us the coupled Painlev\'e V system. The Hamiltonian for the  coupled Painlev\'e V system is also derived.

\newtheorem{pro}{PROPOSITION}
\begin{pro}\label{lax pair and cpv}
   We have the Lax pair
       \begin{equation}\label{lax pair}
    \frac{\partial}{\partial z} Y(z, t)= \mathcal{A}(z, t)Y(z, t), \quad \frac{\partial}{\partial t} Y(z, t)= \mathcal{B}(z, t)Y(z, t),
\end{equation}
where \begin{equation}\label{defA}
     \mathcal{A}(z, t)=\sum_{k=0}^n \frac{A_k(t)}{z-r_k}-i t
     \sigma_3,
\end{equation}
\begin{equation}\label{defB}
    \mathcal{B}(z, t)=-i z \sigma_3+B(t),
   \end{equation}
with the coefficients being  given as
\begin{equation}\label{defAm}
    A_m(t)=\left[\begin{array}{cc}
-\beta-\sum_{k \neq m} u_k(t)v_k(t) & \left(\alpha+\beta+\sum_{k \neq m} u_k(t)v_k  (t)\right) y(t) \\
\frac{\alpha-\beta-\sum_{k \neq m}u_k(t)v_k  (t)}{y(t)} & \beta+\sum_{k \neq m} u_k(t)v_k  (t)\end{array}\right],
\end{equation}
\begin{equation}\label{defAk}
    A_k(t)=\left[\begin{array}{cc}
-u_k(t) v_k(t) & u_k(t) y(t) \\
-\frac{u_k(t) v_k^2(t)}{y(t)} & u_k(t) v_k(t)
\end{array}\right], \quad k \neq m,
\end{equation}
and
\begin{equation}\label{defD}
    B(t)=\frac{1}{t}\left[\begin{array}{cc}
0 & d_1(t) y(t) \\
\frac{d_2(t)}{y(t)} & 0
\end{array}\right],
\end{equation}
where
\begin{equation}\label{defd1d2}
    \begin{aligned}
        d_1(t)=\alpha+\beta-\sum_{k \neq m} u_k(t)\left(v_k(t)-1\right),\\
        d_2(t)=\alpha-\beta-\sum_{k \neq m} u_k(t) v_k(t)\left(v_k(t)-1\right).
    \end{aligned}
\end{equation}
The compatibility condition of the Lax pair is expressed as the coupled Painlev\'e V system (\ref{eq:CPV})  and the following differential equation for
$y(t)$
\begin{equation}\label{diff4dandy}
     t \frac{d}{d t} y=y\left(d_1-d_2\right).
\end{equation}
Defining  the  Hamiltonian by
\begin{equation}\label{defH}
    tH=2 i t {(Y_1)}_{11}-(\alpha^2-\beta^2),
\end{equation}
where ${(Y_1)}_{11}$ is the $(1,1)$-entry of the coefficient matrix $Y_1$ in the asymptotic approximation (\ref{Y asym at inf}) as $z\rightarrow\infty$, then we have
\begin{equation}\label{Hexp}
    t H=\sum_{k \neq m}-2 i t r_k H_V\left(u_k, v_k,-2 i t r_k ; \alpha, \beta\right)+\frac{1}{2} \sum_{j \neq k;\; j, k \neq m} u_j u_k\left(v_j+v_k\right)\left(v_j-1\right)\left(v_k-1\right),
\end{equation}
where $H_V$ is
 the classical Hamiltonian for the Painlev\'e V equation defined in \eqref{eq:hamilton}. The Hamiltonian system \eqref{hamiltonform} is equivalent to  the   coupled Painlev\'e V system \eqref{eq:CPV}.
\end{pro}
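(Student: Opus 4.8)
The proof rests on the observation that on every contour $\Gamma_j$ in Figure~\ref{fig:3} the jump matrix in \eqref{jump4Y} is independent of both $z$ and $t$: the only dependence on these variables enters through $\sigma(zt)$, which on each subinterval $(r_k,r_{k+1})$ equals the constant $\gamma_k$, while the contours $\Gamma_j$ themselves do not move with $t$. Consequently the logarithmic derivatives $\mathcal A(z,t):=(\partial_z Y)Y^{-1}$ and $\mathcal B(z,t):=(\partial_t Y)Y^{-1}$ have no jump across any $\Gamma_j$, so they extend to single-valued meromorphic functions of $z$ on $\mathbb C$ whose only possible singularities are $z=r_0,\dots,r_n$ and $z=\infty$. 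The plan is: (i) read off the pole structure of $\mathcal A$ and $\mathcal B$ from the local behaviour of $Y$ in \eqref{Y asym at inf}, \eqref{Y asym at rk}, \eqref{Y asym at 0}, which forces the rational forms \eqref{defA}--\eqref{defB}; (ii) parametrise the residue matrices to get \eqref{defAm}--\eqref{defD}; (iii) impose the zero-curvature condition to obtain the ODEs; (iv) extract the Hamiltonian from $(Y_1)_{11}$.

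For (i)--(ii): near $z=r_k$ with $k\ne m$, the $t$-independent right factor in \eqref{Y asym at rk} carrying $\ln(z-r_k)$ cancels in $(\partial_z Y)Y^{-1}$ except for the derivative of the logarithm, so $\mathcal A$ has a simple pole at $r_k$ with a traceless, rank-one nilpotent residue; writing this rank-one matrix through scalars $u_k,v_k,y$ reproduces \eqref{defAk}. Near $z=0=r_m$, the factor $z^{\alpha\sigma_3}$ in \eqref{Y asym at 0} (together with its logarithmic correction when $2\alpha\in\mathbb Z$) gives a simple pole with residue $A_m=\alpha\,Y_0^{(m)}\sigma_3\,(Y_0^{(m)})^{-1}$, a traceless matrix with $\det A_m=-\alpha^2$; combined with the residue identity obtained by matching \eqref{Y asym at inf} to order $z^{-1}$, this pins down the form \eqref{defAm}. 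Matching at infinity also shows $\mathcal A=-it\sigma_3+\sum_k A_k/(z-r_k)$, i.e.\ \eqref{defA}. For $\mathcal B$ the same cancellation of the $t$-independent local factors shows $\mathcal B$ is in fact \emph{analytic} at each $r_k$, while \eqref{Y asym at inf} forces $\mathcal B=-iz\sigma_3+B(t)$ with $B$ independent of $z$; identifying the entries of $B$ with $d_1,d_2$ as in \eqref{defd1d2} yields \eqref{defB}--\eqref{defD}.

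For (iii): the Lax pair \eqref{lax pair} is compatible if and only if $\partial_t\mathcal A-\partial_z\mathcal B+[\mathcal A,\mathcal B]=0$. Substituting \eqref{defA}--\eqref{defD} and separating this identity into its principal parts at the points $z=r_k$ and its polynomial part in $z$ produces a closed first-order system for the matrix entries; rewritten in terms of $u_k,v_k$ and $y$ it is exactly the coupled Painlev\'e~V system \eqref{eq:CPV} together with the auxiliary equation \eqref{diff4dandy} for $y$, which exhibits $y$ as a pure gauge degree of freedom. The equivalence of \eqref{eq:CPV} with the Hamiltonian form \eqref{hamiltonform} is then a direct verification: computing $\partial H/\partial u_k$ and $\partial H/\partial v_k$ from \eqref{defhamilton}, with $H_V$ as in \eqref{eq:hamilton}, reproduces \eqref{eq:CPV} term by term.

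The main obstacle is (iv). One must prove that $tH=2it(Y_1)_{11}-(\alpha^2-\beta^2)$, defined through \eqref{Y asym at inf}, equals the explicit expression \eqref{Hexp}. The route is to expand $\partial_z Y=\mathcal A Y$ at $z=\infty$ one order beyond step (i): the $z^{-2}$ coefficient expresses $(Y_1)_{11}$ through $\sum_k r_kA_k$, $\sum_k A_kY_1$, and $Y_2$, and after inserting the explicit residues \eqref{defAm}--\eqref{defAk}, using $\mathrm{tr}\,\mathcal A=0$ (valid because $\det Y\equiv1$, since every jump in \eqref{jump4Y} has determinant $1$), the $Y_2$-terms drop out and the remaining expression reorganises, after algebra with the sums $\sum_{k\ne m}u_k(\cdot)$, into the right-hand side of \eqref{Hexp}; comparing with \eqref{eq:hamilton} identifies the single-index contribution with $-2itr_kH_V(u_k,v_k,-2itr_k;\alpha,\beta)$ and the cross terms between distinct indices with the quadratic coupling $\tfrac12\sum_{j\ne k}u_ju_k(v_j+v_k)(v_j-1)(v_k-1)$. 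Keeping track of the off-diagonal normalisation $y$ and of the $j\ne k$ cross terms in this last step is the delicate point; the rest is bookkeeping.
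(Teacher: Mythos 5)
Your proposal is correct and follows essentially the same route as the paper: constant jumps give meromorphy of $(\partial_z Y)Y^{-1}$ and $(\partial_t Y)Y^{-1}$, the local expansions at $r_k$, $0$ and $\infty$ fix the rational forms and the constraints $\det A_m=-\alpha^2$, $\det A_k=0$ together with the residue sum at infinity yield the parametrisation, the compatibility (zero-curvature) condition gives \eqref{eq:CPV} and \eqref{diff4dandy}, and the $z^{-2}$ coefficient of $\partial_z Y=\mathcal A Y$ produces \eqref{Hexp}. The only cosmetic difference is that you phrase the compatibility condition as a single zero-curvature identity split into principal parts, whereas the paper writes the equivalent residue evolution equations $\tfrac{d}{dt}A_k=[-ir_k\sigma_3+B,A_k]$ directly.
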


\begin{proof} Observing that the jumping matrices in (\ref{jump4Y}) are independent of both the variables $z$ and $t$ , $Y(z,t)$ , $\frac{d}{dz}Y(z,t)$ and $\frac{d}{dt}Y(z,t)$ have the same jump conditions. Thus, $\mathcal{A}$ and $\mathcal{B}$
are meromorphic functions in the $z$-plane with the only possible isolated singularities at $z=r_k$, $k=0,\dots,n$. Taking  \eqref{Y asym at inf},\eqref{Y asym at rk} and  \eqref{Y asym at 0} into account, we then obtain that  $\mathcal{A}$ and $\mathcal{B}$ are of  the form  given  in \eqref{defA} and \eqref{defB}.

Next, we derive the coefficients in the $\mathcal{A}$ and $\mathcal{B}$.  It is seen from  \eqref{Y asym at inf},  \eqref{Y asym at rk}) and \eqref{Y asym at 0} that
\begin{equation}\label{det Ak}
   \det A_m=-\alpha^2, \qquad   \det A_k=0, ~ k\neq m.
\end{equation}
Moreover,  in view of \eqref {Y asym at inf}, we obtain by comparing the coefficients of $1/z$ on both sides of the first equation in \eqref{lax pair}
\begin{equation}\label{sumAk}
 \sum_{k=0}^nA_k=\left[\begin{array}{cc}
-\beta & 2it(Y_1)_{12} \\
-2it (Y_1)_{21}& \beta
\end{array}\right].
\end{equation}
Therefore, $A_k$, $k=0, \dots n$ can be parametrized as \eqref{defAm} and \eqref{defAk}.
Similarly, we obtain \eqref{defD} by substituting \eqref {Y asym at inf} into the second equation of \eqref{lax pair} and using \eqref{sumAk}.

The compatibility condition of the Lax pair gives
\begin{equation}\label{compability condition}
    \begin{aligned}
 \frac{d}{d t} A_k(t)  =\left[-i r_k \sigma_3+B(t), A_k(t)\right], \quad k\neq m,
\end{aligned}
\end{equation}
and
\begin{equation}\label{compability condition2}
 \frac{d}{d t} A_m(t)  =\left[B(t), A_m(t)\right].
\end{equation}
We then have from \eqref{compability condition}
\begin{equation}\label{diff eq cpv}
    \left\{\begin{array}{l}
t \frac{d}{d t}\left(u_k v_k\right)=d_1 u_k v_k^2+d_2 u_k , \quad k\neq m,\\[.3cm]
t \frac{d}{d t}\left(u_k y\right)=-2 i t r_k u_k y+2 d_1 u_k v_k y,  \quad k\neq m.
\end{array}\right.
\end{equation}
While  the equation \eqref{compability condition2} gives us
\begin{equation}\label{diff eq am,y,d}
    \left\{\begin{array}{l}
t \frac{d}{d t} {(A_m)}_{11}={(A_m)}_{11}\left(d_1+d_2\right)+\alpha\left(d_1-d_2\right), \\[.3cm]
t \frac{d}{d t} y=y\left(d_1-d_2\right),\\
\end{array}\right.
\end{equation}
with $d_1$ and $d_2$ given in \eqref{defd1d2}.
Combining \eqref{diff eq cpv} and \eqref{diff eq am,y,d}, we arrive at the coupled Painlev\'e V system \eqref{eq:CPV}  and \eqref{diff4dandy}.

To derive the expression of the Hamiltonian, we compare the coefficient of $\frac{1}{z ^2}$ in the large $z$ expansion
of the first equation in \eqref{Y asym at 0} and obtain
\begin{equation}\label{coefficient of eta-2}
    \left[-i t \sigma_3 Y_1, Y_1\right]+\left[Y_2,-i t \sigma_3\right]-\beta\left[Y_1, \sigma_3\right]-Y_1=\sum_{k=0}^n A_k r_k.
\end{equation}
A combination of  \eqref{defH} and \eqref{coefficient of eta-2} then gives \eqref{Hexp}. From \eqref{Hexp}, it is direct to see that the Hamiltonian system \eqref{hamiltonform} is equivalent to the coupled Painlev\'e V system \eqref{eq:CPV}. This completes the proof of the proposition.
\end{proof}

For later use, we also collect several relations as what follows.
\begin{pro}\label{pro:usefulRelation}
The function $y(t)$ appears in the coefficient \eqref{defAm} is related to the solution to the RH problem for $Y(z)$ via
\begin{equation}\label{y with rh}
   y(t)=\frac{\left(Y_0^{(m)}\right)_{11}}{\left(Y_0^{(m)}\right)_{21}},
\end{equation}
where $Y_0^{(m)}$ is given in the asymptotic expansion of $Y(z)$ in \eqref{Y asym at 0}.
Moreover,  let
\begin{equation}\label{d with rh}
     d(t)=2\alpha\left(Y_0^{(m)}\right)_{11}\left(Y_0^{(m)}\right)_{21},
 \end{equation}
 then $d(t)$ satisfies the  differential equation
\begin{equation}\label{dequation} t \frac{d}{d t} d=d\left(d_1+d_2\right),
\end{equation}
with $d_1$ and $d_2$ given in \eqref{defd1d2}.
\end{pro}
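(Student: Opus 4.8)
The plan is to read off both relations from the two halves of the Lax pair \eqref{lax pair} by examining the local behavior of $Y(z,t)$ at the singular point $z=0=r_m$ recorded in \eqref{Y asym at 0}. First I would write that expansion as $Y(z,t)=\widehat Y(z,t)\,\Psi(z)$, where $\widehat Y(z,t)=Y_0^{(m)}(t)(I+Y_1^{(m)}(t)z+\cdots)$ is holomorphic and invertible near $z=0$ and $\Psi(z)$ is the explicit, $t$-independent local model ($\Psi(z)=z^{\alpha\sigma_3}C_{m,1}$ or $z^{\alpha\sigma_3}C_{m,2}$). Since $\mathcal A(z,t)=(\partial_z Y)Y^{-1}$ picks up from $\Psi$ only the contribution $(\partial_z\Psi)\Psi^{-1}$, which equals $\frac{\alpha}{z}\sigma_3$ plus a term holomorphic at $z=0$ (with at most a polynomial correction coming from the logarithm in $C_{m,2}$ in the resonant case), comparing the residue at $z=0$ with the partial-fraction form \eqref{defA} gives $A_m(t)=\alpha\,Y_0^{(m)}(t)\,\sigma_3\,(Y_0^{(m)}(t))^{-1}$; in particular $\det A_m=-\alpha^2$, consistent with \eqref{det Ak}.

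Next I would match this against the parametrization \eqref{defAm}. Writing $a=(Y_0^{(m)})_{11}$, $b=(Y_0^{(m)})_{12}$, $c=(Y_0^{(m)})_{21}$ and $\Delta=\det Y_0^{(m)}$, Step~1 gives $(A_m)_{12}=-2\alpha ab/\Delta$ and $\alpha-(A_m)_{11}=-2\alpha bc/\Delta$, whereas \eqref{defAm} gives $(A_m)_{12}=(\alpha-(A_m)_{11})\,y$; dividing yields $y=a/c$, which is \eqref{y with rh} (the degenerate cases $\alpha=0$ or $b=0$, where $A_m$ is nilpotent resp.\ diagonal, follow from the same identity). For the second Lax equation, since $\Psi(z)$ does not depend on $t$ one gets $(\partial_t Y)Y^{-1}=(\partial_t\widehat Y)\widehat Y^{-1}=\mathcal B(z,t)=-iz\sigma_3+B(t)$, which is holomorphic at $z=0$; evaluating at $z=0$ gives $\partial_t Y_0^{(m)}=B(t)\,Y_0^{(m)}$, and reading off the first column of this identity with $B(t)$ from \eqref{defD} and \eqref{defd1d2} produces $t\,\dot a=d_1\,y\,c$ and $t\,\dot c=d_2\,a/y$.

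Finally, substituting $y=a/c$ decouples these into $t\,\dot a=d_1a$ and $t\,\dot c=d_2c$, so that for $d(t)=2\alpha ac$ one obtains at once $t\,\dot d=2\alpha(t\dot a\,c+a\,t\dot c)=2\alpha ac(d_1+d_2)=d(d_1+d_2)$, which is \eqref{dequation}; as a consistency check the same substitution reproduces \eqref{diff4dandy}, namely $t\,\dot y=t(\dot ac-a\dot c)/c^2=y(d_1-d_2)$. The only step that requires genuine care is the residue computation in Step~1, specifically the clean separation of the $z^{\alpha\sigma_3}$-induced simple pole from the holomorphic remainder of $(\partial_z\Psi)\Psi^{-1}$ in the resonant case $2\alpha\in\mathbb Z^{+}\cup\{0\}$ together with the degeneration at $\alpha=0$; the remaining steps are elementary linear algebra and differentiation.
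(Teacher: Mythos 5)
Your proposal is correct and follows essentially the same route as the paper: both identify $A_m$ from the local expansion \eqref{Y asym at 0} (the paper via $\det A_m=-\alpha^2$ and the parametrization \eqref{defAm}, you via the explicit residue $A_m=\alpha\,Y_0^{(m)}\sigma_3(Y_0^{(m)})^{-1}$) to get $y=(Y_0^{(m)})_{11}/(Y_0^{(m)})_{21}$, and both obtain \eqref{dequation} from $\partial_t Y_0^{(m)}=B(t)Y_0^{(m)}$ together with \eqref{defD} and \eqref{diff4dandy}. The one point you rightly flag --- the logarithmic correction in the resonant case, which at $\alpha=0$ actually does contribute to the residue and makes $A_m$ nilpotent rather than $\alpha Y_0^{(m)}\sigma_3(Y_0^{(m)})^{-1}$ --- still yields $y=(Y_0^{(m)})_{11}/(Y_0^{(m)})_{21}$ upon comparison with \eqref{defAm}, so the conclusion is unaffected.
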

\begin{proof}
From (\ref{defAm}), we have
\begin{equation}\label{eq:yA}
    y=\frac{\left(A_m\right)_{11}+\alpha}{\left(A_m\right)}.
\end{equation}
Then, by (\ref{Y asym at 0}), (\ref{det Ak}) and \eqref{eq:yA}, we obtain (\ref{y with rh}).

To derive (\ref{dequation}), we obtain  from  \eqref{Y asym at 0}  and  the  second equation in  the Lax pair (\ref{lax pair}) that
\begin{equation}\label{diffY_0}
    \frac{d}{dt}Y^{(m)}_0(t)=B(t)Y^{(m)}_0(t).
\end{equation}
This, together with (\ref{d with rh})  and (\ref{diff4dandy}), implies (\ref{dequation}).

\end{proof}

\subsection{Differential identity}\label{sec:3.4}
Recently,  the authors in \cite{bip, ilp} have developed a new method to evaluate the  total integral of the Hamiltonian. In this method, the differential identities for Hamiltonian play an important  role.  In the present  subsection, we derive the differential identities for the Hamiltonian which will be used in  Section \ref{sec:proofs}, there we applied the method developed in  \cite{bip, ilp} to calculate the constant term in the large gap asymptotics.

\begin{pro}\label{pro:4}
    We have the relation
    \begin{equation}\label{dift}
        H=\left(\sum_{k\neq m} u_k \frac{d v_k}{d t}-H\right)+\frac{d}{d t}\left(t H+\alpha \ln d-\beta \ln y-2\left(\alpha^2-\beta^2\right)\ln t\right),
    \end{equation}
    and the following differential identities
    \begin{equation}\label{difgamma}
        \frac{d}{d \gamma_j}\left(\sum_{k \neq m} u_k \frac{d v_k}{d t}-H\right)=\frac{d}{d t}\left(\sum_{k \neq m} u_k \frac{d v_k}{d \gamma_j}\right),
    \end{equation}
    with $j=0,\dots,n-1$.
\end{pro}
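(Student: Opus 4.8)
The plan is to derive \eqref{dift} as a Hamilton--Jacobi-type identity combining the definition of the Hamiltonian with the Lax-pair structure, and then to obtain \eqref{difgamma} as a compatibility (closedness) statement for the $1$-form built from $H$ and the $u_k\,dv_k$ terms under the two independent deformation parameters $t$ and $\gamma_j$.

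For \eqref{dift}: first I would start from the classical relation, valid for Hamiltonian systems with explicit $t$-dependence,
\begin{equation}\label{eq:HJstart}
\frac{d}{dt}\bigl(tH\bigr) = H + t\frac{dH}{dt} = H + t\frac{\partial H}{\partial t} + t\sum_{k\neq m}\Bigl(\frac{\partial H}{\partial u_k}\frac{du_k}{dt} + \frac{\partial H}{\partial v_k}\frac{dv_k}{dt}\Bigr),
\end{equation}
and use the equations of motion \eqref{hamiltonform} to collapse the last sum. This identifies $\frac{d}{dt}(tH) - H$ with $t\,\partial H/\partial t$ plus a total derivative coming from the symplectic term. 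Next I would compute $t\,\partial H/\partial t$ from the explicit formula \eqref{Hexp}: the only explicit $t$-dependence sits inside the arguments $-2itr_k$ of $H_V$ and in the overall $1/t$ scaling, and differentiating $H_V$ in its third slot together with the $\alpha^2-\beta^2$ shift in \eqref{defH} should produce exactly the combination $\alpha\ln d - \beta\ln y - 2(\alpha^2-\beta^2)\ln t$ under $\frac{d}{dt}$, once I substitute the auxiliary equations \eqref{diff4dandy} for $y$ and \eqref{dequation} for $d$ from Proposition \ref{pro:usefulRelation}. The bookkeeping here is the main obstacle: one must carefully track how $t\,d(\ln y)/dt = d_1-d_2$ and $t\,d(\ln d)/dt = d_1+d_2$ feed back into $t\,\partial H/\partial t$, and how the cross terms $\frac12\sum u_j u_k(v_j+v_k)(v_j-1)(v_k-1)$ contribute. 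Rearranging then yields $H = \bigl(\sum_{k\neq m}u_k\,\dot v_k - H\bigr) + \frac{d}{dt}\bigl(tH + \alpha\ln d - \beta\ln y - 2(\alpha^2-\beta^2)\ln t\bigr)$.

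For \eqref{difgamma}: the key point is that the family of solutions $u_k(t,\vec\gamma)$, $v_k(t,\vec\gamma)$ is determined by the RH problem for $Y$, which depends on $\gamma_j$ only through the jump condition; hence both $\partial_t$ and $\partial_{\gamma_j}$ act as isomonodromic deformations, and the associated connection is flat. Concretely, I would introduce, alongside the $z$- and $t$-Lax equations \eqref{lax pair}, a third Lax equation $\partial_{\gamma_j}Y = \mathcal{C}_j(z,t)Y$ (obtainable by the same argument as Proposition \ref{lax pair and cpv}, since $Y_+Y_-^{-1}$ has $\gamma_j$-independent structure off the cut), and then write the closedness of the $2$-form: the mixed compatibility of the $t$- and $\gamma_j$-flows gives $\partial_{\gamma_j}\mathcal{B} - \partial_t\mathcal{C}_j = [\mathcal{C}_j,\mathcal{B}]$, from which one extracts that $\partial_{\gamma_j}\bigl(\sum u_k\dot v_k - H\bigr)$ and $\partial_t\bigl(\sum u_k\,\partial_{\gamma_j}v_k\bigr)$ agree. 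Alternatively — and this is the route I would actually take, since it avoids constructing $\mathcal{C}_j$ explicitly — I would observe that $\sum_{k\neq m}u_k\,dv_k - H\,dt$ is (up to the exact differential identified in \eqref{dift}) the restriction of the Jimbo--Miwa--Ueno differential $\omega_{JMU}$ to the $t$-direction, that $\omega_{JMU}$ on the full deformation space (parameters $t$ and $\gamma_0,\dots,\gamma_{n-1}$) is closed by the general theory of isomonodromic tau functions, and that equating the $dt\wedge d\gamma_j$ components of $d\omega_{JMU}=0$ is precisely \eqref{difgamma}. The main obstacle is verifying that the $\gamma_j$-component of $\omega_{JMU}$ is smooth in $t$ and produces no extra boundary contributions, i.e. that the deformation in $\gamma_j$ genuinely lies within the isomonodromic family; this follows from the explicit $\gamma_j$-dependence being confined to the constant jump $1-\sigma(zt)$ on $\Gamma_6\cup\Gamma_7$ in \eqref{jump4Y}, so that the monodromy data depend on $\gamma_j$ only through the Stokes/connection structure in a controlled way.

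In summary, \eqref{dift} is a direct computation from \eqref{Hexp}, \eqref{defH}, the equations of motion \eqref{hamiltonform}, and the auxiliary relations \eqref{diff4dandy} and \eqref{dequation}, with the logarithmic terms $\alpha\ln d - \beta\ln y$ arising naturally as antiderivatives of $d_1\pm d_2$; and \eqref{difgamma} is the flatness/closedness statement for the isomonodromic $1$-form in the enlarged parameter space, which I would phrase via the Jimbo--Miwa--Ueno differential. I expect the $t$-derivative bookkeeping in \eqref{dift} to be the more delicate of the two, while \eqref{difgamma} is conceptually routine once the isomonodromic framework is in place.
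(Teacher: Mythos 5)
Your treatment of \eqref{dift} is essentially the paper's: both arguments reduce the identity to a direct verification against the explicit formula \eqref{Hexp}, using the equations of motion \eqref{hamiltonform} together with the auxiliary relations $t\,\frac{d}{dt}\ln y=d_1-d_2$ and $t\,\frac{d}{dt}\ln d=d_1+d_2$ from \eqref{diff4dandy} and \eqref{dequation}; the paper merely organizes the bookkeeping by computing $\sum_{k\neq m}u_k\,\partial H/\partial u_k-H$ and $\frac{d}{dt}\bigl(tH+\alpha\ln d-\beta\ln y-2(\alpha^2-\beta^2)\ln t\bigr)$ separately and adding, rather than passing through $t\,\partial H/\partial t$. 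That part is fine.

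For \eqref{difgamma} you take a genuinely different and much heavier route, and your preferred sub-route has a concrete problem. The paper's proof is a three-line chain-rule computation: expand $\frac{d}{d\gamma_j}\bigl(\sum_{k\neq m}u_k\frac{dv_k}{dt}-H\bigr)$, note that $H$ in \eqref{Hexp} has \emph{no explicit} $\gamma_j$-dependence (it depends on $\gamma_j$ only through $u_k,v_k$), and use $\frac{dv_k}{dt}=\frac{\partial H}{\partial u_k}$, $\frac{du_k}{dt}=-\frac{\partial H}{\partial v_k}$ to cancel terms, leaving exactly $\frac{d}{dt}\bigl(\sum_{k\neq m}u_k\frac{dv_k}{d\gamma_j}\bigr)$. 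This is the classical first-variation identity for a Hamiltonian with no explicit dependence on the parameter, and the only nontrivial input is that observation about \eqref{Hexp}, which you never isolate. By contrast, your first sub-route --- a third Lax equation $\partial_{\gamma_j}Y=\mathcal{C}_j Y$ with $\mathcal{C}_j$ rational in $z$ --- fails as stated: the jump matrices \eqref{jump4Y} on $\Gamma_6\cup\Gamma_7$ depend on $\gamma_j$ through $1-\sigma(zt)$, so $\partial_{\gamma_j}Y_+\,Y_+^{-1}-\partial_{\gamma_j}Y_-\,Y_-^{-1}=Y_-\bigl(\partial_{\gamma_j}J_Y\bigr)J_Y^{-1}Y_-^{-1}\neq 0$ there, i.e.\ $\partial_{\gamma_j}Y\,Y^{-1}$ has a genuine jump and is not meromorphic; the $\gamma_j$-deformation changes the monodromy data and is not isomonodromic in the classical sense. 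Your second sub-route (closedness of the Jimbo--Miwa--Ueno form on the enlarged parameter space) is viable in spirit --- it is precisely the framework of \cite{bip,ilp} that motivates this section --- but it requires the \emph{extended} JMU differential for varying monodromy data, plus the identification of its $\gamma_j$-component with $\sum_k u_k\,\partial_{\gamma_j}v_k$, which is itself a computation at least as long as the elementary one. So your proposal would need either the extended-JMU machinery made precise or, better, replacement of the whole isomonodromy discussion by the direct Hamiltonian computation.
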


\begin{proof} By using the Hamiltonian dynamic system \eqref{hamiltonform}, \eqref{diff4dandy} and \eqref{dequation}, we have
\begin{equation}\label{eq for H 2}
    \begin{aligned}
\sum_{k \neq m} u_k \frac{d v_k}{d t}-H & =\sum_{k \neq m} u_k \frac{\partial H}{\partial u_k}-H \\
& =\frac{1}{t} \sum_{k \neq m} u_k^2 v_k\left(v_k-1\right)^2+\frac{1}{t} \sum_{j<k} u_j u_k\left(v_j+v_k\right)\left(v_j-1\right)\left(v_k-1\right),
\end{aligned}
\end{equation}
and
\begin{equation}\label{diff eq for t}
    \begin{aligned}
\frac{d}{d t}\left(t H+\alpha \ln d(t) \right.&\left. -\beta \ln y(t)-2\left(\alpha^2-\beta^2\right) \ln t\right) \\
& =2 i \sum_{k \neq m} u_k v_k r_k-\frac{\alpha}{t} \sum_{k \neq m} u_k\left(v_k^2-1\right)-\frac{\beta}{t} \sum_{k \neq m} u_k\left(v_k-1\right)^2.
\end{aligned}
\end{equation}
Combining \eqref{Hexp} with (\ref{eq for H 2}) and (\ref{diff eq for t}), we obtain (\ref{dift}).

Applying again the Hamiltonian dynamic system \eqref{hamiltonform}, we derive the  differential identity with respect to  the parameters \begin{equation}\label{eq:dgamma}
 \begin{aligned}
\frac{d}{d \gamma_j}\left(\sum_{k \neq m} u_k \frac{d v_k}{d t}-H\right)&
=\sum_{k \neq m}\left(\frac{d u_k}{d \gamma_j} \frac{d v_k}{d t}+u_k \frac{d^2 v_k}{d \gamma_j d t}-\frac{\partial H}{\partial u_k}\frac{d u_k}{d \gamma_j}- \frac{\partial H}{\partial v_k}\frac{d v_k}{d \gamma_j}\right)\\
&=\frac{d}{d t}\left(\sum_{k \neq m} u_k \frac{d v_k}{d \gamma_j}\right).
\end{aligned}
\end{equation}
This completes the proof of the proposition.
\end{proof}

\section{Asymptotic Analysis as $t$ $\rightarrow$ $0^+$}\label{sec:Asy0}
In this section, we  derive the asymptotic approximations of a family of solutions to the coupled Painlev\'e V system (\ref{eq:CPV}) and the associated Hamiltonian as $t\rightarrow 0^+$ by performing the Deift-Zhou nonlinear steepest descent analysis  \cite{deiftzhou1993} of the RH problem for $Y(z)$.

For the purpose, we consider
\begin{equation}\label{X with phi}
   X(z) =(2 t)^{-\beta \sigma_3} Y\left(\frac{z}{2 t}, t\right) .\end{equation}
It is readily seen from the RH problem for $Y(z)$ that $X(z)$ shares the same jump matrices with $\Phi(z)$ given in the Appendix,  for $|z|> \max\left\{2\left|r_0 t\right|, 2\left|r_n t\right|\right\}$. Therefore, the solution to the RH problem for $X(z)$ can be approximated by $\Phi(z)$ for $|z|>\max\left\{2\left|r_0 t\right|, 2\left|r_n t\right|\right\}$ bounded away from the origin as $t\to 0$.

In the next subsection, we  construct a local parametrix $L(z)$ which has the same jump condition as $X(z)$ in a neighborhood of the origin containing the interval $[2r_0 t, 2r_nt]$.

\subsection{Local parametrix}

\subsection*{RH problem for $L(z)$}
\begin{itemize}
  \item[\rm (1)] $L(z) $ is analytic for $z\in U(0, \varepsilon) \setminus \Sigma_X$ with some constant $\varepsilon>0$.
    \item[\rm (2)] We have the jump relation
      \begin{equation}\label{jumpL}L_+(z) = L_-(z)J_Y(z/2t)\end{equation} for $z\in \Sigma_X\cap U(0, \varepsilon) $, where $J_Y(z)$ is given in \eqref{jump4Y}.
    \item[\rm (2)] As $t\rightarrow 0^+$, we have
    \begin{equation}\label{LMatching}L(z) \Phi(z)^{-1}=I+O\left(t^{2\alpha+1}\right),
    \end{equation} where the error term is uniform for $z\in \partial U(0, \varepsilon)$.
\end{itemize}

Inspired by the asymptotic behavior  of $ \Phi(z)$ near the origin given in \eqref{eq: Phi0} and \eqref{eq: Phi02},  we construct $L(z)$ of the following form \begin{equation}\label{construct L}
    L(z)=\Phi^{(0)}(z) \left[\begin{array}{cc}
1 & t^{2\alpha} k\left(\frac{z}{t}\right) \\
0 & 1
\end{array}\right] z^{\alpha \sigma_3}\left\{\begin{array}{ll}
{\left[\begin{array}{cc}
1 & \frac{\sin (\alpha+\beta) \pi}{\sin 2 \alpha \pi} \\
0 & 1
\end{array}\right]C_j} ,&  2\alpha\notin \mathbb{N},\\[.4cm]
{\left[\begin{array}{cc}
1 & (-1)^{2 \alpha} \frac{\sin (\alpha+\beta) \pi}{\pi} \ln z \\
0 & 1
\end{array}\right]C_j},&  2\alpha\in \mathbb{N},
\end{array}\right.
\end{equation}
where $\alpha>-\frac{1}{2} $ and the branches for $z^{\alpha}$ and $\ln z$ are chosen such that $\arg z\in(-\frac{\pi}{2}, \frac{3}{2}\pi)$.
Here the function $k(z)$ is analytic for $z\in \mathbb{C}\setminus [2r_0, 2r_n]$ with the following jumps on $(2r_0, 2r_n)$
\begin{equation}\label{jump for k 1}
    k_{+}(x)-k_{-}(x)=-|x|^{2 \alpha} \sigma\left(\frac{xt}{2}\right) e^{(\alpha+\beta) \pi i}, \quad 2r_0<x<0,
\end{equation}
and
\begin{equation}\label{jump for k 2}
    k_{+}(x)-k_{-}(x)=-x^{2 \alpha} \sigma\left(\frac{xt}{2}\right) e^{(\alpha-\beta) \pi i}, \quad 0<x<2r_n,
\end{equation}in which $\sigma(x)$ is the step function defined in \eqref{sigma}.
Applying the Plemelj formula, we have
\begin{equation}\label{expression of k}
    k(z)=-\frac{e^{(\alpha+\beta) \pi i}}{2 \pi i} \int_{2 r_0}^0 \frac{|x|^{2 \alpha} \sigma\left(\frac{xt}{2}\right)}{x-z} d x-\frac{e^{(\alpha-\beta) \pi i}}{2 \pi i} \int_0^{2 r_n} \frac{x^{2 \alpha} \sigma\left(\frac{xt}{2}\right)}{x-z} d x, \quad  z\in \mathbb{C}\setminus [2r_0, 2r_n].
\end{equation}

It is seen from the construction \eqref{construct L} and the jump relations \eqref{jump for k 1} and \eqref{jump for k 2} that  the jump condition \eqref{jumpL} is fulfilled.

To show \eqref{LMatching}, we expand $k(z)$ as $z \rightarrow \infty$:
\begin{equation}\label{k asym at inf}
    k(z)=-e^{\alpha\pi i}\sum_{k=0}^n \frac{c_k}{2 \alpha+1}\left(2 |r_k|\right)^{2 \alpha+1} \frac{1}{z}+\mathcal{O}\left(\frac{1}{z^2}\right).
\end{equation}
Substituting (\ref{k asym at inf}) into \eqref{construct L}, we have as $t\to 0$
\begin{equation}\label{competing condition}
    L(z) \Phi{(z)}^{-1}=I-e^{\alpha\pi i}\sum_{k=0}^n \frac{c_k}{2 \alpha+1}\left(2 |r_k|\right)^{2 \alpha+1} \frac{t^{2 \alpha+1}}{z} \Phi^{(0)}(z)\left[\begin{array}{ll}
0 & 1 \\
0 & 0
\end{array}\right] {\Phi^{(0)}(z)}^{-1}+\mathcal{O}\left(t^{2 \alpha+2}\right),
\end{equation}
where the error term is uniform for $z\in \partial U(0, \varepsilon)$. Therefore, we have the matching condition \eqref{LMatching}.

\subsection{Ratio RH problem }
We define the following ratio RH problem
\begin{equation}\label{def:E}
E(z)= \begin{cases}X(z) L(z)^{-1}, & |z|<\varepsilon, \\ X(z) \Phi{(z)}^{-1}, & |z|>\varepsilon.\end{cases}
\end{equation}
Then, $E(z)$ satisfies the following RH problem.
\subsection*{RH problem for $E(z)$}
\begin{itemize}
  \item[\rm (1)] $E(z) $ is analytic for $z\in \mathbb{C} \setminus U(0, \varepsilon)$ with some constant $\varepsilon>0$.
    \item[\rm (2)]  $E(z) $ satisfies the jump condition
      \begin{equation}\label{jumpE}E_+(z) = E_-(z)J_E(z), \quad J_E(z)=\Phi{(z)}L(z)^{-1}, \end{equation} for $z\in \partial U(0, \varepsilon)$
      with the orientation clockwise.

        \item[\rm (3)] $E(z)=I+\mathcal{O}\left(1/z\right)$ as $z\to\infty$.
        \end{itemize}
Using \eqref{LMatching} and  a standard argument for the small norm RH problem as given in \cite{deiftzhou1993} , we have the expansion
\begin{equation}\label{value of E}
   E(z)=
I-e^{\alpha\pi i}\sum_{k=0}^n \frac{c_k}{2 \alpha+1}\frac{(2 |r_k|t)^{2 \alpha+1}}{z} \Phi^{(0)}(0)\left[\begin{array}{ll}
0 & 1 \\
0 & 0
\end{array}\right] {\Phi^{(0)}(0)}^{-1}+\mathcal{O}\left(t^{2 \alpha+2}\right) , \quad |z|>\varepsilon, \end{equation}
and
   \begin{equation}\label{value of E2}
   E(z)=I+\mathcal{O}\left(t^{2 \alpha+1}\right)
   \end{equation}
   for $|z|<\varepsilon$.

 \subsection{Asymptotic behavior as $t\rightarrow0^+$}
 As a consequence  of nonlinear steepest descent analysis of the RH problem for $Y(z)$, we obtain the asymptotics of  a family of special solutions to the coupled Painlev\'e V system and the Hamiltonian associated.
 \begin{pro}
   \label{pro: CPVasy0}
    For $\alpha> -\frac{1}{2}$, $\beta \in i\mathbb{R}$ and $\gamma_k \in[0,1)$, there exist  solutions to the coupled Painlev\'e V system (\ref{eq:CPV}) with the following asymptotic behaviors as $t \rightarrow 0^+$
    \begin{equation}\label{CPVasym0}
        \left\{\begin{array}{l}
u_k(t)=\frac{\operatorname{sgn}(r_k)c_k  \Gamma(1+\alpha-\beta) \Gamma(1+\alpha+\beta)(2|r_k| t)^{2 \alpha}}{\Gamma^2(1+2 \alpha)}\left(1+\mathcal{O}\left(t^{2 \alpha+1}\right)\right), \\
v_k(t)=1+\mathcal{O}\left(t^{2 \alpha+1}\right),
\end{array}\right.
    \end{equation}
 with $c_k$ given in \eqref{defck}  for $k\neq m$.
Furthermore, the Hamiltonian $H(t)$ associated with the solutions has the asymptotic behavior \eqref{eq:Hasyzero}  as $t\to 0^+$.
\end{pro}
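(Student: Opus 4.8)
The plan is to extract the small-$t$ asymptotics of the coefficient matrices $A_k(t)$ in the Lax pair from the asymptotic analysis of the RH problem for $Y(z)$ carried out in this section, and then read off $u_k(t)$, $v_k(t)$ and $H(t)$ from the parametrizations \eqref{defAm}--\eqref{defAk} and the Hamiltonian formula \eqref{Hexp}. First I would unwind the chain of transformations: from \eqref{def:E} and \eqref{X with phi} we have $Y(z/2t,t) = (2t)^{\beta\sigma_3} X(z) = (2t)^{\beta\sigma_3} E(z) \Phi(z)$ for $|z|>\varepsilon$ and $= (2t)^{\beta\sigma_3} E(z) L(z)$ for $|z|<\varepsilon$. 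Since each $r_k$ with $k\neq m$ corresponds, in the variable $z=2tz'$, to the point $2r_k t\to 0$, the behaviour of $Y$ near $z'=r_k$ is governed by $L(z)$ near $z=2r_kt$, which in turn is governed by the explicit factor $t^{2\alpha}k(z/t)$ in \eqref{construct L} together with $\Phi^{(0)}(z)$ and the local factor $z^{\alpha\sigma_3}$. The key analytic input is the local behaviour of $k(z)$ as $z\to 2r_k$: from \eqref{expression of k} the jump of $k$ across $(2r_{k-1},2r_{k+1})$ is a step with constant size, so $k(z) - k_{\mathrm{reg}}(z) = \frac{\text{(jump constant)}}{2\pi i}\ln(z-2r_k) + O(z-2r_k)$ near $2r_k$, which is precisely what produces the logarithmic singularity \eqref{Y asym at rk} with coefficient $\tilde c_k$. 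Matching the residue structure then yields $A_k(t)$ to leading order.

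The main computational step is to compute $A_k(t) = \lim_{z\to r_k}(z-r_k)\mathcal{A}(z,t)$ to leading order in $t$, using $\mathcal{A} = Y_z Y^{-1}$. Because $Y_0^{(k)}(t) = (2t)^{\beta\sigma_3}E(2r_kt)\Phi^{(0)}(2r_kt)(2r_kt)^{\alpha\sigma_3}(\cdots)$ — with $E(z)=I+O(t^{2\alpha+1})$ inside the disk by \eqref{value of E2} and $\Phi^{(0)}$ analytic and invertible at $0$ — the residue $A_k$ is, to leading order, a conjugation of the rank-one residue coming from the $t^{2\alpha}k(z/t)$ entry. Concretely, the off-diagonal logarithmic term contributes a residue proportional to $t^{2\alpha}\cdot(2|r_k|)^{2\alpha}$ times the appropriate trigonometric/Gamma prefactors collected from $\Phi^{(0)}(0)$ and the matrices $C_j$; tracking the $(1,1)$, $(1,2)$, $(2,1)$, $(2,2)$ entries against \eqref{defAk} gives $u_k(t)v_k(t) = O(t^{2\alpha})$, $u_k(t)y(t) = O(t^{2\alpha})$ and, crucially, $u_k(t)v_k^2(t)/y(t) = O(t^{2\alpha})$; combining these three forces $v_k(t)=1+O(t^{2\alpha+1})$ and $u_k(t) = \frac{\operatorname{sgn}(r_k)c_k\Gamma(1+\alpha-\beta)\Gamma(1+\alpha+\beta)(2|r_k|t)^{2\alpha}}{\Gamma^2(1+2\alpha)}(1+O(t^{2\alpha+1}))$, which is \eqref{CPVasym0}. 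One must also check that these leading-order expressions are consistent with $\det A_k = 0$ from \eqref{det Ak} (automatic since $v_k\to 1$ makes the rank-one structure exact to this order) and that the error terms propagate consistently through \eqref{eq:CPV}; the latter is a routine bootstrap once the leading order is known.

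For the Hamiltonian asymptotics \eqref{eq:Hasyzero}, I would substitute the expansions \eqref{CPVasym0} into the formula \eqref{Hexp}. Since $v_k = 1 + O(t^{2\alpha+1})$, the quadratic coupling term $\frac12\sum_{j\neq k} u_ju_k(v_j+v_k)(v_j-1)(v_k-1)$ is $O(t^{2\alpha}\cdot t^{2\alpha}\cdot t^{2\alpha+1}) = O(t^{6\alpha+1})$, which is absorbed into the error term $\mathcal{O}(t^{2\alpha+1})$ provided $\alpha > 0$; for $-\tfrac12 < \alpha \le 0$ one argues more carefully, but in all cases the dominant contribution comes from the single-Painlev\'e pieces $-2itr_k H_V(u_k,v_k,-2itr_k;\alpha,\beta)$. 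Expanding $H_V$ from \eqref{eq:hamilton} with $s=-2itr_k$, $u=u_k=O(t^{2\alpha})$, $v=v_k=1+O(t^{2\alpha+1})$: the terms $-suv$, $-\alpha u(v^2-1)$, $-\beta u(v-1)^2$ and $u^2v(v-1)^2$ are all $o(u)$ relative to the surviving piece, so $sH_V \sim -s u_k v_k \sim -s u_k$ up to the stated error, giving $t H(t) \sim \sum_k (-2itr_k)\cdot\frac{-1}{-2itr_k}\cdot(\cdots)$ — more precisely $tH(t) = \sum_{k\neq m} u_k(t)\cdot 2itr_k\cdot(\text{const}) + \cdots$, and substituting $u_k(t)$ yields exactly the right-hand side of \eqref{eq:Hasyzero} after dividing by $t$ and recognizing that the $k=m$ term contributes only to the $-(\alpha^2-\beta^2)$ constant already removed in \eqref{defH}. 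The only place one must be careful is the sign and the identification $(2|r_j|)^{2\alpha+1}$ versus $(2|r_k|t)^{2\alpha}$: the extra factor of $t$ and one power of $|r_k|$ come from the explicit $-2itr_k$ prefactor multiplying $u_k$, which is consistent. I expect the main obstacle to be bookkeeping the constant prefactors through the conjugation by $\Phi^{(0)}(0)$ and the connection matrices $C_j$ (including the two cases $2\alpha\notin\mathbb{N}$ and $2\alpha\in\mathbb{N}$, which must give the same answer), rather than any conceptual difficulty; the RH analysis having already been done, this is essentially a careful residue computation.
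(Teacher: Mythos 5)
Your derivation of the $u_k$, $v_k$ asymptotics is essentially the paper's own route: the paper also unwinds $Y\mapsto X\mapsto E$, reads off $Y_0^{(k)}$ from the logarithmic singularity of $k(z)$ at $2r_k$ (cf.\ \eqref{singularity at 2rk}--\eqref{eq:YoAsy2}), and converts to $u_k,v_k$ via the residue parametrization of $A_k$, concretely through the relations \eqref{eq:v_kexp}--\eqref{eq:u_kexp}; note that this step also requires the leading term of $y(t)$, obtained separately from $Y_0^{(m)}$ as in \eqref{y0}, which your sketch leaves implicit.

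The Hamiltonian part, however, contains a genuine error. Substituting \eqref{CPVasym0} into \eqref{Hexp} and keeping only $-s u_k v_k\sim -s u_k$ gives $tH\approx \sum_{k\neq m}2itr_ku_k$, i.e.
$H\approx i\sum_k c_k\Gamma(1+\alpha-\beta)\Gamma(1+\alpha+\beta)(2|r_k|)^{2\alpha+1}t^{2\alpha}/\Gamma^2(1+2\alpha)$,
which is $(2\alpha+1)$ times the right-hand side of \eqref{eq:Hasyzero} — the factor $\frac{1}{2\alpha+1}$ does not appear in your computation and cannot be recovered from the leading-order data in \eqref{CPVasym0} alone. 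The consistency check is the exact identity \eqref{dH}, $\frac{d}{dt}(tH)=2i\sum_k r_k u_k v_k$: integrating $u_kv_k\sim A_kt^{2\alpha}$ from $0$ forces $tH\sim\frac{2i\sum_k r_kA_k}{2\alpha+1}t^{2\alpha+1}$, in agreement with \eqref{eq:Hasyzero}. Comparing with \eqref{Hexp} shows that for $\alpha\neq0$ the terms you discard, in particular $-\alpha u_k(v_k^2-1)$, must contribute at the \emph{same} order $t^{2\alpha+1}$ as $2itr_ku_kv_k$ (equivalently, $v_k-1$ carries a term of exact order $t$, so the error bound in \eqref{CPVasym0} is not fine enough to control these terms for $\alpha>0$). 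So the "all other terms are $o(u)$" claim fails, and with it the asserted match to \eqref{eq:Hasyzero}. The fix is either to integrate \eqref{dH} as above, or to do what the paper does: compute $(Y_1)_{11}$ directly from the large-$z$ expansion $Y_1=\frac{1}{2t}(2t)^{\beta\sigma_3}(E_1+\Phi_1)(2t)^{-\beta\sigma_3}$, where the factor $\frac{1}{2\alpha+1}$ enters through the moment $\int x^{2\alpha}\sigma\,dx$ in the $1/z$ coefficient of $k(z)$ in \eqref{k asym at inf}, and then apply the definition \eqref{defH}.
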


\begin{proof}
Tracing back the invertible  transformations $Y\to X\to E$ given in \eqref{X with phi} and \eqref{def:E}, we obtain
\begin{equation}\label{eq:YXE}
    Y(z/2t) = \frac{1}{2 t}(2 t)^{\beta \sigma_3} E(z)   \begin{cases} L(z), & |z|<\varepsilon, \\  \Phi(z), & |z|>\varepsilon,\end{cases}
\end{equation}
where $L(z)$, $\Phi(z)$ and $E(z)$ are given in \eqref{construct L}, \eqref{value of E}, \eqref{value of E2} and \eqref{solution4phi}.

Comparing the large-$z$ expansion on both sides of the second equation in \eqref{eq:YXE}, we have
\begin{equation}\label{eq:Y1Asy0}
    Y_1 = \frac{1}{2 t}(2 t)^{\beta \sigma_3} (E_1+\Phi_1)(2 t)^{-\beta \sigma_3},
\end{equation}
where $Y_1$, $E_1$ and $\Phi_1$ are the coefficients of $1/z$ in the large-$z$ expansion of $Y(z)$, $E(z)$ and $\Phi{(z)}$, respectively.
Substituting \eqref{value of E} and \eqref{phi aym inf} into \eqref{eq:Y1Asy0} yields
\begin{equation}\label{asym of Y1 as t goes to 0}
    {(Y_1)}_{11}= \sum_{k=0}^n \frac{c_k \Gamma(1+\alpha-\beta) \Gamma(1+\alpha+\beta) \left(2 |r_k|\right)^{2 \alpha+1}}{2(2 \alpha+1) \Gamma^2(1+2 \alpha)} t^{2 \alpha}+\frac{\alpha^2-\beta^2}{2i t}+\mathcal{O}\left(t^{2 \alpha+1}\right),
\end{equation}
as $t\to0^+$.
This, together with \eqref{defH}, implies the asymptotics of the Hamiltonian as given in \eqref{eq:Hasyzero}.

To derive the asymptotics of $y(t)$  and $d(t)$,  we consider the asymptotic expansion of $Y(z)$  as $t\to 0^+$ for $z$ near the origin.
Plugging  \eqref{Y asym at 0}, \eqref{construct L} and \eqref{value of E2} into the first equation in \eqref{eq:YXE}, we have
\begin{equation}\label{eq: expYm}
 Y_0^{(m)}\left[\begin{array}{l}
1 \\
0
\end{array}\right]=(2t)^{\alpha}(2 t)^{\beta \sigma_3}\left(I+\mathcal{O}\left(t^{2 \alpha+1}\right)\right) \Phi^{(0)}(0)\left[\begin{array}{l}
1 \\
0
\end{array}\right],
\end{equation}
where $ \Phi^{(0)}(0)$ is given in \eqref{eq: Phi0}. Using \eqref{solution4phi}, we have from \eqref{eq: expYm} that
\begin{equation}\label{Y_0^m}
    Y_0^{(m)}=\left[\begin{array}{ll}
(2 t)^{\alpha+\beta} e^{-\frac{\alpha+\beta}{2} \pi i} \frac{\Gamma(1+\alpha-\beta)}{\Gamma(1+2 \alpha)} & * \\
(2 t)^{\alpha-\beta} e^{-\frac{\alpha-\beta}{2} \pi i} \frac{\Gamma(1+\alpha+\beta)}{\Gamma(1+2 \alpha)} & *
\end{array}\right]\left(I+\mathcal{O}(t^{2\alpha+1})\right),
\end{equation}
for $\alpha>-1/2$, $2\alpha\not\in \mathbb{N}$ and $\beta\in i\mathbb{R}$. The case $2\alpha\in \mathbb{N}$ can be derived in a similar way and we skip the details.
Recalling  \eqref{y with rh} and \eqref{d with rh}, we derive the  asymptotics of $y(t)$ and $d(t)$ as $t\rightarrow0^+$:
\begin{equation}\label{y0}
    y(t)=\frac{\Gamma(1+\alpha-\beta)}{\Gamma(1+\alpha+\beta)} e^{-\beta \pi i}(2 t)^{2 \beta}\left(1+\mathcal{O}\left(t^{2 \alpha+1}\right)\right),
\end{equation}
\begin{equation} \label{d0}
    d(t)=2\alpha \frac{\Gamma(1+\alpha-\beta) \Gamma(1+\alpha+\beta)}{\Gamma^2(1+2 \alpha)} e^{-\alpha \pi i}(2 t)^{2 \alpha}\left(1+\mathcal{O}\left(t^{2 \alpha+1}\right)\right).
\end{equation}

Finally, we derive the asymptotics of $u_k$ and $v_k$  for $k\neq m$ by using \eqref{y0} and the following relations
\begin{equation}\label{eq:v_kexp}
    v_k(t)=\frac{\left(Y_0^{(k)}(t)\right)_{21}}{\left(Y_0^{(k)}(t)\right)_{11}} y(t),
\end{equation}
and
\begin{equation}\label{eq:u_kexp}
u_k(t)=\frac{\Tilde{c_k}\left(Y_0^{(k)}\right)_{11}\left(Y_0^{(k)}\right)_{21}}{v(t)},
\end{equation}
which can be seen from Proposition \ref{lax pair and cpv} and \eqref{Y asym at rk}. Similarly, we obtain after substituting  \eqref{construct L} and \eqref{value of E2} into the first equation in \eqref{eq:YXE} that
\begin{equation}\label{eq for Y_0^k}
     Y(z) =(2 t)^{\beta \sigma_3}\left(I+\mathcal{O}\left(t^{2 \alpha+1}\right)\right) \Phi^{(0)}(z)\left[\begin{array}{cc}1 & t^{2 \alpha} k\left(\frac{z}{t}\right) \\ 0 & 1\end{array}\right] z^{\alpha \sigma_3} \left[\begin{array}{cc}
1 & \frac{\sin (\alpha+\beta) \pi}{\sin 2 \alpha \pi} \\
0 & 1
\end{array}\right].
\end{equation}
From the integral expression of $k(z)$ in \eqref{expression of k}, we have
\begin{equation}\label{singularity at 2rk}
     k(z) \sim \operatorname{sgn}(r_k)c_ke^{\alpha\pi i}\left(2 |r_k|\right)^{2 \alpha} \ln \left(z-2 r_k\right),\end{equation}
     as $z \rightarrow 2 r_k $ for $k\neq m$.
 Comparing \eqref{Y asym at rk} with \eqref{eq for Y_0^k} and \eqref{singularity at 2rk}, we obtain
 \begin{equation}\label{eq:YoAsy}
  Y_0^{(k)}  =\left[\begin{array}{ll}(2 t)^{\alpha+\beta} e^{-\frac{\alpha+\beta}{2} \pi i} \frac{\Gamma(1+\alpha-\beta)}{\Gamma(1+2 \alpha)}r_k^\alpha & * \\ (2 t)^{\alpha-\beta} e^{-\frac{\alpha-\beta}{2} \pi i} \frac{\Gamma(1+\alpha+\beta)}{\Gamma(1+2 \alpha)}r_k^\alpha & *\end{array}\right](I+\mathcal{O}(t^{2\alpha+1}))
\end{equation}
for $r_k>0$,
and
 \begin{equation}\label{eq:YoAsy2}
  Y_0^{(k)}  =\left[\begin{array}{ll}(2 t)^{\alpha+\beta} e^{-\frac{\alpha+\beta}{2} \pi i} \frac{\Gamma(1+\alpha-\beta)}{\Gamma(1+2 \alpha)}\left(|r_k|e^{\pi i}\right)^\alpha & * \\ (2 t)^{\alpha-\beta} e^{-\frac{\alpha-\beta}{2} \pi i} \frac{\Gamma(1+\alpha+\beta)}{\Gamma(1+2 \alpha)}\left(|r_k|e^{\pi i}\right)^\alpha & *\end{array}\right](I+\mathcal{O}(t^{2\alpha+1}))
\end{equation}
for $r_k<0$.
Using \eqref{y0}, \eqref{eq:v_kexp}, \eqref{eq:u_kexp},  \eqref{eq:YoAsy} and \eqref{eq:YoAsy2}, we have  \eqref{CPVasym0}.
This completes the proof of the proposition.
\end{proof}

\section{Asymptotic Analysis as $t$ $\rightarrow$ $+\infty $}\label{sec:Asyinfty}
In this section, we carry out the Deift-Zhou nonlinear steepest descent analysis of the RH problem for $Y(z)$ as $t\rightarrow+\infty$.
Based on this, we obtain the  asymptotic behavior of a family of solutions to the coupled Painlev\'e V system and the associated Hamiltonian as $t\rightarrow+\infty$.

\subsection{Normalization}
In the first transformation, we normalize the behavior of $Y(z)$ at infinity.
\begin{equation}\label{eq:5.2}
    T(z)=Y(z, t) e^{i z t \sigma_3},
\end{equation}
Then, $T(z)$ satisfies the following RH problem.
\subsection*{RH problem for $T(z)$}
\begin{itemize}
    \item[\rm (1)] $T(z)$ is analytic in $\mathbb{C}\backslash\cup_{j=1}^7\Gamma_j$. The jump contours are shown in Figure \ref{fig:3}.

    \item[\rm (2)] $T(z)$ satisfies the jump condition
     \begin{equation}\label{eq:Tjump}
        T_{+}(z)=T_{-}(z)J_T(z),
    \end{equation}
   where    \begin{equation}
            J_T(z)= \begin{cases}{\left[\begin{array}{cc}
                              1 & 0 \\
                              e^{-(\alpha-\beta) \pi i} e^{2 i z t} & 1
                              \end{array}\right]},&{z\in \Gamma_1,} \\[.4cm]
                            {\left[\begin{array}{cc}
                              1 & 0 \\
                              e^{(\alpha-\beta) \pi i} e^{2 i z t} & 1
                             \end{array}\right],}&{z \in \Gamma_2,}\\[.4cm]
                            {\left[\begin{array}{cc}
1 & -e^{-(\alpha-\beta) \pi i} e^{-2 i z t} \\
0 & 1
\end{array}\right],}&{z\in \Gamma_3,}\\[.4cm]
{e^{2 \beta \pi i \sigma_3},}&{ z\in\Gamma_4,}\\[.4cm]
{\left[\begin{array}{cc}
1 & -e^{(\alpha-\beta) \pi i} e^{-2 i z t} \\
0 & 1
\end{array}\right],}&{ z\in\Gamma_5,}\\[.4cm]
{\left[\begin{array}{cc}
0 & -e^{-(\alpha-\beta) \pi i} e^{-2 i z t} \\
e^{(\alpha-\beta) \pi i} e^{2 i z t} & 1-\sigma(z t)
\end{array}\right],}&{z\in \Gamma_6,}\\[.4cm]
{\left[\begin{array}{cc}
0 & -e^{(\alpha-\beta) \pi i} e^{-2 i z t} \\
e^{-(\alpha-\beta) \pi i} e^{2 i z t} & 1-\sigma(z t)
\end{array}\right],}&{z\in \Gamma_7.}
\end{cases}
    \end{equation}

    \item[\rm (3)] As $z\rightarrow\infty$,
    \begin{equation}
        T(z)=\left(I+\frac{Y_1(t)}{z}+\mathcal{O}\left(\frac{1}{z^2}\right)\right) z^{-\beta \sigma_3}.
    \end{equation}

    \item[\rm (4)] The behavior of $T(z)$ near $r_k$, $k=0,\dots,n$, are the same as that of $Y(z)$, as given in (\ref{Y asym at rk}) and (\ref{Y asym at 0}).
\end{itemize}

\subsection{Opening  lenses and deformation}
It is readily observed that the off-diagonal entries in the jump matrices on $\Gamma_6$ and $\Gamma_7$ are highly oscillatory as $t\rightarrow+\infty$. To eliminate the oscillation, we factorize the jump matrices on $\Gamma_6$ and $\Gamma_7$ as follow:
\begin{equation}
   \begin{aligned}
     &\left[\begin{array}{cc}
0 & -e^{-(\alpha-\beta) \pi i} e^{-2 i z t} \\
e^{(\alpha-\beta) \pi i}e^{ 2 i z t} & 1-\sigma
\end{array}\right]=  \\
& \left[\begin{array}{cc}
1 & -\frac{1}{1-\sigma} e^{-(\alpha-\beta) \pi i} e^{-2 i z t} \\
0 & 1
\end{array}\right]\left[\begin{array}{cc}
\frac{1}{1-\sigma} & 0 \\
0 & 1-\sigma
\end{array}\right]
\left[\begin{array}{cc}
1 & 0 \\
\frac{1}{1-\sigma} e^{(\alpha-\beta) \pi i} e^{2 i z t} & 1
\end{array}\right], \quad z\in\Gamma_6,
\end{aligned}
\end{equation}

\begin{equation}
    \begin{aligned}
        &\left[\begin{array}{cc}
0 & -e^{(\alpha-\beta) \pi i} e^{-2 i z t} \\
e^{-(\alpha-\beta) \pi i} e^{2 i z t} & 1-\sigma
\end{array}\right]=\\
&\left[\begin{array}{cc}
1 & -\frac{1}{1-\sigma} e^{(\alpha-\beta) \pi i} e^{-2 i z t} \\
0 & 1
\end{array}\right]\left[\begin{array}{cc}
\frac{1}{1-\sigma} & 0 \\
0 & 1-\sigma
\end{array}\right]\left[\begin{array}{cc}
1 & 0 \\
\frac{1}{1-\sigma} e^{-(\alpha-\beta) \pi i}e^{2 i z t} & 1
\end{array}\right], \quad z\in\Gamma_7.
    \end{aligned}
\end{equation}
We also deform the contour $\Gamma_6$ and $\Gamma_7$ into lens-shape regions as shown in Figure \ref{fig:S}.
\begin{figure}[h]
    \centering
    \includegraphics[width=1.0\textwidth]{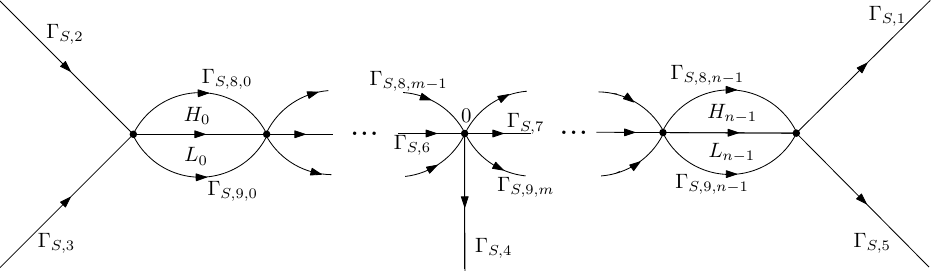}
    \caption{\small{Jump contours and regions for $S$}}
    \label{fig:S}
\end{figure}

Based on the factorization of jumps and deformation of contours,  we introduce the second transformation
\begin{equation}\label{eq:5.8}
    S(z)=T(z)\left\{\begin{array}{ll}
{\left[\begin{array}{cc}
1 & 0 \\
-\frac{1}{1-\sigma} e^{(\alpha-\beta) \pi i} e^{2 i z t} & 1
\end{array}\right]}, & {z\in H_k, 0\leq k<m,} \\
{\left[\begin{array}{cc}
1 & 0 \\
-\frac{1}{1-\sigma} e^{-(\alpha-\beta) \pi i} e^{2 i z t} & 1
\end{array}\right]}, & {z\in H_k, m\leq k < n,}\\
{\left[\begin{array}{cc}
1 & -\frac{1}{1-\sigma} e^{-(\alpha-\beta) \pi i} e^{-2 i z t} \\
0 & 1
\end{array}\right]},& {z\in L_k, 0\leq k<m,} \\
{\left[\begin{array}{cc}
1 & -\frac{1}{1-\sigma} e^{(\alpha-\beta) \pi i} e^{-2 i z t} \\
0 & 1
\end{array}\right]},&{z\in L_k, m\leq k < n,}\\
I & \text{ elsewhere. }
\end{array}\right.
\end{equation}
Then, $S(z)$  satisfies the following RH problem.
\subsection*{RH problem for $S(z)$}
\begin{itemize}
    \item[\rm (1)] $S(z)$ is analytic in $\mathbb{C}\backslash\Gamma_S$ as shown in Figure \ref{fig:S}.

    \item[\rm (2)] $S(z)$ satisfies the jump condition
     \begin{equation}\label{eq:Sjump}
        S_{+}(z)=S_{-}(z)J_S(z),
    \end{equation}
   where
    \begin{equation}\label{eq:5.9}
        J_S(z)=\left\{\begin{array}{ll}
J_T(z), &z\in \bigcup_{j=1}^5 \Gamma_{S,j}, \\
{\left[\begin{array}{cc}
1 & 0\\
\frac{1}{1-\gamma_k} e^{(\alpha-\beta) \pi i}e^{2 i z t} & 1
\end{array}\right]},& {\text{on } \Gamma_{S, 8 ,k}, k<m},\\
{\left[\begin{array}{cc}
1 & 0 \\
\frac{1}{1-\gamma_k} e^{-(\alpha-\beta) \pi i} e^{2 i z t} & 1
\end{array}\right]},&{\text{on } \Gamma_{S,8,k},k\geqslant m},\\
{\left[\begin{array}{cc}
1 & -\frac{1}{1-\gamma_k} e^{-(\alpha-\beta) \pi i} e^{-2 i z t} \\
0 & 1
\end{array}\right]}, &{\text{on } \Gamma_{S,9,k}, k<m},\\
{\left[\begin{array}{cc}
1 & -\frac{1}{1-\gamma_k} e^{(\alpha-\beta) \pi i} e^{-2 i z t} \\
0 & 1
\end{array}\right]},&{\text{on } \Gamma_{S,9,k}, k\geqslant m},\\
{\left[\begin{array}{cc}
\frac{1}{1-\sigma(z t)} & 0 \\
0 & 1-\sigma(z t)
\end{array}\right]},&{\text{on }\Gamma_{S,6}\cup \Gamma_{S,7}.}
\end{array}\right.
    \end{equation}
 \item[\rm (3)] As $z\to\infty$, we have
 \begin{equation}
        S(z)=\left(I+\frac{Y_1(t)}{z}+\mathcal{O}\left(\frac{1}{z^2}\right)\right) z^{-\beta \sigma_3}.
    \end{equation}
    \item[\rm (4)] The asymptotics of $S(z)$ are the same as $T(z)$ near $r_k$, $k=0,\dots,n$.
    \end{itemize}

\subsection{Global parametrix}
The jumps for $S(z)$ tend to the identity matrix exponentially fast except the ones on the real axis as $t\to\infty$. Therefore, we arrive at
the following approximate RH problem as $t\rightarrow+\infty$.
\subsection*{RH problem for $P^{(\infty)}(z)$}
\begin{itemize}
    \item[\rm (1)] $P^{(\infty)}(z)$ is analytic for $z\in \mathbb{C}\backslash \{[r_0,r_n]\cup \Gamma_{S,4}\}$.

    \item[\rm (2)] $P^{(\infty)}(z)$ satisfies the following jump condition
    \begin{equation}\label{jump for S inf}
        P_{+}^{(\infty)}(z)=P_{-}^{(\infty)}(z)\left\{\begin{array}{ll}
{\left[\begin{array}{cc}
\frac{1}{1-\sigma} & 0 \\
0 & 1-\sigma
\end{array}\right],} & z \in\left(r_0, r_n\right),\\[.4cm]
e^{2 \beta \pi i \sigma_3}, & z \in \Gamma_{S, 4},
\end{array}\right.
    \end{equation}

    \item[\rm (3)] As $z\rightarrow\infty$,
    \begin{equation}\label{asym for Sinf at inf}
        P^{(\infty)}(z)=\left(I+\mathcal{O}\left(\frac{1}{z}\right)\right) z^{-\beta \sigma_3}.
    \end{equation}
\end{itemize}
The solution $P^{(\infty)}(z)$ can be constructed explicitly as below
\begin{equation}\label{Sinf}
    P^{(\infty)}(z)=\prod_{k=0}^{n-1}\left(\frac{z-r_{k+1}}{z-r_k}\right)^{-\frac{\ln \left(1-\gamma_k\right)}{2 \pi i}\sigma_3}  z^{-\beta\sigma_3},
\end{equation}
where $\left(\frac{z-r_{k+1}}{z-r_k}\right)^{-\frac{\ln \left(1-\gamma_k\right)}{2 \pi i}}$  takes the principal branch for $0\leq k\leq n-1$ and the branch of $z^{-\beta}$ is taken such that arg$z\in(-\frac{\pi}{2},\frac{3\pi}{2})$.

\subsection{Local parametrices}

Near the points $r_k$, $k=0,\dots,n$, the solution of the RH problem $S(z)$ can not be approximated by the global parametrix $P^{(\infty)}(z)$. In this section,  we construct the local parametrices in the neighborhoods of $r_k$, $k=0,\dots,n$.

\subsection*{RH problem for $P^{(r_k)}(z)$}
\begin{itemize}
  \item[\rm (1)]  $P^{(r_k)}(z)$ is analytic for $z\in U(r_k, \varepsilon) \setminus \Gamma_S$ with some constant $\varepsilon>0$.
      \item[\rm (2)] $P^{(r_k)}(z)$ has the same jumps as $S(z)$ for $z \in \Gamma_S\cap U(r_k, \varepsilon) $.
      \item[\rm (3)] As $t\rightarrow +\infty$, we have
    \begin{equation}\label{SMatching}S(z)=\left(I+O\left(\frac{1}{t}\right)\right)P^{(r_k)}(z),
    \end{equation} where the error term is uniform for $z\in \partial U(r_k, \varepsilon)$.
\end{itemize}

For $k\neq m$, the solution can be constructed by using the confluent hypergeometric parametrix $\Phi^{(\mathrm{CHF})}$
 \begin{equation}\label{S^rk}
    \begin{aligned}
P^{\left(r_k\right)}(z)=& E^{\left(r_k\right)}(z) \Phi^{(\mathrm{CHF})}\left(2it(z-r_k) ; 0, b_k\right)\left[\left(1-\gamma_k\right)\left(1-\gamma_{k-1}\right)\right]^{-\frac{\sigma_3}{4}}\\
& \times \begin{cases}e^{-\operatorname{sgn}(r_k) \frac{\alpha-\beta}{2}\pi i \sigma_3} e^{i z t \sigma_3}, &  \operatorname{Im} z>0, \\
{\left[\begin{array}{cc}
0 & 1 \\
-1 & 0
\end{array}\right] e^{-\operatorname{sgn}(r_k)\frac{\alpha-\beta}{2} \pi i \sigma_3} e^{i z t \sigma_3}}, &  \operatorname{Im} z<0,\end{cases}
    \end{aligned}
\end{equation} where \begin{equation}\label{E^rk}
    \begin{aligned}
 E^{\left(r_k\right)}(z)=&\prod_{j \neq k}\left(z-r_j\right)^{-b_j\sigma_3} z^{-\beta \sigma_3}(2 i t)^{b_k \sigma_3} e^{-i r_k t \sigma_3} e^{\operatorname{sgn}(r_k)\frac{\alpha-\beta}{2}\pi i \sigma_3}\\
& \times \begin{cases}\left(1-\gamma_{k-1}\right)^{-\frac{\sigma_3}{4}}\left(1-\gamma_k\right)^{\frac{3}{4} \sigma_3}, & \operatorname{Im} z>0, \\
{\left[\left(1-\gamma_k\right)\left(1-\gamma_{k-1}\right)\right]^{-\frac{\sigma_3}{4}},} & \operatorname{Im} z<0,\end{cases} \\
&
\end{aligned}
\end{equation}
with $\gamma_{-1}=\gamma_n=0$.
While for $r_m = 0$, we construct $P^{(r_m)}$ as follows
\begin{equation}\label{S^0}
    \begin{aligned}
P^{\left(r_m\right)}(z)= & E^{\left(r_m\right)}(z) \Phi^{(\mathrm{CHF})}\left(2it z; \alpha, \beta+b_m\right)\left[\left(1-\gamma_{m-1}\right)\left(1-\gamma_m\right)\right]^{-\frac{\sigma_3}{4}} \\
& \times\begin{cases}e^{\frac{\alpha}{2} \pi i \sigma_3} e^{i z t \sigma_3}, & \arg z \in\left(0, \frac{\pi}{2}\right), \\
e^{-\frac{\alpha}{2} \pi i \sigma_3} e^{i z t \sigma_3}, & \arg z \in\left(\frac{\pi}{2}, \pi\right),\\
e^{\frac{\alpha+2 \beta}{2} \pi i \sigma_3}\left[\begin{array}{cc}
0 & 1 \\
-1 & 0
\end{array}\right] e^{i z t \sigma_3}, &  \arg z \in\left(-\pi, \frac{\pi}{2}\right), \\
e^{-\frac{\alpha+2 \beta}{2} \pi i \sigma_3}\left[\begin{array}{cc}
0 & 1 \\
-1 & 0
\end{array}\right] e^{i z t \sigma_3}, &  \arg z \in\left(-\frac{\pi}{2}, 0\right),\end{cases}
\end{aligned}
\end{equation}
where
\begin{equation}\label{E^0}
    \begin{aligned}
E^{\left(r_m\right)}(z)= &\prod_{j \neq m}\left(z-r_j\right)^{-b_j\sigma_3}(2 i t)^{\left(\beta+b_m\right) \sigma_3} \\
 &\times\begin{cases}e^{-\beta \pi i \sigma_3}\left(1-\gamma_{m-1}\right)^{-\frac{\sigma_3}{4}}\left(1-\gamma_m\right)^{\frac{3}{4} \sigma_3}, &  \arg z\in(0,\pi), \\
e^{\beta \pi i \sigma_3}\left[\left(1-\gamma_{m-1}\right)\left(1-\gamma_m\right)\right]^{-\frac{\sigma_3}{4}}, & \arg  z \in\left(-\pi,-\frac{\pi}{2}\right), \\
e^{-\beta \pi i \sigma_3}\left[\left(1-\gamma_{m-1}\right)\left(1-\gamma_m\right)\right]^{-\frac{\sigma_3}{4}},& \arg  z \in\left(-\frac{\pi}{2}, 0\right).\end{cases} \\
&
\end{aligned}
\end{equation}
The branch for  $z^{\beta}$ is taken such that  $\arg z \in (-\frac{\pi}{2}, \frac{3\pi}{2})$. While the  branches of the other power functions in  \eqref{S^rk}, \eqref{E^rk}, \eqref{S^0} and \eqref{E^0}  take the principal branches.

\subsection{Final transformation}
In the final transformation, we define
\begin{equation}\label{def R(eta)}
    R(z)= \begin{cases}{S(z)P^{\left(r_k\right)}(z)}^{-1}, &  z\in U\left(r_k, \varepsilon\right),~~0\leq k\leq n,\\S(z) {P^{(\infty)}(z)}^{-1} & \text { elsewhere.}\end{cases}
\end{equation}
\subsection*{RH problem for $R(z)$}
\begin{itemize}
   \item[\rm (1)]The jump contours for $R$ are shown in Figure \ref{fig:R}
    \begin{figure}[h]
        \centering
       \includegraphics[width=0.95\textwidth]{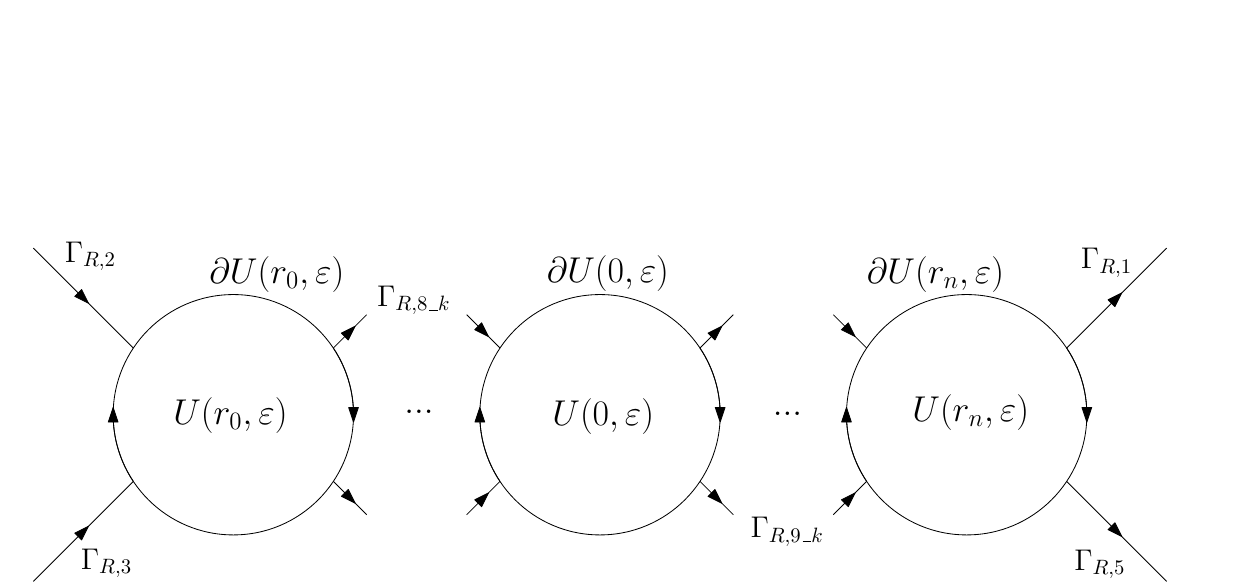}
        \caption{\small{Jump contours
$\Gamma_R$ and regions for $R$}}
        \label{fig:R}
    \end{figure}

    \item[\rm (2)] $R(z)$ satisfies the jump condition
     \begin{equation}\label{eq:Rjump}
        R_{+}(z)=R_{-}(z)J_R(z),
    \end{equation}
where
     \begin{equation}\label{def:J_R}
        J_R(z)= \begin{cases}
        {P^{(r_k)}(z) {P^{(\infty)}(z)}^{-1}}, &  z\in\partial U(r_k, \varepsilon), ~~0\leq k\leq n, \\
        P^{(\infty)}(z) J_S(z) {P^{(\infty)}(z)}^{-1} &  \text {elsewhere}.\end{cases}
    \end{equation}

  \item[\rm (3)] As $z \rightarrow \infty$,
    \begin{equation}
        R(z)=I+\mathcal{O}(\frac{1}{z}).
    \end{equation}
\end{itemize}
From \eqref{eq:5.9} and \eqref{SMatching}, we have
 \begin{equation}\label{eq:RjumpEst}
       J_R(z)=I+\mathcal{O}\left(\frac{1}{t}\right),
    \end{equation}
where the error term is uniform for $z $ on the jump contours for $R(z)$, $\gamma_j$ in any compact subset of $[0,1)$, and $r_0<\dots<r_{m}=0<\dots<r_n$ in any compact subset of $\mathbb{R}$ and bounded away from each other.
Therefore, we have as $t\rightarrow+\infty$
 \begin{equation}\label{expansion of R1}
    R(z)=I+\mathcal{O}\left(\frac{1}{t}\right),
\end{equation}
where the error term is uniform for $z $ in the complex plane.
For later use, we derive from  \eqref{Sinf}, \eqref{S^rk}, \eqref{E^rk} and \eqref{phi aym inf} the following expansion
\begin{equation}\label{expansion of JR by t}
    J_R(z)=I+\frac{J_{R, 1}(z)}{t}+\mathcal{O}\left(\frac{1}{t^2}\right),
\end{equation}
 where
 \begin{equation}\label{JR,1 k}
    J_{R, 1}(z)=-\frac{b_k^2}{2 i\left(z-r_k\right)} E^{\left(r_k\right)}(z)\left[\begin{array}{cc}
1 & \frac{\Gamma\left(-b_k\right)}{\Gamma\left(1+b_k\right)} \\
\frac{-\Gamma\left(b_k\right)}{\Gamma\left(1-b_k\right)} & -1
\end{array}\right] {E^{\left(r_k\right)}(z)}^{-1},
\end{equation}
for $z\in\partial U(r_k, \varepsilon)$, $k\neq m$ and
 \begin{equation}\label{JR,1 0}
    J_{R, 1}(z)=\frac{\alpha^2-\left(\beta+b_m\right)^2}{2 i z} E^{\left(r_m\right)}(z)\left[\begin{array}{cc}
1 & \frac{\Gamma\left(\alpha-\beta-b_m\right)}{\Gamma\left(1+\alpha+\beta+b_m\right)} \\
\frac{-\Gamma\left(\alpha+\beta+b_m\right)}{\Gamma\left(1+\alpha-\beta-b_m\right)} & -1
\end{array}\right]{E^{(r_m)}(z)}^{-1},
\end{equation}
for $z\in\partial U(0, \varepsilon)$.
 Therefore, we have as $t\rightarrow+\infty$
 \begin{equation}\label{expansion of R by t}
    R(z)=I+\frac{R_1(z)}{t}+\mathcal{O}\left(\frac{1}{t^2}\right),
\end{equation}
 where
 \begin{equation}\label{integral expression of R1}
    R_1(z)=\frac{1}{2 \pi i} \sum_{k=0}^n \int_{\partial U\left(r_k, \xi\right)} \frac{J_{R, 1}(x)}{x-z} d x.
\end{equation}
After some residue  calculation, we have
  \begin{equation}\label{R1,11}
    \left(R_1(z)\right)_{11}= \begin{cases}-\sum_{j=0}^n \frac{b_j^2}{2 i\left(z-r_j\right)}+\frac{\alpha^2-\beta^2-2 \beta b_m}{2 i z}, &  z \in \mathbb{C} \backslash \bigcup^n_{j=0}{U(r_j,\varepsilon),} \\
     -\sum_{j \neq k} \frac{b_j^2}{2 i\left(z-r_j\right)}+\frac{\alpha^2-\beta^2-2 \beta b_m}{2 i z},&  z \in U\left(r_k, \varepsilon\right),~  k\neq m, \\ -\sum_{j \neq m} \frac{b_j^2}{2 i\left(z-r_j\right)}, &  z \in U\left(r_m, \varepsilon\right).\end{cases}
\end{equation}

{For later use, we derive the following estimate of derivatives of $J_R(z)$ with respect to $\gamma_j$ as $t\rightarrow +\infty$,}
\begin{equation}\label{estimateJr}
    {\left\|\left(\prod_{j=0}^{n-1}\partial_{\gamma_j}^{k_j}\right)(J_R(z)-I)\right\|_{L^2(\Gamma_R)\cap L^{\infty}(\Gamma_R)}\leq  c  \frac{(\ln t)^k}{t},}
\end{equation}
where $k=\sum_{j=0}^{n-1}k_j$ and the constant $c>0$.  The factor $(\ln t)^k$ is due to the factors $t^{\pm b_j}$ in the entries of $J_R(z)$, cf. \eqref{S^rk}-\eqref{E^0} and \eqref{def:J_R}. It is worth pointing out  that the estimate holds uniformly for $\gamma_j$ in any compact subset of $[0,1)$, and uniformly for $r_0<\dots<r_{m}=0<\dots<r_n$ in any compact subset of $\mathbb{R}$ and bounded away from each other.
 Then, by \eqref{integral expression of R1}, we obtain
\begin{equation}\label{derivativeR1}
    \left(\prod_{j=0}^{n-1}\partial_{\gamma_j}^{k_j}\right) R_1(z)=\mathcal{O}((\ln t)^k), \quad \text{as } t\rightarrow +\infty.
\end{equation}

\subsection{Asymptotic behaviors as $t \rightarrow +\infty$}
Using the results of the nonlinear steepest descent analysis of the RH problem for $Y(z)$, we obtain the asymptotics of  a family of special solutions to the coupled Painlev\'e V system and the Hamiltonian associated as $t \rightarrow +\infty$.

\begin{pro}\label{pro: CPVasyinfty}
 Under the same conditions as in Proposition \ref{pro: CPVasy0}, the family of solutions of  the coupled Painlev\'e V system  \eqref{eq:CPV} with the asymptotic behaviors as $t\to 0^+$ given in Proposition \ref{pro: CPVasy0} have the asymptotics behaviors as $t\rightarrow+\infty$
     \begin{equation}\label{CPVasyminf}
          \left\{\begin{array}{l}
\begin{aligned}

u_k(t)= & \operatorname{sgn}(r_k)~c_k \cdot \frac{\Gamma^2\left(1-b_k\right) \Gamma(1+\alpha+\beta+b_m)}{\Gamma\left(1+\alpha-\beta-b_m\right)} \prod_{j \neq m , k}\left(\frac{r_k-r_j}{r_m-r_j}\right)_{+}^{-2 b_j}|r_k|^{2\left(b_k-b_m-\beta\right)} \\
&\left[\left(1-\gamma_{m-1}\right)\left(1-\gamma_m\right)\right]^{-\frac{1}{2}} e^{\operatorname{sgn}(r_k)\pi i\left(b_k+b_m+\alpha+\beta\right)}(2 t)^{2\left(b_k-b_m-\beta\right)}e^{-2 i t r_k}(1+\mathcal{O}(t^{-1})),
\end{aligned}\\
\begin{aligned}
v_k(t)=& \operatorname{sgn}(r_k)~ \frac{(\gamma_k-\gamma_{k-1}) }{2 \pi i c_k }\frac{\Gamma\left(1+\alpha-\beta-b_m\right) \Gamma\left(1+b_k\right)}{\Gamma\left(1+\alpha+\beta+b_m\right) \Gamma\left(1-b_k\right)} \prod_{j \neq m, k}\left(\frac{r_k-r_j}{r_m-r_j}\right)_{+}^{2 b_j}
|r_k|^{2\left(\beta+b_m- b_k\right)} \\
&
\left[\frac{\left(1-\gamma_{m-1}\right)\left(1-\gamma_m\right)}{\left(1-\gamma_{k-1}\right)\left(1-\gamma_k\right)}\right]^{\frac{1}{2}} e^{-\operatorname{sgn}(r_k)\pi i\left(b_k+b_m+\alpha+\beta\right)}(2 t)^{2\left(\beta+b_m-b_k\right)} e^{2 i t r_k}\left(1+\mathcal{O}\left(t^{-1}\right)\right),
\end{aligned}
\end{array}\right.
     \end{equation}
     for $k\neq m$ and  $c_k$ is defined in (\ref{defck}). Here the power functions take the principal branches.
Moreover, the Hamiltonian associated with this family of solutions has the asymptotics behavior as $t\rightarrow+\infty$
\begin{equation}\label{Hasyminf}
    H(t)=\sum_{k=0}^n 2 i b_k r_k-\frac{\sum_{k=0}^n b_k^2+2 \beta b_m}{t}+\mathcal{O}\left(t^{-2}\right).
\end{equation}
\end{pro}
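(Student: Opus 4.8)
The plan is to extract the large-$t$ behaviour of the solutions of Proposition~\ref{pro: CPVasy0} directly from the steepest descent analysis carried out above. I would trace back the chain of transformations $Y\to T\to S\to R$ of \eqref{eq:5.2}, \eqref{eq:5.8} and \eqref{def R(eta)}, so as to express $Y(z)$ in terms of $R(z)$, the global parametrix \eqref{Sinf} and the local parametrices \eqref{S^rk}--\eqref{E^0}, and then read off the two pieces of RH data appearing in the statement: the coefficient $(Y_1)_{11}$, which gives the Hamiltonian through \eqref{defH}, and the matrices $Y_0^{(k)}$ and $Y_0^{(m)}$, which give $u_k$ and $v_k$ through \eqref{y with rh}, \eqref{eq:v_kexp} and \eqref{eq:u_kexp}. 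Throughout I would use $R(z)=I+R_1(z)/t+\mathcal{O}(t^{-2})$ from \eqref{expansion of R by t} together with the explicit $(R_1(z))_{11}$ in \eqref{R1,11}, and the fact that $b_k\in i\mathbb{R}$, so that the factors $(2t)^{\pm b_k}$ and $e^{\pm i r_k t}$ remain bounded.

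For the Hamiltonian, note that outside the disks $S(z)=T(z)=(I+Y_1/z+\cdots)z^{-\beta\sigma_3}$ while at the same time $S(z)=R(z)P^{(\infty)}(z)$. Since $P^{(\infty)}(z)z^{\beta\sigma_3}$ is analytic near infinity and tends to $I$, comparing the $z^{-1}$ coefficients identifies $(Y_1)_{11}$ as the sum of the $(1,1)$-entries of $\lim_{z\to\infty}z(R(z)-I)$ and of $\lim_{z\to\infty}z(P^{(\infty)}(z)z^{\beta\sigma_3}-I)$. By \eqref{R1,11} the former equals $\tfrac{1}{2it}\bigl(\alpha^2-\beta^2-2\beta b_m-\sum_j b_j^2\bigr)+\mathcal{O}(t^{-2})$, while the latter telescopes, using $b_j=\tfrac{1}{2\pi i}\ln\tfrac{1-\gamma_{j-1}}{1-\gamma_j}$ and $\gamma_{-1}=\gamma_n=0$, to $\sum_k b_k r_k$. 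Substituting into \eqref{defH} and simplifying then yields \eqref{Hasyminf}. This step is short.

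For $u_k$ and $v_k$ with $k\neq m$, I would fix a neighbourhood of $r_k$ and undo the transformations there, obtaining $Y(z)=R(z)\,P^{(r_k)}(z)\,\Lambda_k(z)^{-1}e^{-izt\sigma_3}$, where $\Lambda_k$ collects the relevant lens triangular factors of \eqref{eq:5.8} near $r_k$. Inserting \eqref{S^rk}--\eqref{E^rk} and the local behaviour of $\Phi^{(\mathrm{CHF})}(\zeta;0,b_k)$ at $\zeta=0$ recorded in the Appendix, then letting $z\to r_k$ and comparing with \eqref{Y asym at rk}, identifies $Y_0^{(k)}$ as $R(r_k)\,E^{(r_k)}(r_k)$ times the leading matrix of the confluent hypergeometric parametrix times an explicit constant right factor. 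Since $E^{(r_k)}(r_k)$ is diagonal and $R(r_k)=I+\mathcal{O}(t^{-1})$, the entries $(Y_0^{(k)})_{11}$ and $(Y_0^{(k)})_{21}$ are, to relative order $t^{-1}$, the corresponding entries of that constant matrix multiplied by $(2t)^{\pm b_k}e^{\mp i r_k t}\prod_{j\neq k}(r_k-r_j)^{\mp b_j}|r_k|^{\mp\beta}$ and suitable powers of $1-\gamma_k,\,1-\gamma_{k-1}$. Running the same computation at $z=r_m=0$, now with $\Phi^{(\mathrm{CHF})}(\zeta;\alpha,\beta+b_m)$ and \eqref{S^0}--\eqref{E^0}, yields $Y_0^{(m)}$ and hence $y(t)=(Y_0^{(m)})_{11}/(Y_0^{(m)})_{21}$, which is responsible for the Gamma ratios $\Gamma(1+\alpha\mp\beta\mp b_m)$ and for the factors $(2t)^{\pm 2(\beta+b_m)}$ and $\prod_{j\neq m}|r_j|^{\mp 2b_j}$. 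Feeding $Y_0^{(k)}$ and $y(t)$ into \eqref{eq:v_kexp} and \eqref{eq:u_kexp}, and using $\widetilde{c}_k=\operatorname{sgn}(r_k)c_k e^{\operatorname{sgn}(r_k)\alpha\pi i}$, one collects all the $t$-, $r$- and $\gamma$-dependent factors of \eqref{CPVasyminf}, with the overall error $1+\mathcal{O}(t^{-1})$ coming from $R(r_k),R(0)=I+\mathcal{O}(t^{-1})$.

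The main obstacle is precisely the bookkeeping of this last step. The parametrices \eqref{S^rk} and \eqref{S^0} are defined piecewise over the sectors around $r_k$ and the origin, their power functions obey several different branch conventions, and the two sign cases $r_k>0$ and $r_k<0$ must be handled separately; one must verify that the leading matrix of $\Phi^{(\mathrm{CHF})}$ is composed correctly with $\Lambda_k^{-1}$ and with the diagonal factors $e^{\pm\frac{\alpha-\beta}{2}\pi i\sigma_3}$ and $[(1-\gamma_k)(1-\gamma_{k-1})]^{-\sigma_3/4}$, and that the final expressions are independent of the sector chosen. By comparison, the Hamiltonian computation and the uniformity of the error terms, the latter inherited from the uniformity in \eqref{eq:RjumpEst}--\eqref{expansion of R by t}, are routine.
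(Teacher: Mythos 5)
Your proposal follows essentially the same route as the paper's proof: unfold $Y=R\cdot P^{(\infty)}e^{-izt\sigma_3}$ at infinity and use $(R_1)_{11}$ from \eqref{R1,11} together with \eqref{defH} to get \eqref{Hasyminf}, then unfold $Y=R\cdot P^{(r_k)}e^{-izt\sigma_3}$ (respectively $P^{(r_m)}$) near each $r_k$ and the origin to read off $Y_0^{(k)}$ and $Y_0^{(m)}$, feeding them into \eqref{y with rh}, \eqref{eq:v_kexp} and \eqref{eq:u_kexp}. The details you flag as the main bookkeeping burden (sectors, branches, the sign cases $r_k\gtrless 0$) are exactly the computations carried out in \eqref{asym of Y_0^m}--\eqref{asym of Y_0^k}, so the plan is correct and matches the paper.
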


\begin{proof}

Tracing back the transformations $Y \longmapsto N \longmapsto S \longmapsto R$,  we can derive the asymptotics of $Y(z)$ uniformly for $z$ in several different regions covering   the whole complex plane as $t\rightarrow+\infty$.

As $t\rightarrow +\infty$, we obtain from  (\ref{eq:5.2}), (\ref{eq:5.8}), (\ref{Sinf}), (\ref{def R(eta)}) and (\ref{expansion of R by t}) that \begin{equation}\label{Y with RS at inf}
    \begin{aligned}
Y(z) & =R(z) P^{(\infty)}(z) e^{-i z t \sigma_3} \\
& =\left(I+\frac{R_1(z)}{t}+\mathcal{O}\left(\frac{1}{t^2}\right)\right) \prod_{k=0}^{n-1}\left(\frac{z-r_{k+1}}{z-r_k}\right)^{-\frac{\ln \left(1-\gamma_k\right)}{2 \pi i} \sigma_3}z^{-\beta \sigma_3} e^{-i z t \sigma_3},
\end{aligned}
\end{equation}
where the error term is uniform for $|z-r_k|>\varepsilon$, $k=1,\dots,n$. Here, the branch for  $z^{\beta}$ is taken such that  $\arg z \in (-\frac{\pi}{2}, \frac{3\pi}{2})$ and the other power functions take the principal branches. Using(\ref{Y asym at inf}),  (\ref{R1,11}) and (\ref{Y with RS at inf}), we have
\begin{equation}\label{asym of Y1,11}
    \left(Y_1\right)_{11}=-\frac{1}{2 \pi i} \sum_{k=0}^{n-1}\left(r_k-r_{k+1}\right) \ln \left(1-\gamma_k\right)+\frac{\alpha^2-\beta^2-2 \beta b_m-\sum_{k=0}^n b_k^2}{2 i t} +\mathcal{O}\left(\frac{1}{t^2}\right).
\end{equation}
Substitute (\ref{asym of Y1,11}) into \eqref{defH}, we obtain  (\ref{Hasyminf}).

From (\ref{eq:5.2}), (\ref{eq:5.8}), (\ref{S^0}), (\ref{def R(eta)}) and (\ref{expansion of R by t}), we derive as $t\rightarrow+\infty$
\begin{equation}\label{eq:YExp0}
    \begin{aligned}
Y(z)= & R(z) P^{\left(r_m\right)}(z) e^{-i z t \sigma_3} \\
= & \left(I+\frac{R_1(z)}{t}+\mathcal{O}\left(\frac{1}{t^2})\right)\right) E^{\left(r_m\right)}(z) \Phi\left(2itz ; \alpha, \beta+b_m\right) \\
& {\left[\left(1-\gamma_{m-1}\right)\left(1-\gamma_m\right)\right]^{-\frac{\sigma_3}{4}} e^{\frac{\alpha}{2} \pi i \sigma_3}, }
\end{aligned}
\end{equation}
where $\arg z \in\left(0, \frac{\pi}{2}\right)$ and the error term is uniform for $|z|< \varepsilon$.
Using  (\ref{Y asym at 0}), (\ref{E^0}) and \eqref{R1,11}, we obtain from  \eqref{eq:YExp0} that
\begin{equation}\label{asym of Y_0^m}
    \begin{aligned}
 \left(Y_0^{(m)}\right)_{11}=&\frac{\Gamma\left(1+\alpha-\beta-b_m\right)}{\Gamma(1+2 \alpha)} \prod_{j \neq m}\left(-r_j\right)_{+}^{-b_j}e^{-i \frac{\pi}{2}(\alpha+\beta+b_m) }(2 t)^{\alpha+\beta+b_m}\left(1+\mathcal{O}\left(\frac{1}{t}\right)\right), \\
 \left(Y_0^{(m)}\right)_{21}=&\frac{\Gamma\left(1+\alpha+\beta+b_m\right)}{\Gamma(1+2 \alpha)} \prod_{j \neq m}\left(-r_j\right)_{+}^{b_j}e^{-i\frac{\pi}{2}(\alpha-\beta-b_m)}(2  t)^{\alpha-\beta-b_m}\\
 &\left[\left(1-\gamma_{m-1}\right)\left(1-\gamma_m\right)\right]^{-\frac{1}{2}}
 \left(1+\mathcal{O}\left(\frac{1}{t}\right)\right).
\end{aligned}
\end{equation}
Therefore, we have from  \eqref{y with rh}, \eqref{d with rh} and \eqref{asym of Y_0^m} that as $t\rightarrow+\infty$
\begin{equation}\label{y inf}
y(t)=\frac{\Gamma\left(1+\alpha-\beta-b_m\right)}{\Gamma\left(1+\alpha+\beta+b_m\right)} \prod_{j \neq m}\left(-r_j\right)_{+}^{-2 b_j}e^{-(\beta+b_m)\pi i}(2 t)^{2\left(\beta+b_m\right)}{\left[\left(1-\gamma_{m-1}\right)\left(1-\gamma_m\right)\right]^{\frac{1}{2}}\left(1+\mathcal{O}\left(\frac{1}{t}\right)\right),}
\end{equation}
and
\begin{equation}\label{d inf}
d(t)= 2\alpha \frac{\Gamma\left(1+\alpha-\beta-b_m\right) \Gamma\left(1+\alpha+\beta+b_m\right)}{\Gamma^2(1+2 \alpha)}e^{-\alpha \pi i}(2t)^{2 \alpha}\left[\left(1-\gamma_m\right)\left(1-\gamma_{m-1}\right)\right]^{-\frac{1}{2}}
\left(1+\mathcal{O}\left(
\frac{1}{t}\right)\right).
\end{equation}

Similarly, we have from  (\ref{eq:5.2}), (\ref{eq:5.8}) and (\ref{def R(eta)})  that as $t\rightarrow+\infty$
\begin{equation}\label{eq:5.41}
    Y(z)=R(z) P^{\left(r_k\right)}(z) e^{-i z t \sigma_3},
\end{equation}
where the error term is uniform for $|z-r_k|< \varepsilon$. From  \eqref{eq:5.2}, \eqref{eq:5.8}, \eqref{S^0}, \eqref{E^0} and \eqref{expansion of R by t}), we obtain
\begin{equation}\label{asym of Y_0^k}
    \begin{aligned}
\left(Y_0^{(k)}(t)\right)_{11}= & \Gamma\left(1-b_k\right) \prod_{j \neq k}\left(r_k-r_j\right)_{+}^{-b_j} r_k^{-\beta} e^{-\frac{b_k}{2}\pi i}(2t)^{b_k} e^{-i t r_k} (1+\mathcal{O}(\frac{1}{t})),\\
\left(Y_0^{(k)}(t)\right)_{21}= & \Gamma\left(1+b_k\right) \prod_{j \neq k}\left(r_k-r_j\right)_{+}^{b_j}  r_k^{\beta} e^{\frac{b_k}{2}\pi i}(2  t)^{-b_k} e^{-\operatorname{sgn}\left(r_k\right)(\alpha-\beta) \pi i}\\
&\left[\left(1-\gamma_{k-1}\right)\left(1-\gamma_k\right)\right]^{-\frac{1}{2}}(1+\mathcal{O}(\frac{1}{t})).
\end{aligned}
\end{equation}
From \eqref{eq:v_kexp}, \eqref{eq:u_kexp} and \eqref{asym of Y_0^k},  we have (\ref{CPVasyminf}). This completes the proof of the proposition.
\end{proof}

\section{Proof of main theorems}\label{sec:proofs}
\subsection{Proof of Theorem \ref{integralexpression}}

From  (\ref{fred with resolvent}), (\ref{defres}), (\ref{F with f}), (\ref{def model RH}) and (\ref{Y asym at rk}), we have
\begin{equation}\label{ie2}
\frac{d}{d t} \ln F=\frac{1}{t} \sum_{k=0}^n \Tilde{c}_k r_k\left(Y_1^{(k)}\right)_{21}, \end{equation}
where  $r_m=0$, $Y_1^{(k)}$ is the coefficient appears in the expansion \eqref{Y asym at rk} for $k=0,\dots,n$.  Here the coefficients $\Tilde{c}_k=c_ke^{\operatorname{sgn}(r_k)(\alpha+1) \pi i}$ and $c_k$ is defined in (\ref{defck}).
By   \eqref{Y asym at rk} and  \eqref{lax pair}, we have
\begin{equation}\label{rh with cpv 1}
    \left(\frac{d}{d t} Y_1^{(k)}\right)_{21}=2 i\left(Y_0^{(k)}\right)_{11}\left(Y_0^{(k)}\right)_{21}=\frac{2 i u_k v_k}{\Tilde{c}_k}.
\end{equation}
Therefore, we have
 \begin{equation}\label{ddF}
\frac{d}{d t} \left(t\frac{d}{d t} \ln F\right)=2 i \sum_{k = 1}^n r_ku_k v_k .\end{equation}
On the other side, we get from \eqref{eq:hamilton},  \eqref{defhamilton} \eqref{hamiltonform}  that
\begin{equation}\label{dH}
\frac{d}{d t}\left(tH(t)\right)=2 i \sum_{k = 1}^n r_ku_k v_k,\end{equation}
with $r_m=0$. From \eqref{eq:Hasyzero}, \eqref{ddF} and \eqref{dH}, we have
\begin{equation}
    \frac{d}{d t} \ln F=H(t).
\end{equation}
 By  \eqref{eq:Hasyzero},  the Hamiltonian  function $H$ is integrable near zero.
By using the fact that  $F(0)=1$, we obtain  the integral expression (\ref{eq:ie}) by taking integration on both sides of the above equation. This completes the proof of the theorem. \hfill$\square$

\subsection{Proof of Theorem \ref{thm large asymp}}
From the asymptotic behavior of the Hamiltonian function  \eqref{hamiltonsym0}, we obtain the asymptotics of the integral except the constant term
as $t\to+\infty$
\begin{equation}\label{eq:6.8}
    \int_0^t H(s) d s=\left(\sum_{k=0}^n 2 i b_k r_k\right) t-\left(\sum_{k=0}^n b_k^2+2 \beta b_m\right) \ln t+C_1+\mathcal{O}\left(\frac{1}{t}\right),
\end{equation}
where $C_1 = C_1(\alpha, \beta, \Vec{\gamma})$ is some constant depending on the parameters.

To determine the constant $C_1$, we integrate on both sides of (\ref{dift})
\begin{equation}\label{eq:6.9}
    \int_0^t H(s) d s=\int_0^t \sum_{k \neq m} u_k \frac{d v_k}{d s} -H(s)d s+\left.(s H(s)+\alpha \ln d(s)-\beta \ln y(s)-2(\alpha^2-\beta^2)\ln s)\right|_0 ^t.
\end{equation}
Substituting the asymptotics \eqref{eq:Hasyzero}, (\ref{y0}), (\ref{d0}), (\ref{Hasyminf}), (\ref{y inf}) and (\ref{d inf}) into the second term on the right-hand side of the above equation, we have
 \begin{equation}\label{asym rhs second part }
    \begin{aligned}
&\left.\left(s H+\alpha \ln d-\beta \ln y-2\left(\alpha^2-\beta^2\right) \ln s\right)\right|_0 ^t\\
&= \sum_{k=0}^n 2 i b_k r_k t-\sum_{k=0}^n b_k^2-2 \beta b_m+\alpha \ln \frac{\Gamma\left(1+\alpha-\beta-b_m\right) \Gamma\left(1+\alpha+\beta+b_m\right)}{\Gamma(1+\alpha-\beta) \Gamma(1+\alpha+\beta)} \\
&~~ -\beta \ln \frac{\Gamma\left(1+\alpha-\beta-b_m\right) \Gamma(1+\alpha+\beta)}{\Gamma\left(1+\alpha+\beta+b_m\right) \Gamma(1+\alpha+\beta)}+\beta \sum_{j \neq m} 2 b_j \ln |r_j|-2 \beta b_m \ln (2 t)\\
&~~-\frac{\alpha}{2} \ln \left[\left(1-\gamma_{m-1}\right)\left(1-\gamma_m\right)\right]+\mathcal{O}\left(\frac{1}{t}\right).
\end{aligned}
\end{equation}
Comparing \eqref{eq:6.8} with \eqref{eq:6.9} and \eqref{asym rhs second part }, we have
 \begin{equation}\label{asym rhs first part 2}
\int_0^t \sum_{k \neq m} u_k \frac{d v_k}{d s}-H d s=-\sum_{k=0}^n b_k^2 \ln t +C_2+\mathcal{O}\left(\frac{1}{t}\right),
\end{equation}
with\begin{equation}\label{C1 with C2}
    \begin{aligned}
C_1= & C_2+\alpha \ln \frac{\Gamma\left(1+\alpha-\beta-b_m\right) \Gamma\left(1+\alpha+\beta+b_m\right)}{\Gamma(1+\alpha-\beta) \Gamma(1+\alpha+\beta)}-\beta \ln \frac{\Gamma\left(1+\alpha-\beta-b_m\right) \Gamma(1+\alpha +\beta)}{\Gamma\left(1+\alpha+\beta+b_m\right) \Gamma(1+\alpha-\beta)} \\
& +\beta \sum_{k \neq m} 2 b_k \ln |r_k|-2 \beta b_m \ln (2)-\frac{\alpha}{2} \ln \left[\left(1-\gamma_{m-1}\right)\left(1-\gamma_m\right)\right]-\Sigma b^2_k-2\beta b_m.
\end{aligned}
\end{equation}
Therefore, the remaining  task is to derive the constant $C_2$. It is  more convenient to evaluate the integral constant $C_2$ since the derivatives of the integrand in \eqref{asym rhs first part 2} with respect to the parameters $\Vec{\gamma}$ is totally differential in $t$ as shown in \eqref{difgamma}.

Actually,  we obtain from  \eqref{difgamma}  that
 \begin{equation}\label{eq:db}
  \frac{d}{d b_j} C_2=\lim_{t\to+\infty} \left(\sum_{k\neq m} u_k(t) \frac{d v_k(t)}{d b_j}+2 b_j \ln t -2b_m \ln t\right)-\lim_{t\to 0^+} \left(\sum_{k\neq m} u_k(t) \frac{d v_k(t)}{d b_j}+2 b_j \ln t - 2b_m \ln t\right).
 \end{equation}
Substituting  into the above formula the asymptotics
 (\ref{CPVasym0}), (\ref{d0}), (\ref{CPVasyminf}) and (\ref{d inf}), we have
 \begin{equation}\label{eq:dC2}
\begin{aligned}\frac{d}{d b_j}C_2
=& b_j\left(\partial_{b_j} \ln \frac{\Gamma\left(1+\alpha-\beta-b_m\right)}{\Gamma\left(1+\alpha+\beta+b_m\right)}+\partial_{b_j} \ln \frac{\Gamma\left(1+b_j\right)}{\Gamma\left(1-b_j\right)}-4 \ln \left|2 r_j\right|\right) \\
&+\sum_{k \neq m, j} b_k\left(\partial_{b_j} \ln \frac{\Gamma\left(1+\alpha-\beta-b_m\right)}{\Gamma\left(1+\alpha+\beta+b_m\right)}-2 \ln \left|\frac{2 r_j r_k}{r_j-r_k}\right|\right),\\
\end{aligned}
\end{equation}
with $j\neq m$ and $b_m=-\sum_{j\neq m}b_j $.

When the parameters $\Vec{b}=\Vec{0} $, we have $\ln F(s)=0$ and the constant $C_1(0,\dots,0;\alpha,\beta)=C_2(0,\dots,0;\alpha,\beta)=0$. Using this initial value and the differential identity \eqref{eq:dC2}, we evaluate the constant $C_2$ by taking integration  with respect to the variables $b_j$, $j\neq m$ successively.  For the purpose,
we recall the  integral representation for the Barnes G-function (see \cite{NIST:DLMF})
\begin{equation}\label{def barnesG}
    \ln G(z+1)=\frac{1}{2} z \ln (2 \pi)-\frac{1}{2} z(z+1)+z \ln \Gamma(z+1)-\int_0^z \ln \Gamma(t+1) \mathrm{d} t.
\end{equation}
Using \eqref{def barnesG},  we  obtain by integrating  by parts that
 \begin{equation}\label{integral 2}
 \begin{aligned}
& \int_0^{b_j} x \partial_x \ln \frac{\Gamma(1+x)}{\Gamma(1-x)} d x =  b_j^2+\ln \left[G\left(1+b_j\right) G\left(1-b_j\right)\right].
\end{aligned}
\end{equation}
In the $j$-th step, taking $b_{j+1}= \dots =b_n=0$, we have $b_m=-s_j$ with $s_j=\sum_{l=0,l\neq m}^{j}b_l$. Then, we have from  integration by parts  and \eqref{def barnesG} that
\begin{equation}\label{integral 1}
\begin{aligned}
  & \int_0^{b_j}(s_j+x) \partial_x \ln \frac{\Gamma\left(1+\alpha-\beta+s_{j-1}+x\right)}{\Gamma\left(1+\alpha+\beta-s_{j-1}-x\right)} d x \\
  &=\left[x \ln \frac{\Gamma(1+\alpha-\beta+x)}{\Gamma(1+\alpha+\beta-x)} \right]_{s_{j-1}}^{s_j}-\int_{s_{j-1}}^{s_j}  \partial_x \ln \frac{\Gamma(1+\alpha-\beta+x)}{\Gamma(1+\alpha+\beta-x)} d x\\
&= b_j^2+2 b_js_{j-1}-(\alpha+\beta)\ln \frac{\Gamma(1+\alpha+\beta-s_{j}) }{\Gamma\left(1+\alpha+\beta-s_{j-1}\right) }
-(\alpha-\beta)\ln \frac{\Gamma(1+\alpha-\beta+s_{j}) }{\Gamma\left(1+\alpha-\beta+s_{j-1}\right) }\\
&~~+\ln \frac{G\left(1+\alpha-\beta+s_j\right) G\left(1+\alpha+\beta-s_j\right)}{G(1+\alpha-\beta+s_{j-1}) G(1+\alpha+\beta-s_{j-1})} .
\end{aligned}
\end{equation}

By \eqref{eq:dC2}, (\ref{integral 1}), (\ref{integral 2}) and the initial value $C_2(0,\dots,0;\alpha,\beta)=0$, we derive
\begin{equation}\label{C(b,alpha,beta)}
    \begin{aligned}
& C_2(\vec{b} ; \alpha, \beta)=\sum_{k\neq m} b_k^2+2 \beta b_m+\alpha \ln \frac{\Gamma(1+\alpha+\beta) \Gamma(1+\alpha-\beta)}{\Gamma\left(1+\alpha+\beta+b_m\right) \Gamma\left(1+\alpha-\beta-b_m\right)} \\
&+ \beta \ln \frac{\Gamma(1+\alpha+\beta) \Gamma\left(1+\alpha-\beta-b_m\right)}{\Gamma(1+\alpha-\beta) \Gamma\left(1+\alpha+\beta+b_m\right)}+\sum_{j \neq m} \ln \left[G\left(1+b_j\right) G\left(1-b_j\right)\right]\\
& +\ln \frac{G\left(1+\alpha+\beta+b_m\right) G\left(1+\alpha-\beta-b_m\right)}{G(1+\alpha+\beta) G(1+\alpha-\beta)} -\sum_{\substack{0 \leq j \neq k \leq n \\
j, k \neq m}} 2 b_j b_k \ln \left|\frac{2 r_j r_k}{r_j-r_k}\right|\\
& -\sum_{j \neq m} 2 b_j^2 \ln \left|2 r_j\right|.
\end{aligned}
\end{equation}
This, together with \eqref{C1 with C2}, implies
\begin{equation}\label{C1}
    \begin{aligned}
C_1= & -\sum_{k \neq m} 2 b_k^2 \ln \left|2 r_k\right|+\sum_{k \neq m} 2 b_k \beta \ln \left|2 r_k\right|-\sum_{\substack{0 \leq j<k \leqslant n \\
j, k \neq m}} 2 b_j b_k \ln \left|\frac{2 r_j r_k}{r_j-r_k}\right| \\
& +\sum_{k \neq m} \ln \left[G\left(1+b_j\right) G\left(1-b_j\right)\right]+\ln \frac{G\left(1+\alpha+\beta+b_m\right) G\left(1+\alpha-\beta-b_m\right)}{G(1+\alpha+\beta) G(1+\alpha-\beta)} \\
& -\frac{\alpha}{2} \ln \left[\left(1-\gamma_{m-1}\right)\left(1-\gamma_m\right)\right].
\end{aligned}
\end{equation}
Substitute (\ref{C1}) into (\ref{eq:6.8}), we obtain \eqref{large gap asym}.

{Finally, from \eqref{eq:ie}, \eqref{defH}, \eqref{difgamma}, \eqref{derivativeR1} and \eqref{Y with RS at inf}, we see that \eqref{large gap asym} can be be differentiated with respect to $\gamma_0, \dots, \gamma_{n-1}$ at the cost of increasing the error term by a factor $\ln t$ for each differentiation.} This completes the proof of the theorem.

\begin{appendices}

\section{The confluent hypergeometric  parametrix}\label{PCP}
\subsection*{RH problem for $\Phi^{(\mathrm{CHF})}$}
 \begin{description}
  \item[\rm (1)]   $\Phi^{(\mathrm{CHF})}(z;\alpha,\beta)$ ($\Phi^{(\mathrm{CHF})}(z)$, for short) is analytic in
  $\mathbb{C}\setminus \{\cup^8_{j=1}\Sigma_j\}$, where the contours $\Sigma_j$, $j=1,\ldots,8,$ are shown in Fig. \ref{fig:JCHF}.
\begin{figure}[h]
    \centering
    \includegraphics[width=0.5\textwidth]{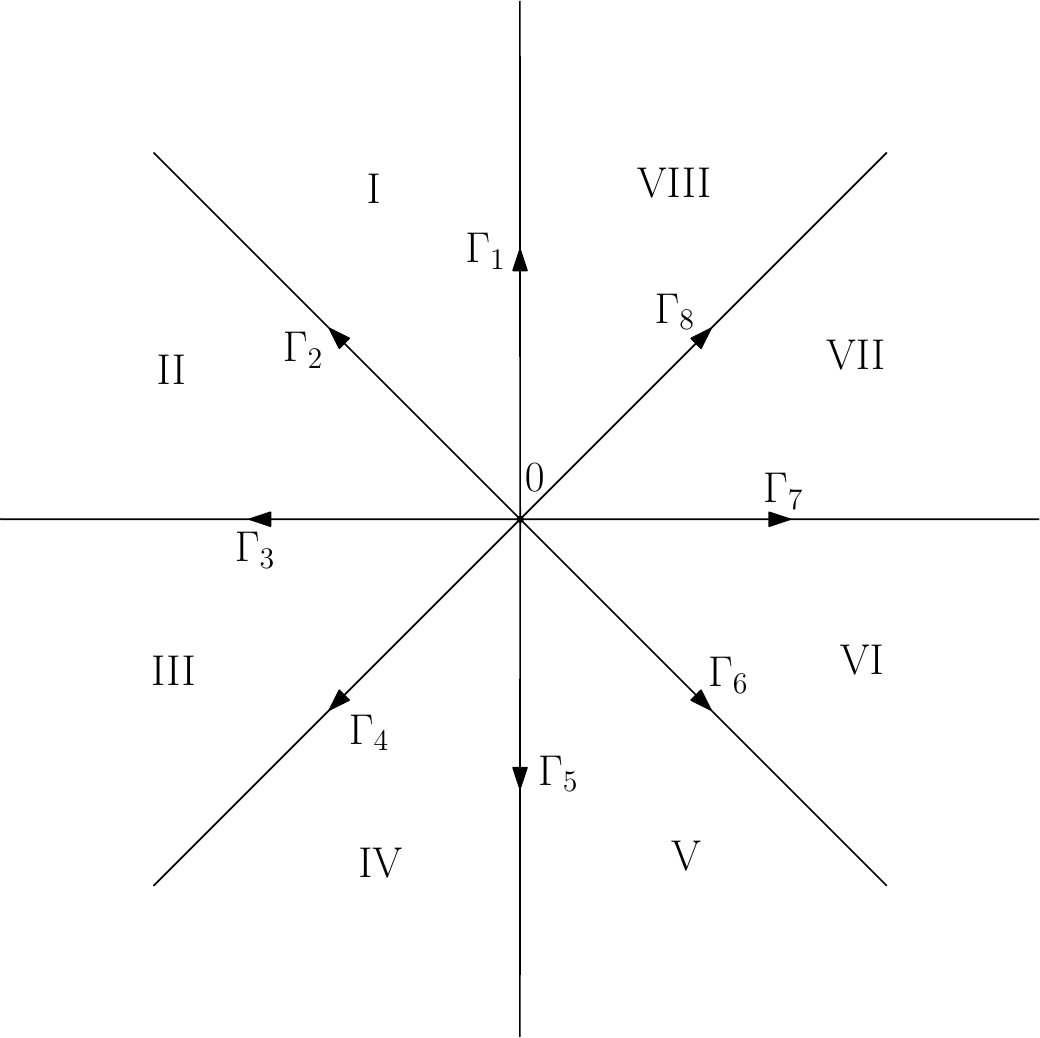}
    \caption{\small{The jump contours and regions of the RH problem for $\Phi^{(\mathrm{CHF})}(z)$.}}
    \label{fig:JCHF}
\end{figure}
  \item[\rm (2)] $\Phi^{(\mathrm{CHF})}(z)$ satisfies the  jump condition
  \begin{equation}\label{jumps-phi-c}
   \left(\Phi^{(\mathrm{CHF})}\right)_+(z)=\left(\Phi^{(\mathrm{CHF})}\right)_{-}(z) J_j(z), \quad z \in \Sigma_i,\quad j=1,\ldots,8,
  \end{equation}
  where
  \begin{equation*}
   J_1(z) = \begin{bmatrix}
    0 &   e^{-\pi i\beta} \\
    -  e^{\pi i\beta} &  0
    \end{bmatrix}, \quad J_2(z) = \begin{bmatrix}
    1 & 0 \\
    e^{ \pi i(\beta-2\alpha)} & 1
    \end{bmatrix},
                                                           \quad J_3(z) = J_7(z) = e^{\pi i\alpha\sigma_3},
  \end{equation*}
  \begin{equation*}
  J_4(z) = \begin{bmatrix}
    1 & 0 \\
    e^{ -\pi i(\beta-2\alpha)} & 1
    \end{bmatrix}, \quad
    J_5(z) = \begin{bmatrix}
    0 &   e^{\pi i\beta} \\
     -  e^{-\pi i\beta} &  0
     \end{bmatrix}, \quad
     J_6(z) = \begin{bmatrix}
     1 & 0 \\
     e^{- \pi i(\beta+2\alpha)} & 1
     \end{bmatrix},
  \end{equation*}
  and $$J_8(z) = \begin{bmatrix}
   1 & 0 \\
   e^{\pi i(\beta+2\alpha)} & 1
   \end{bmatrix}.
   $$

  \item[\rm (3)] As $z\to\infty$, $\Phi^{(\mathrm{CHF})}(z)$ satisfies the asymptotic behavior
 \begin{equation}\label{asym of phiCH}
    \begin{aligned}
        \Phi^{(\mathrm{CHF})}(z)=&\left(I+\frac{\alpha^2-\beta^2}{z}\left[\begin{array}{cc}
1 & \frac{\Gamma(\alpha-\beta)}{\Gamma(1+\alpha+\beta)} \\
-\frac{\Gamma(\alpha+\beta)}{\Gamma(1+\alpha-\beta)} & -1
\end{array}\right]+\ldots\right)z^{-\beta \sigma_3} e^{-\frac{1}{2} z \sigma_3}\\ & \begin{cases}e^{-\frac{1}{2} \alpha \pi i \sigma_3} e^{\beta \pi i \sigma_3}, &\frac{\pi}{2}<\arg z<\pi, \\
e^{\frac{1}{2} \alpha \pi i \sigma_3} e^{\beta \pi i \sigma_3}, & \pi <\arg z<\frac{3\pi}{2}, \\
{\left[\begin{array}{cc}
0 & -1 \\
1 & 0
\end{array}\right] e^{-\frac{1}{2} \alpha \pi i \sigma_3}}, &-\frac{\pi}{2}<\arg z< 0, \\[.4cm]
{\left[\begin{array}{cc}
0 & -1 \\
1 & 0
\end{array}\right] e^{\frac{1}{2} \alpha \pi i \sigma_3}}, & 0<\arg z<\frac{\pi}{2}.\end{cases}
    \end{aligned}
    \end{equation}

\item[\rm (4)] As $z \to 0$, $\Phi^{(\mathrm{CHF})}(z)$ has the following asymptotic behaviors
$$\Phi^{(\mathrm{CHF})}(z)=\begin{bmatrix}
                         O(|z|^{\alpha}) & O(|z|^{-|\alpha|}) \\
                         O(|z|^{\alpha}) & O(|z|^{-|\alpha|}) \end{bmatrix}, \quad \alpha> 0,$$
                         $$\Phi^{(\mathrm{CHF})}(z)=\begin{bmatrix}
                         O(1) & O(\ln |z|) \\
                         O(1) & O(\ln |z|)
\end{bmatrix}, \quad \alpha=0,$$
and
$$\Phi^{(\mathrm{CHF})}(z)=\begin{bmatrix}
                         O(|z|^{\alpha}) & O(|z|^{|\alpha|}) \\
                         O(|z|^{\alpha}) & O(|z|^{|\alpha|}) \end{bmatrix}, \quad \alpha< 0.$$
\end{description}

For the application in the present paper, we introduce a matrix-valued function $\Phi(z)$ defined via   $\Phi^{(\mathrm{CHF})}(z)$ by the
transformation:
\begin{equation}
\Phi(z)=e^{-\frac{1}{2}\beta\pi i\sigma_3}\Phi^{(\mathrm{CHF})}(e^{\frac{1}{2}\pi i}z ; \alpha, \beta)\left\{\begin{array}{ll}
e^{\frac{\alpha}{2} \pi i \sigma_3},&  \arg z \in\left(0, \frac{\pi}{2}\right), \\
e^{-\frac{\alpha}{2} \pi i \sigma_3},&\arg z\in \left(\frac{\pi}{2},\pi\right), \\
e^{\frac{\alpha+2 \beta}{2} \pi i \sigma_3}{\left[\begin{array}{cc}
0 & 1 \\
-1 & 0
\end{array}\right] },&\arg z\in \left(-\pi, -\frac{\pi}{2}\right),\\
 e^{-\frac{\alpha+2 \beta}{2} \pi i \sigma_3}{\left[\begin{array}{cc}
0 & 1 \\
-1 & 0
\end{array}\right]},&\arg z\in\left(-\frac{\pi}{2},0\right).
\end{array}\right.
\end{equation}

Then $\Phi(z)$ satisfies the following RH problem.
\begin{itemize}
    \item[\rm (1)]
  $\Phi(z)$ is   analytic in $\mathbb{C}\backslash \cup_{j=1}^5 \Sigma_{\Phi, j}$, where the oriented contours
    $$
    \begin{aligned}
& \Sigma_{\Phi, 1}=e^{\frac{\pi i}{4}} \mathbb{R}^{+}, \quad \Sigma_{\Phi, 2}=e^{\frac{3 \pi i}{4}} \mathbb{R}^{+}, \quad \Sigma_{\Phi, 3}=e^{-\frac{3 \pi i}{4}} \mathbb{R}^{+}, \\
& \Sigma_{\Phi, 4}=e^{-\frac{\pi i}{2}} \mathbb{R}^{+}, \quad \Sigma_{\Phi, 5}=e^{-\frac{\pi i}{4}} \mathbb{R}^{+},
\end{aligned}
    $$ are shown in Figure \ref{fig:CHF}.
\begin{figure}[h]
    \centering
    \includegraphics[width=0.5\textwidth]{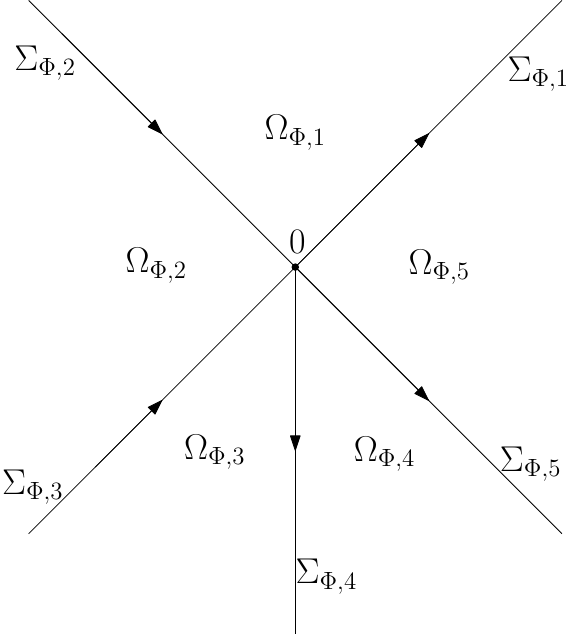}
    \caption{\small{The jump contours and regions of the RH problem for $\Phi(z)$.}}
    \label{fig:CHF}
\end{figure}

    \item[\rm (2)]  $\Phi(z)$ satisfies the  jump condition $\Phi_+(z)=\Phi_-(z)J_{\Phi}(z)$ on $\cup_{j=1}^5 \Sigma_{\Phi, j}$, where
    \begin{equation}\label{jump4phi}
     J_{\Phi}(z)=\begin{cases}\left[\begin{array}{cc}
1 & 0 \\
e^{-\pi i(\alpha-\beta)} & 1
\end{array}\right], & z \in \Sigma_{\Phi, 1}, \\[.4cm]
\left[\begin{array}{cc}
1 & 0 \\
e^{\pi i(\alpha-\beta)} & 1
\end{array}\right], & z \in \Sigma_{\Phi, 2}, \\[.4cm]
\left[\begin{array}{cc}
1 & -e^{-\pi i(\alpha-\beta)} \\
0 & 1
\end{array}\right], & z \in \Sigma_{\Phi, 3}, \\[.4cm]
e^{2 \pi i \beta \sigma_3}, & z \in \Sigma_{\Phi, 4}, \\
\left[\begin{array}{cc}
1 & -e^{\pi i(\alpha-\beta)} \\
0 & 1
\end{array}\right], & z \in \Sigma_{\Phi, 5} .\end{cases}
    \end{equation}

\item [\rm (3)] As $z\rightarrow\infty$,  we have
\begin{equation}\label{phi aym inf}
  \Phi(z)=\left(I+\frac{\Phi_1}{z}+\mathcal{O}\left(\frac{1}{z^2}\right)\right) z^{-\beta \sigma_3} e^{-\frac{i}{2} z \sigma_3},
\end{equation}
where $\Phi_1=\left[\begin{array}{cc}
-\left(\alpha^2-\beta^2\right) i & -e^{-\pi i \beta} \frac{\Gamma(1+\alpha-\beta)}{i \Gamma(\alpha+\beta)} \\
e^{\pi i \beta} \frac{\Gamma(1+\alpha+\beta)}{i \Gamma(\alpha-\beta)} & \left(\alpha^2-\beta^2\right) i
\end{array}\right]$, and arg $z\in(-\frac{\pi}{2},\frac{3\pi}{2}) $.

 \item [\rm (4)]
If  $\alpha>-\frac{1}{2}$ and $ 2\alpha \not\in \mathbb{N}$, we have as $z\to 0$
  \begin{equation}\label{eq: Phi0}
  \Phi(z)= \Phi^{(0)}(z)z^{\alpha \sigma_3} C_j,~~~z\to 0,~z\in \Omega_{\Phi, j}, ~j=1,2,3,4,5, \end{equation}
where $\Phi^{(0)}(z)$ is analytic near the origin, the regions $\Omega_{\Phi, j}$ are  shown in Figure \ref{fig:CHF}, and the branch for $z^{\alpha \sigma_3}$ is chosen such that $\arg z\in (-\pi/2, 3\pi/2)$.
The constant matrix
  $$C_1= \begin{bmatrix}
                                 1 &\frac{\sin(\pi(\alpha+\beta))}{\sin (2\pi \alpha)} \\
                                 0 &1
                                 \end{bmatrix},$$
and the constant matrices $C_j$, $j=2,3,4,5$ are determined by $C_1$ and the jump condition.
If $ 2\alpha \in \mathbb{N}$, we have as $z\to 0$
  \begin{equation}\label{eq: Phi02}
  \Phi(z)= \widehat{\Phi}^{(0)}(z)z^{\alpha \sigma_3} \begin{bmatrix}
                                 1 &(-1)^{2\alpha}\frac{\sin(\pi(\alpha+\beta))}{\pi} \ln  z \\
                                 0 &1
                                 \end{bmatrix} \widehat{C}_j, ~z\in \Omega_{\Phi, j}, ~j=1,2,3,4,5, \end{equation}
  where $\widehat{\Phi}^{(0)}(z)$ is analytic near the origin, the regions $\Omega_{\Phi, j}$ are shown in Figure \ref{fig:CHF}, and the branches for $z^{\alpha \sigma_3}$ and $\ln z$ are chosen such that $\arg z\in (-\pi/2, 3\pi/2)$,
  the constant matrix $\hat{C}_1$ is the identity matrix and the other constant matrices are determined by $\hat{C}_1$ and the jump condition.

\end{itemize}

The solution to the  RH problem can be constructed explicitly by using the confluent hypergeometric function
\begin{equation}\label{solution4phi}
   \begin{aligned}
       \Phi(z)
    = &  e^{-\frac{i z}{2}}\left[\begin{array}{cc}
e^{-\frac{\pi i(\alpha+\beta)}{2}} \frac{\Gamma(1+\alpha-\beta)}{\Gamma(1+2 \alpha)} \phi(\alpha+\beta, 1+2 \alpha, i z) & -e^{\frac{\pi i(\alpha-\beta)}{2}} \frac{\Gamma(2 \alpha)}{\Gamma(\alpha+\beta)} \phi(-\alpha+\beta, 1-2 \alpha, i z) \\
e^{-\frac{\pi i(\alpha-\beta)}{2}} \frac{\Gamma(1+\alpha+\beta)}{\Gamma(1+2 \alpha)} \phi(1+\alpha+\beta, 1+2 \alpha, i z) & e^{\frac{\pi i(\alpha+\beta)}{2}} \frac{\Gamma(2 \alpha)}{\Gamma(\alpha-\beta)} \phi(1-\alpha+\beta, 1-2 \alpha, i z)
\end{array}\right]\\
&z^{\alpha \sigma_3} \left[\begin{array}{cc}
1 & \frac{\sin (\pi(\alpha+\beta))}{\sin (2 \pi \alpha)} \\
0 & 1
\end{array}\right]  \end{aligned}
\end{equation}
for  $z\in \Omega_{\Phi,1}$,  $\alpha>-1/2$ and $2 \alpha \notin \mathbb{N}$; see \cite{cik, ik}. Here, $\phi(a,b,z)$ is the confluent hypergeometric function defined in \eqref{eq:chf}. The expression of   $\Phi(z) $ in the other regions is then determined by using \eqref{solution4phi} and the jump condition \eqref{jump4phi}. The case $2\alpha\in \mathbb{N}$ can be constructed in a similar way and  particularly the first column of  $\Phi(z)$ is given by the same formula in \eqref{solution4phi}.

\end{appendices}

\section*{Acknowledgements}
The work of Shuai-Xia Xu was supported in part by the National Natural Science Foundation of China under grant numbers 11971492 and 12371257, and by Guangdong Basic and Applied Basic Research Foundation (Grant No. 2022B1515020063). Yu-Qiu Zhao was supported in part by the National Natural Science Foundation of China under grant numbers  11971489 and
12371077.

\end{document}